\newtheorem{theorem}            {Theorem}[section]
\newtheorem{corollary}          [theorem]{Corollary}
\newtheorem{sidexample}            [theorem]{Example}
\newtheorem{lemma}              [theorem]{Lemma}
\newtheorem{sideremark}         [theorem]{Remark}
\newtheorem{sidenote}           [theorem]{Note}
\newcommand{\qed}               {\hfill $\Box$}
\title{Quantum Filtering for Systems Driven by Fields in Single Photon States and Superposition of Coherent States
using Non-Markovian Embeddings\thanks{This work was
supported by the Australian Research Council and the UK Engineering and Physical Sciences Research Council grant EP/G039275/1.}}
\author{John E.~Gough\thanks{Institute for Mathematics and Physics, Aberystwyth University, SY23 3BZ, Wales, United Kingdom. Email: jug@aber.ac.uk }
\and Matthew R.~James\thanks{ARC Centre for Quantum Computation and Communication Technology, Research School of Engineering, Australian National University, Canberra, ACT 0200, Australia. Email: Matthew.James@anu.edu.au}\and
Hendra I.~Nurdin\thanks{Research School of Engineering, Australian National University, Canberra, ACT 0200, Australia. Email: Hendra.Nurdin@anu.edu.au}}
\begin{document}
\maketitle

\begin{abstract}
The purpose of this paper is to determine quantum master and filter equations for systems coupled to fields in certain non-classical continuous-mode states. Specifically, we consider two types of field states (i) single photon states, and (ii) superpositions of coherent states.
The system and field are described using a quantum stochastic unitary model. Master equations are derived from this model and are given in terms of  systems  of coupled equations.
The output field carries information about the system, and is continuously monitored.
The quantum filters  are  determined with the aid of an embedding of the system into a larger non-Markovian system, and are  given by a system of coupled stochastic differential equations.
\end{abstract}

Keywords:
quantum filtering, continuous-mode single photon states, continuous-mode superpositions of coherent states, quantum stochastic processes

\pagestyle{myheadings}
\thispagestyle{plain}
\markboth{Gough, James, Nurdin}{Single Photon Quantum Filter}

\section{Introduction}

In recent years single photon states of light and superpositions of coherent states
have become increasingly
important due to applications in quantum technology, in particular, quantum
computing and quantum information systems, \cite{MHNWGBRH97}, \cite{GM08},
\cite{KLM01}, \cite{GRTZ02}, \cite{VWSRVSKW06}. For instance, the light may
interact with a system, say an atom, quantum dot, or cavity, and this system may be used
as a quantum memory, \cite{MHNWGBRH97}, or  to control
the pulse shape of the single photon state \cite{GM08}. When light interacts
with a quantum system, information about the  system
is contained in the scattered light. This information may be useful for
monitoring the behavior of the  system, or for controlling it. The
topic of this paper concerns the extraction of information from the
scattered light when the incoming light is placed in a single photon state $\vert \Psi \rangle = \vert 1_\xi \rangle$, or a superposition of coherent states $\vert \Psi \rangle = \sum_j \alpha_j \vert f_j \rangle$, as illustrated in Figure \ref{fig:filter-one-1}.

\begin{figure}[h]
\begin{center}
\includegraphics{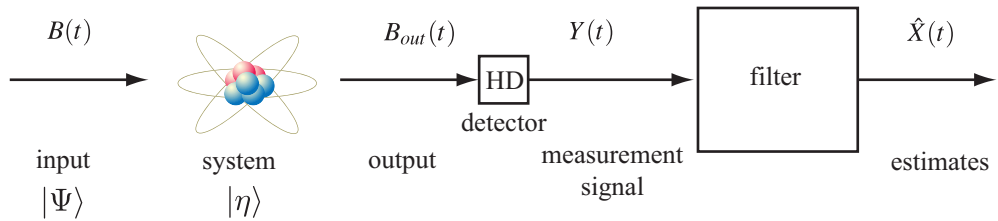} %\sidecaption[t]
\end{center}
\caption{A system initialized  in a state $\vert \eta \rangle$ coupled to a field in a state $\vert \Psi  \rangle$ (single photon or superposition of coherent states). The output
field is continuously monitored by homodyne detection (assumed perfect) to produce a classical
measurement signal $Y(t)$. The output $Y(t)$ is filtered 
 to produce estimates $\hat X(t)=\pi_t(X)$ of system operators $X$ at time $t$.
}
\label{fig:filter-one-1}
\end{figure}

The problem of extracting information from continuous measurement of the
scattered light is a problem of \emph{quantum filtering}, \cite{BB91}, \cite
{VPB92}, \cite{VPB92a}, \cite{HC93}, \cite{WM93}, \cite{AB03},  \cite{BHJ07}, \cite{WM10}.
The current state of the art for quantum filtering considers incoming light
in a vacuum or other Gaussian state, with quadrature or counting measurements.
Both single photon states of light, and superpositions of coherent states of light,  are highly non-classical, and are
fundamentally different from Gaussian  states.
 In view of the
increasing importance of these non-Gaussian
states of light, the purpose of this
paper is to solve a quantum filtering problem for systems driven by fields in single
photon states  and superpositions of  coherent states.

 In the case of single photon fields, the master equation describing unconditional dynamics was shown to be a system of coupled equations in \cite{GEPZ98}, a feature of  non-Markovian character.  Markovian embeddings were used in \cite{HPB04} to
derive quantum trajectory equations (quantum filtering equations) for a class of non-Markovian master equations. In recent work, the authors have shown how to construct ancilla systems to combine with the system of interest to form a Markovian extended system driven by vacuum from which quantum filtering results may be obtained for single photon states and superpositions of coherent states from the standard filter for the extended system, \cite{GJN11a}, \cite{GJNC11}. However, depending on the complexity of the non-classical state, it may possibly be difficult to determine suitable ancilla systems, and indeed the superposition case was not straightforward.   In this paper we present an alternative approach to the embedding that also allows for the derivation of the quantum filter. 
The extended system forms a non-Markovian system, with the ancilla, system and field initialized in a superposition state. While standard filtering results do not apply, the quantum stochastic methods can  nevertheless be applied to determine the quantum filters. In this way, we expand the range of methods that may be applied to derive quantum filters for non-classical states.

The paper is organized as follows. In section \ref{sec:problem} the idealized 
filtering problem to be solved in this paper is formulated. 
The continuous mode single photon states are defined in Section \ref{sec:photon-state}.
 The master equation for the single photon field state is
derived in Section \ref{sec:master} using the model presented in Section \ref
{sec:problem}. This leads naturally to Section \ref{sec:the-matrix}, where
the system is embedded in a larger  model, inspired by the approach
used in \cite{HPB04} but differing in the details. This extended system provides a compact and
transparent description of the problem, and may readily be generalized to $n$%
-photon states, and indeed, multiple channels of $n$-photon states. The
quantum filter for the extended system is presented in Section \ref{sec:matrix-filter},  with a  derivation extending the reference
method appearing in Section \ref{sec:extended-filter}. 
The filtering results for the extended system are used to find the filtering equations for the
original problem involving a single photon field in Section \ref{sec:single-photon-filter}.
The superposition of coherent field states is defined in Section \ref{sec:cat-super}, and a suitable embedded system for this case is described in Section \ref{sec:cat-embed}. The corresponding master and filtering equations are presented in Sections \ref{sec:cat-master} and \ref{sec:cat-super-filter}, respectively. 
Some concluding remarks are made in Section \ref{sec:conclusion}.  

In this paper we are not concerned with technical issues concerning domains of unbounded operators and related matters, and indeed, we assume that the system operators are bounded, and that all quantum stochastic integrals are well-defined in the sense of Hudson-Parthasarathy, \cite{HP84}.

\emph{Notation:} We use the standard Dirac notation $\vert \psi \rangle$ to
denote state vectors (vectors in a Hilbert space) \cite{EM98}, \cite{AFP09}.
The superscript $^\ast$ indicates Hilbert space adjoint or complex
conjugate. The inner product of state vectors $\vert \psi_1 \rangle $ and $\vert \psi_2 \rangle$  is denoted $\langle \psi_1 \vert
\psi_2 \rangle$. The expected value of an operator $X$ when the system is in state $\vert \psi \rangle$ is denoted $\mathbb{E}_\psi[ X ] = \langle \psi \vert X \vert \psi \rangle$.
For operators $A$ and $B$ we write
$
\langle A, B \rangle = \mathrm{tr}[ A^\ast B ].
$

\section{Problem Formulation}
\label{sec:problem}

We consider a quantum system $S$ coupled to a quantum field $B$, as shown in
Figure \ref{fig:filter-one-1}. The field $B$ has two components, the input
field $B_{in}$ and the output field, $B_{out}$. In this paper we consider two non-classical cases for the state $\vert \Psi \rangle$ of the input field (i) 
a single photon state  $\vert \Psi \rangle = \vert 1_\xi \rangle$, where $\xi$ is a complex valued function such that $\int_0^\infty \vert
\xi(s) \vert^2 ds =1$ (representing the wave packet shape), or  (ii) a superposition of coherent states $\vert \Psi \rangle  = \sum_j \alpha_j \vert f_j \rangle$,
where $\vert f_j \rangle$ are coherent states and the complex numbers $\alpha_j$  ($j=1,\ldots,n$) are normalized weights. 

As illustrated in Figure \ref{fig:filter-one-1}, the field 
interacts with the quantum system $S$, and
the results of this interaction provide information about the system that
may be obtained through continuous measurement of an observable $Y(t)$ of
the output field $B_{out}(t)$. The filtering problem of interest in this
paper is to determine the conditional state from which estimates $\hat X(t)$
of system operators $X$ may be determined at time $t$ based on knowledge of
the observables $\{ Y(s)$, $0 \leq s \leq t \}$.

In what follows the system $S$ is assumed to be defined on a Hilbert space $%
\mathfrak{H}_S$, with an initial state denoted $\vert  \eta \rangle   \in %
\mathfrak{H}_S$. The input field $B_{in}$ is described in terms of
annihilation $B(\xi)$ and creation $B^\ast(\xi)$ operators defined on a Fock
space $\mathfrak{F}$, \cite[Chapter II]{KRP92}, \cite[Section 4]{BHJ07}.
Quantum expectation will be denoted by the symbol $\mathbb{E}$, and when we wish to display the underlying state, we employ subscripts;  for example, $\mathbb{E}_{\eta\Psi}$ denotes quantum expectation with respect to the state $\vert \eta \rangle \otimes \vert \Psi \rangle$.

The dynamics of the system will be described using the quantum stochastic
calculus, \cite{HP84}, \cite{GC85}, \cite{KRP92}, \cite{GZ00}, \cite{BHJ07}.
Quantum stochastic integrals are
defined in terms of fundamental field operators $B(t)$, $B^\ast(t)$ and $%
\Lambda(t)$, \cite[Chapter II]{KRP92}, \cite[Section 4]{BHJ07}.\footnote{%
In terms of annihilation and creation white noise operators $b(t), b^\ast(t)$
that satisfy singular commutation relations $[b(s), b^\ast(t)]=\delta(t-s)$,
the fundamental field operators are given by $B(t) = \int_0^t b(s) ds$, $%
B^\ast(t)= \int_0^t b^\ast(s) ds$, and $\Lambda(t)= \int_0^t b^\ast(s) b(s)
ds$. Also, we may write $B(\xi) = \int_0^\infty \xi^\ast(s) dB(s)$.} The
non-zero Ito products for the field operators are
\begin{equation}
dB(t) dB^\ast(t) = dt, \ \ dB(t) d\Lambda(t) = dB(t), \ \ d\Lambda(t)
d\Lambda(t) = d\Lambda(t), \ \ d\Lambda(t) dB^\ast(t)=dB^\ast(t). \label{eq:Ito-sp}
\end{equation}

The dynamics of the composite system is described by a unitary $U(t)$
solving the Schr\"{o}dinger equation, or quantum stochastic differential equation (QSDE),
\begin{equation}
dU(t) = \{ (S-I)d\Lambda(t) + L dB^\ast (t)- L^\ast S dB(t) - (\frac{1}{2}
L^\ast L +iH )dt \} U(t),  \label{eq:unitary}
\end{equation}
with initial condition $U(0)=I$. Here, $H$ is a fixed self-adjoint operator
representing the free Hamiltonian of the system, and $L$ and $S$ are system
operators determining the coupling of the system to the field, with $S$
unitary. In this paper, for simplicity we assume that the parameters $S,L,H$ are bounded operators
on the system Hilbert space $\mathfrak{H}_S$. However, we remark under some suitable additional conditions the results
and equations obtained in this paper should also be extendable to some special classes of QSDEs with unbounded parameters,
exploiting the results in \cite{Fagno90,FW03}.

A system operator $X$ at time $t$ is given in the Heisenberg picture by $%
X(t)=j_{t}( X) =U( t) ^{\ast } ( X\otimes I ) U( t) $ and it follows from the
quantum Ito calculus that
\begin{eqnarray}
dj_{t}( X) &=&j_{t}(S^{\ast }XS-X) d\Lambda ( t) +j_{t}(S^{\ast }[X,L]) dB(t) ^{\ast }  
\notag \\
&& 
+j_{t}([L^{\ast },X]S) dB( t) +j_{t}(\mathcal{L}(X)) dt ,
\label{eq:X-dyn}
\end{eqnarray}
where
\begin{equation}
\mathcal{L}(X)=\frac{1}{2}L^{\ast }[X,L]+\frac{1}{2}[L^{\ast },X]L-i[ X,H] .
\end{equation}
The map $X\mapsto \mathcal{L} (X)$ is known as the \emph{Lindblad generator},
while the quartet of maps $X \mapsto \mathcal{L} (X), S^\ast XS -X, \,
S^\ast [X,L ], \, [L^\ast , X] S$ are known as \emph{Evans-Hudson maps}.

The output field is defined by $B_{out}(t) = U(t)^\ast B(t) U(t)$.\footnote{%
Recall $B(t)=B_{in}(t)$ is the input field.} In this paper we consider the
output field observable $Y(t)$ defined by
\begin{equation}
Y(t) = U(t)^\ast Z(t) U(t) , 
 \label{eq:Y-out}
\end{equation}
where
\begin{equation}
Z(t)= B(t)+ B^\ast(t), 
 \label{eq:Z-def}
\end{equation}
is a quadrature  observable of the input field (the counting case $Z(t)=\Lambda(t)$ is discussed briefly in Section \ref{sec:conclusion}).  Note that both $Z(t)$ and $Y(t)$ are
self-adjoint and self-commutative: $[Z(t), Z(s)]=0$ and $[Y(t), Y(s)]=0$. We
write $\mathscr{Z}_t$ and $\mathscr{Y}_t$ for the subspaces of commuting
operators generated by the observables $Z(s)$, $Y(s)$, $0\leq s \leq t$,
respectively.\footnote{$\mathscr{Z}_t$ and $\mathscr{Y}_t$ are commutative
von Neumann algebras. They are also filtrations, e.g. $\mathscr{Z}_{t_1}
\subset \mathscr{Z}_{t_2}$ whenever $t_1 <t_2$.} They are related by the
unitary rotation $\mathscr{Y}_t = U(t)^\ast \mathscr{Z}_t U(t)$. Physically,
$Y(t)$ may represent the integrated photocurrent arising in an idealized (perfect)  homodyne
photodetection scheme, as in Figure \ref{fig:filter-one-1}. For further information on homodyne detection, we refer the reader to the literature;  for example, \cite{BR04},  \cite{AB03}, 
\cite{WM10}.

The primary goal of this paper is to determine the \emph{quantum filter} for
the quantum conditional expectation (see, e.g. \cite[Definition 3.13]{BHJ07}%
)
\begin{equation}
\hat X(t) = \mathbb{E}_{\eta\Psi}[ X(t) \, \vert \, \mathscr{Y}_t ] .
\label{eq:cond-exp}
\end{equation}
This conditional expectation is well defined, since $X(t)$ commutes with the
subspace $\mathscr{Y}_t $ (non-demolition condition). The conditional
estimate $\hat X(t)$ is affiliated to $\mathscr{Y}_t$ (written in abbreviated fashion as  $\hat X(t) \in \mathscr{Y}_t$)
and is characterized by the requirement that
\begin{equation}
\mathbb{E}_{\eta\Psi} [ \hat X(t) K ] = \mathbb{E}_{\eta\Psi} [ X(t) K ]
\label{eq:c-exp-def}
\end{equation}
for all $K \in \mathscr{Y}_t$.

\section{Single Photon Input Fields}
\label{sec:photon}

\subsection{Single Photon  Fields States}
\label{sec:photon-state}

In this section we consider the  continuous-mode single photon state $\vert \Psi \rangle = \vert 1_\xi \rangle$ 
 defined by \cite[sec. 6.3]{RL00}, \cite[eq. (9)]{GM08}
\begin{equation}
\vert 1_\xi \rangle = B^\ast(\xi) \vert 0  \rangle,  
\label{eq:xi-create}
\end{equation}
 where $\xi$ is a complex valued function such that $\int_0^\infty \vert
\xi(s) \vert^2 ds =1$, and 
$\vert 0 \rangle$ is the vacuum state of the field. Expression (\ref{eq:xi-create}) says that  the single photon wavepacket with temporal shape $\xi$ is created from the vacuum using the field operator $B^\ast(\xi)$.

The Hilbert space for the composite system is
\begin{equation*}
\mathfrak{H} = \mathfrak{H}_S \otimes \mathfrak{F}= \mathfrak{H}_S \otimes %
\mathfrak{F}_{t]} \otimes \mathfrak{F}_{(t},
\end{equation*}
where here we have exhibited the continuous temporal tensor product
decomposition of the Fock space $\mathfrak{F}=\mathfrak{F}_{t]} \otimes %
\mathfrak{F}_{(t}$ into past and future components, which is of basic
importance in what follows.  Write
\begin{equation}
\mathbb{E}_{11} [ X \otimes F ] = \langle \eta 1_\xi \vert  (X \otimes F) \vert \eta
1_\xi \rangle = \langle \eta  \vert X \vert \eta \rangle \langle 1_\xi \vert  F  \vert 1_\xi \rangle
\end{equation}
for the expectation with respect to the product state $\vert \eta 1_\xi \rangle$, where the field is in the single
photon state.  Here and in what follows  $X$ is a bounded system operator acting on $\mathfrak{H}_S$,
and $F$ is a field operator acting on the Fock space $\mathfrak{F}$.
Similarly, we may define the expectation when the field is in the vacuum
state,
\begin{equation}
\mathbb{E}_{00} [ X \otimes F ] = \langle  \eta 0 \vert  (X \otimes F) \vert \eta
0  \rangle = \langle \eta \vert X \vert \eta \rangle \langle 0 \vert  F \vert 0  \rangle .
\end{equation}
We will also have need for the cross-expectations
\begin{eqnarray}
\mathbb{E}_{10}[ X \otimes F ] = \langle \eta 1_\xi \vert  (X \otimes F)  \vert \eta 0
\rangle, \ \text{and} \ \mathbb{E}_{01}[ X \otimes F ] = \langle \eta 0  \vert
(X \otimes F) \vert  \eta 1_\xi \rangle.
\end{eqnarray}

A crucial difference between the single photon state and the vacuum state is
that the later state factorizes $\vert 0 \rangle =\vert 0_{t]} \rangle
\otimes \vert 0_{(t} \rangle $ with respect to the temporal factorization
$\mathfrak{F}=\mathfrak{F}_{t]} \otimes \mathfrak{F}_{(t}$ of the Fock
space, with $\vert 0_{t]} \rangle \in \mathfrak{F}_{t]} $ and $\vert
0_{(t} \rangle \in \mathfrak{F}_{(t}$, while the former does not. Rather,
we have
\begin{equation}
\vert 1_\xi \rangle = B^\ast(\xi) \vert 0 \rangle = \vert {1_\xi}_{t]} \rangle
\otimes \vert 0_{(t} \rangle + \vert 0_{t]} \rangle \otimes \vert
{1_\xi}_{(t} \rangle ,  \label{eq:factor-additive-1}
\end{equation}
where
\begin{equation}
\vert {1_\xi}_{t]} \rangle = B^{-\ast}_t(\xi) \vert 0_{t]} \rangle, \ \text{%
and} \ \vert {1_\xi}_{(t} \rangle = B^{+\ast}_t(\xi) \vert 0_{(t} \rangle ,
\end{equation}
and
\begin{equation}
B^-_t(\xi) = B(\xi \chi_{[0,t]}), \ \ B^+_t(\xi) = B(\xi \chi_{(t,\infty]}),
\ \ B(\xi) = B^-_t(\xi) + B^+_t(\xi) .  
\label{eq:B-xi-decomp-1}
\end{equation}
Here, $\chi_{[0,t]}$ is the indicator function for the time interval $[0,t]$%
. Note that while $\vert 1_\xi \rangle$ has unit norm, we have
\begin{equation}
\parallel \vert  {1_\xi}_{t]} \rangle \parallel^2 = \int_0^t \vert \xi(s) \vert^2
ds, \ \text{and} \ \parallel \vert {1_\xi}_{(t} \rangle \parallel^2 =
\int_t^\infty \vert \xi(s) \vert^2 ds .
\end{equation}

A consequence of the additive decomposition (\ref{eq:factor-additive-1}) and
the definitions (\ref{eq:B-xi-decomp-1}) is the following. Let $K(t)$ be a bounded operator acting on the full Hilbert space $\mathfrak{H}$ that is adapted, i.e. $K(t)$ acts trivially on $\mathfrak{F}_{(t}$, the field in the future.
Then the expectation with respect to the single photon field may be
expressed in terms of the vacuum state as follows:
\begin{eqnarray}
\mathbb{E}_{11} [ K(t) ] &=& \mathbb{E}_{00} [ B^-_t(\xi) K(t)
B^{-\ast}_t(\xi) + r(t) K(t) ]  
\label{eq:xi-phi}
\end{eqnarray}
where $r(t) = \int_t^\infty \vert \xi(s) \vert^2 ds$. 
 
\subsection{Master Equation}
\label{sec:master}

Before deriving the quantum filter, we work out dynamical equations for the
unconditioned single photon expectation,  \cite{GEPZ98}.
To assist us in evaluating this
expectation, we make use of the following lemma.

\begin{lemma}
\label{lemma:expectation-basic} 
Let $K(t)$ be a bounded quantum stochastic process
defined by
\begin{equation}
K(t) = \int_0^t M_0(s) ds + \int_0^t M_- (s) dB(s) + \int_0^t M_+(s)
dB^\ast(s) + \int_0^t M_1(s) d\Lambda(s),  
\label{eq:Kt-def}
\end{equation}
where $M_0$, $M_\pm$ and $M_1$ are bounded and adapted. Then we have
\begin{eqnarray}
\mathbb{E}_{11}[ K(t) ] &=& \mathbb{E}_{11}[ \int_0^t M_0(s) ds ] + \mathbb{E%
}_{10}[ \int_0^t M_- (s) \xi(s) ds ]  \notag \\
&& + \mathbb{E}_{01}[ \int_0^t M_+(s) \xi^\ast(s) ds ] + \mathbb{E}_{00}[
\int_0^t M_1 (s) \vert \xi(s) \vert^2 ds ] ,  \label{eq:Kt-11} \\
\mathbb{E}_{10}[ K(t) ] &=& \mathbb{E}_{10}[ \int_0^t M_0(s) ds ] + \mathbb{E%
}_{00}[ \int_0^t M_+ (s) \xi^\ast(s) ds ] ,  \label{eq:Kt-10} \\
\mathbb{E}_{01}[ K(t) ] &=& \mathbb{E}_{01}[ \int_0^t M_0(s) ds ] + \mathbb{E%
}_{00}[ \int_0^t M_- (s) \xi(s) ds ] ,  \label{eq:Kt-01} \\
\mathbb{E}_{00}[ K(t) ] &=& \mathbb{E}_{00}[ \int_0^t M_0(s) ds ] .
\label{eq:Kt-00}
\end{eqnarray}
\end{lemma}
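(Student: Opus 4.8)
The plan is to reduce all four identities to a single ``action table'' for the fundamental increments $dB(s)$, $dB^\ast(s)$, $d\Lambda(s)$ sandwiched between the states $\vert\eta 0\rangle$ and $\vert\eta 1_\xi\rangle$, and then to evaluate $\mathbb{E}_{ab}$ of each of the four stochastic integrals in (\ref{eq:Kt-def}) separately. Since $K(t)$ enters only under the integral sign --- not as a coefficient --- no fixed-point or Gr\"onwall argument is required and the four expectations decouple completely; it suffices to compute $\mathbb{E}_{ab}[\int_0^t M(s)\,dX(s)]$ for each $X\in\{t,B,B^\ast,\Lambda\}$ and each choice of bra/ket labels $a,b\in\{0,1\}$.

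First I would record how the increments act on the product states, working from $\vert 1_\xi\rangle=B^\ast(\xi)\vert 0\rangle$ together with the vacuum-annihilation property $B(s)\vert 0\rangle=0$, $\Lambda(s)\vert0\rangle=0$ and the commutation relations implied by (\ref{eq:Ito-sp}). The relevant rules are
\[
 dB(s)\,\vert\eta1_\xi\rangle=\xi(s)\,\vert\eta0\rangle\,ds,\qquad dB(s)\,\vert\eta0\rangle=0,\qquad d\Lambda(s)\,\vert\eta1_\xi\rangle=\xi(s)\,dB^\ast(s)\,\vert\eta0\rangle,\qquad d\Lambda(s)\,\vert\eta0\rangle=0,
\]
together with the adjoint statements $\langle\eta1_\xi\vert\,dB^\ast(s)=\xi^\ast(s)\,\langle\eta0\vert\,ds$ and $\langle\eta0\vert\,dB^\ast(s)=\langle\eta0\vert\,d\Lambda(s)=0$. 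The annihilation rule follows immediately from $[B(s),B^\ast(\xi)]=\int_0^s\xi(u)\,du$ and $B(s)\vert0\rangle=0$; the number rule follows by the analogous computation for $\Lambda(s)B^\ast(\xi)\vert0\rangle$.

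Next I would assemble the four identities. In each integral $\int_0^t M(s)\,dX(s)$ the increment $dX(s)$ stands to the right of the adapted coefficient $M(s)$; the annihilation increment therefore acts directly on the single-photon \emph{ket}, collapsing $\vert1_\xi\rangle\to\vert0\rangle$ and releasing the scalar $\xi(s)\,ds$, while the creation increment (either $dB^\ast(s)$ itself, or the $dB^\ast(s)$ produced by $d\Lambda(s)$) must be routed onto the \emph{bra}: since $M(s)$ is adapted it acts trivially on the future field, hence commutes with the forward increment, which may then be slid across it to strike $\langle 1_\xi\vert$, collapsing it to $\langle0\vert$ and releasing $\xi^\ast(s)\,ds$. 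Feeding these through, in $\mathbb{E}_{11}$ the $dB$ integral lowers the ket label to give $\mathbb{E}_{10}[\int_0^tM_-(s)\xi(s)\,ds]$, the $dB^\ast$ integral lowers the bra label to give $\mathbb{E}_{01}[\int_0^tM_+(s)\xi^\ast(s)\,ds]$, and the $d\Lambda$ integral lowers both labels to give $\mathbb{E}_{00}[\int_0^tM_1(s)\vert\xi(s)\vert^2\,ds]$; the remaining identities (\ref{eq:Kt-10})--(\ref{eq:Kt-00}) then follow because whenever a label is already $0$ the corresponding ``wrong direction'' term is killed by $dB(s)\vert0\rangle=0$, $\langle0\vert dB^\ast(s)=0$, or $\Lambda(s)\vert0\rangle=0$.

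The step deserving the most care is the routing argument together with its adaptedness justification. A creation increment must not be made to act on the single-photon ket, because $dB^\ast(s)\vert 1_\xi\rangle$ lies in the two-photon sector; the coefficient $M(s)$ standing between it and the bra prevents one from directly reading off a vanishing overlap, so the clean move is to commute it onto the bra, which is legitimate precisely because the Ito integral is non-anticipating and $M(s)$ is adapted to $\mathfrak{F}_{s]}$. The ket-collapse rule can be cross-checked against the additive factorization (\ref{eq:factor-additive-1}): on the piece $\vert{1_\xi}_{s]}\rangle\otimes\vert0_{(s}\rangle$ the future is vacuum so $dB(s)$ gives zero, while on $\vert0_{s]}\rangle\otimes\vert{1_\xi}_{(s}\rangle$ the increment extracts exactly the amplitude $\xi(s)\,ds$. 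One could alternatively bypass the action table entirely by applying the vacuum-reduction identity (\ref{eq:xi-phi}) to $K(t)$ and evaluating $\mathbb{E}_{00}[B^-_t(\xi)K(t)B^{-\ast}_t(\xi)]$ with the quantum Ito rules, but this route is computationally heavier, so I would keep the increment-action argument as the primary one.
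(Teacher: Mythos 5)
Your proof is correct, but it takes a genuinely different route from the paper's. The paper proves (\ref{eq:Kt-11}) by first passing to the vacuum via the sandwich identity (\ref{eq:xi-phi}), $\mathbb{E}_{11}[K(t)]=\mathbb{E}_{00}[B^-_t(\xi)K(t)B^{-\ast}_t(\xi)+r(t)K(t)]$, and then differentiating with the quantum Ito product rule, so that the cross terms $\mathbb{E}_{10}[M_-\xi]$, $\mathbb{E}_{01}[M_+\xi^\ast]$ and $\mathbb{E}_{00}[M_1\vert\xi\vert^2]$ emerge from the Ito corrections in (\ref{eq:Ito-sp}) between $dK(t)$ and the increments of $B^{-}_t(\xi)$, $B^{-\ast}_t(\xi)$; the remaining three identities are then asserted to follow similarly. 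You instead work directly with the action of the increments on the single-photon bra and ket, $dB(s)\vert 1_\xi\rangle=\xi(s)\vert 0\rangle\,ds$ and $d\Lambda(s)\vert 1_\xi\rangle=\xi(s)\,dB^\ast(s)\vert 0\rangle$, using adaptedness to slide creation increments onto the bra. This is essentially the first fundamental formula of Hudson--Parthasarathy applied to the non-exponential vector $B^\ast(\xi)\vert 0\rangle$, and it is the exact analogue of the eigenstate argument the paper itself uses for the coherent-superposition case in Lemma \ref{lemma:expectation-basic-cat}. Your version treats all four identities uniformly and makes the label-lowering structure transparent, whereas the paper's version establishes (\ref{eq:xi-phi}) once and reuses it later (e.g.\ in Lemma \ref{lemma:Z-compensated-mtg}). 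You also correctly isolate and justify the one step that genuinely needs care: $dB^\ast(s)$ must be routed onto the bra rather than allowed to act on the single-photon ket (where it would produce a two-photon component), and this commutation is legitimate precisely because the integrand is adapted and hence acts trivially on the forward increment's time slot.
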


\begin{proof}
Using (\ref{eq:xi-phi}), the expressions $B^-_t(\xi) = \int_0^t \xi^\ast(s)
dB(s)$, $B^{-\ast}_t(\xi) = \int_0^t \xi(s) dB^\ast(s)$, and the Ito rule we
have
\begin{eqnarray}
\mathbb{E}_{11}[ d K(t) ] &=& \mathbb{E}_{00} [ d ( B^-_t (\xi) K(t)
B^{-\ast}_t (\xi) + r(t) K(t) ) ]  \notag \\
&=& \mathbb{E}_{00} [ B^-_t (\xi) M_0(t) B^{-\ast}_t (\xi) + r(t) M_0(t)
\notag \\
&& + M_+(t) B^{-\ast}_t(\xi) \xi^\ast(t) + B^-_t (\xi) M_-(t) \xi(t) +M_1(t) |\xi(t)|^2] dt
\notag \\
&=& \mathbb{E}_{11} [ M_0(t) ] dt + \mathbb{E}_{00} [ M_+(t) B^{\ast} (\xi)
\xi^\ast(t) + B(\xi) M_-(t) \xi(t) + M_1 |\xi(t)|^2] dt .
\end{eqnarray}
This last line is justified since $M_\pm$ are adapted and $\mathbb{E}_{00}[
B^+_t(\xi) ] = 0$. That is,
\begin{equation}
\mathbb{E}_{11}[ dK(t) ] = \mathbb{E}_{11} [ M_0(t) ] dt + \mathbb{E}_{01}[
M_+(t) ] \xi^\ast(t) dt + \mathbb{E}_{10} [ M_-(t) ] \xi(t) dt +  \mathbb{E}_{00}[
\int_0^t M_1 (s) \vert \xi(s) \vert^2 ds ] .
\end{equation}
This proves (\ref{eq:Kt-11}). The remaining expressions are proven in a
similar manner.
\qed
\end{proof}

We will first express the master equation in Heisenberg form using the
expectations
\begin{equation}
\mu^{jk}_t(X) = \mathbb{E}_{jk}[ X(t) ].
\end{equation}
Note that for all $t \geq 0$ we have
\begin{equation}
\mu_t^{00}(I)= 1 = \mu^{11}_t(I) , \ \ \ \mu^{01}_t(I)=0=\mu^{10}_t(I).
\end{equation}

\begin{theorem}
\label{thm:master} The master equation in Heisenberg form for the system  
when the field is in the single photon state $\vert 1_\xi \rangle$ is given by
the system of equations
\begin{eqnarray}
\dot{\mu}^{11}_t (X) &=& \mu^{11}_t(\mathcal{L}(X)) + \mu^{01}_t( S^\ast
[X,L] ) \xi^\ast(t) + \mu^{10}_t( [L^\ast, X] S ) \xi(t)  \notag \\
&& + \mu^{00}_t( S^\ast X S - S) \vert \xi(t) \vert^2,
\label{eq:rho-dyn-a-11} \\
\dot{\mu}^{10}_t (X) &=& \mu^{10}_t(\mathcal{L}(X)) + \mu^{00}_t( S^\ast [X,
L] ) \xi^\ast(t) ,  \label{eq:rho-dyn-a-10} \\
\dot{\mu}^{01}_t (X) &=& \mu^{01}_t(\mathcal{L}(X)) + \mu^{00}_t( [L^\ast,
X] S ) \xi(t) ,  \label{eq:rho-dyn-a-01} \\
\dot{\mu}^{00}_t (X) &=& \mu^{00}_t(\mathcal{L}(X)) .
\label{eq:rho-dyn-a-00}
\end{eqnarray}
The initial conditions are
\begin{equation}
\mu^{11}_0(X)= \mu^{00}_0(X)= \langle \eta, X \eta \rangle, \ \
\mu^{10}_0(X)= \mu^{01}_0(X)=0.
\end{equation}
\end{theorem}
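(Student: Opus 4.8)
The plan is to read the four master equations directly off the Heisenberg QSDE (\ref{eq:X-dyn}) by applying Lemma \ref{lemma:expectation-basic}. First I would rewrite (\ref{eq:X-dyn}) in integral form, $j_t(X)=X\otimes I+\int_0^t(\cdots)$, which exhibits $j_t(X)-X\otimes I$ as a process of the type (\ref{eq:Kt-def}) with the bounded adapted coefficients
\[
M_0(s)=j_s(\mathcal{L}(X)),\quad M_-(s)=j_s([L^\ast,X]S),\quad M_+(s)=j_s(S^\ast[X,L]),\quad M_1(s)=j_s(S^\ast XS-X),
\]
together with the constant term $X\otimes I$. Boundedness and adaptedness of these coefficients follow from the standing assumption that $S,L,H$ are bounded and from the adaptedness of $j_s$, so the hypotheses of the Lemma are satisfied.

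Next I would dispose of the initial conditions by handling the constant term separately. Since $j_0(X)=X\otimes I$, its contribution to $\mu^{jk}_t(X)$ is $\mathbb{E}_{jk}[X\otimes I]$, and the factorized form of the expectations gives $\mathbb{E}_{11}[X\otimes I]=\mathbb{E}_{00}[X\otimes I]=\langle\eta,X\eta\rangle$ (using $\|\,1_\xi\rangle\|=\|\,0\rangle\|=1$) and $\mathbb{E}_{10}[X\otimes I]=\mathbb{E}_{01}[X\otimes I]=0$ (using $\langle 1_\xi\vert 0\rangle=0$). These are exactly the stated initial conditions.

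Then I would apply the four identities (\ref{eq:Kt-11})--(\ref{eq:Kt-00}) to the integral part. For each pair $(j,k)$ this produces an integral equation for $\mu^{jk}_t(X)$ whose integrand is a linear combination of the lower-order quantities $\mu^{jk}_s$ evaluated at the Evans--Hudson maps, weighted by $\xi(s)$, $\xi^\ast(s)$ or $|\xi(s)|^2$. For instance (\ref{eq:Kt-11}) gives the closed coupling
\[
\mu^{11}_t(X)=\langle\eta,X\eta\rangle+\int_0^t\!\big[\mu^{11}_s(\mathcal{L}(X))+\mu^{10}_s([L^\ast,X]S)\xi(s)+\mu^{01}_s(S^\ast[X,L])\xi^\ast(s)+\mu^{00}_s(S^\ast XS-X)|\xi(s)|^2\big]\,ds,
\]
while (\ref{eq:Kt-10}) and (\ref{eq:Kt-01}) couple $\mu^{10}$ and $\mu^{01}$ only to $\mu^{00}$ through the single $M_+\xi^\ast$ and $M_-\xi$ terms respectively, and (\ref{eq:Kt-00}) leaves $\mu^{00}$ self-contained. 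Differentiating each integral equation in $t$ yields the system (\ref{eq:rho-dyn-a-11})--(\ref{eq:rho-dyn-a-00}).

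The calculation is essentially bookkeeping, and I do not anticipate a genuine obstacle: the only points requiring care are the correct pairing of the creation/annihilation coefficients with the conjugated wavepacket factors (so that the $dB^\ast$ coefficient $M_+$ pairs with $\xi^\ast$ and the $dB$ coefficient $M_-$ with $\xi$) and the routine verification that the Lemma's hypotheses hold for the coefficient processes listed above. The substantive content --- the reduction of single-photon expectations to vacuum expectations via the additive decomposition (\ref{eq:factor-additive-1}) and the formula (\ref{eq:xi-phi}) --- has already been absorbed into Lemma \ref{lemma:expectation-basic}, so the proof of the theorem reduces to the mechanical steps above.
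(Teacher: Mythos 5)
Your proposal is correct and follows exactly the route the paper takes: its proof of Theorem \ref{thm:master} is the one-line observation that the equations follow by applying Lemma \ref{lemma:expectation-basic} to the Heisenberg equation (\ref{eq:X-dyn}), and your identification of the coefficients $M_0, M_\pm, M_1$, the pairing of $M_+$ with $\xi^\ast$ and $M_-$ with $\xi$, and the treatment of the initial conditions are all the intended bookkeeping. Note also that your $\mu^{00}_t(S^\ast XS - X)$ is the correct Evans--Hudson term; the ``$S^\ast XS - S$'' appearing in (\ref{eq:rho-dyn-a-11}) is evidently a typographical slip in the statement.
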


\begin{proof}
Equations (\ref{eq:rho-dyn-a-11})- (\ref{eq:rho-dyn-a-00}) are obtained by
applying Lemma \ref{lemma:expectation-basic} to the Heisenberg equation (\ref
{eq:X-dyn}).
\qed
\end{proof}

It is apparent from Theorem \ref{thm:master} that the single photon
expectation $\mu^{11}_t(X) = \mathbb{E}_{11}[ X(t)]$ cannot be determined by
a single differential equation, and that instead a system of coupled equations is
required, equation (\ref{eq:rho-dyn-a-11})-(\ref{eq:rho-dyn-a-00}).  Note
that the unitary matrix $S$ appearing in the Schr\"{o}dinger equation (\ref{eq:unitary}) does appear in the single photon master equations (\ref{eq:rho-dyn-a-11})-(\ref{eq:rho-dyn-a-00}), in contrast to the vacuum case (which corresponds to (\ref{eq:rho-dyn-a-00})).

\subsection{Embedding}
\label{sec:the-matrix}

 In this section we construct a suitable embedding for the system
and single photon field, and show how the system of master equations from
Section \ref{sec:master} can be compactly represented as a single 
equation for a larger system. This embedding will be used in subsequent
sections to derive the quantum filter. We should emphasize, however, that our embedding is
not the same as that used in \cite{HPB04}, \cite{GJN11a}, \cite{GJNC11}. The embedding is illustrated in Figure \ref
{fig:extend-1}.

\begin{figure}[h]
\begin{center}
\includegraphics{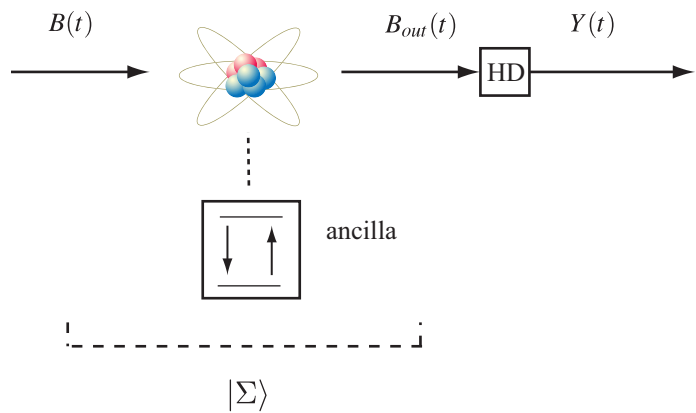} %\sidecaption[t]
\end{center}
\caption{System  embedded in the extended system. While the analysis does not employ  any coupling between the system and ancilla two-level system, the ancilla, system and field are assumed to be initialized in a superposition state $\vert \Sigma \rangle$ defined in equation (\ref{eq:super}).}
\label{fig:extend-1}
\end{figure}

Recall that the system and field are defined on a Hilbert space $\mathfrak{H}
= \mathfrak{H}_S \otimes \mathfrak{F}$. We define an extended space
\begin{equation}
\tilde {\mathfrak{H}}= \mathbb{C}^2 \otimes \mathfrak{H} = \mathfrak{H}
\oplus \mathfrak{H} 
\end{equation}
which includes the system, field and an ancilla two-level system.
Let $\vert e_0 \rangle$ and $\vert e_1 \rangle$ be an orthonormal basis for $\mathbb{C}^2$,
\begin{equation}
\vert e_0  \rangle = \left[
\begin{array}{c}
0 \\
1
\end{array}
\right], \ \ \vert e_1 \rangle  = \left[
\begin{array}{c}
1 \\
0
\end{array}
\right],
\end{equation}
and let $A$ be an operator acting on $\mathbb{C}^2$, i.e. a complex $2
\times 2$ matrix
\begin{equation}
A = \left[
\begin{array}{cc}
a_{11} & a_{10} \\
a_{01} & a_{00}
\end{array}
\right].
\end{equation}
It may be helpful to think of operators $A \otimes X \otimes F$ on the
extended space $\tilde {\mathfrak{H}}$ represented in the Kronecker product
form
\begin{equation}
A \otimes (X \otimes F) = \left[
\begin{array}{cc}
a_{11} (X \otimes F) & a_{10} (X \otimes F) \\
a_{01} (X \otimes F) & a_{00} (X \otimes F)
\end{array}
\right].
\end{equation}

We allow the extended system to evolve unitarily according to $I \otimes U(t)
$, where $U(t)$ is the unitary operator for the system and field, given by
the Schr\"{o}dinger equation (\ref{eq:unitary}). Note in particular that the
system is not coupled to the ancilla $\mathbb{C}^2$, and observables of this
two-level system are static. We initialize the extended system in
the superposition state
\begin{equation}
\vert \Sigma \rangle = \alpha_1 \vert e_1 \eta 1_\xi \rangle + \alpha_0 \vert
e_0 \eta 0 \rangle,  \label{eq:super}
\end{equation}
where $\vert \alpha_0 \vert^2 + \vert \alpha_1 \vert^2 =1$. This state
evolves according to
\begin{equation}
\vert \Sigma(t) \rangle = (I \otimes U(t)) \vert \Sigma \rangle.
\end{equation}
For notational convenience we write
\begin{equation}
w_{jk} = \alpha_j^\ast \alpha_k  \label{w_jk}
\end{equation}
and note that $w = \sum_{jk} w_{jk}  \vert e_j \rangle \langle e_k \vert$ is a density matrix for $\mathbb{C}^2
$.

The expectation with respect to the superposition state $\vert \Sigma \rangle$ is given by
\begin{equation}
\tilde \mu_t( A \otimes X) = \mathbb{E}_{\psi} [ A \otimes X(t) ] = \langle \Sigma \vert
(A \otimes X(t)) \vert \Sigma \rangle = \sum_{jk} w_{jk} a_{jk} \mu^{jk}_t(X).
\label{eq:master-mu-rep1}
\end{equation}
This expectation is correctly normalized, $\mu_t(I \otimes I)=1$, and the
expectations $\mu^{jk}_t(X)$ defined in Section \ref{sec:master} are scaled
components of $\tilde \mu_t(A\otimes X)$:
\begin{equation}
\mu^{jk}_t(X) = \frac{ \tilde \mu_t( \vert e_j \rangle \langle e_k \vert  \otimes X)}{ w_{jk} },
\label{eq:master-mu-rep2}
\end{equation}
for $w_{jk} \neq 0$, otherwise it can be set to, say, 0. We also have
\begin{equation}
\mu^{jk}_t(X) = \frac{ w_{11}  \tilde\mu_t( \vert e_j \rangle \langle e_k \vert  \otimes X) }{ w_{jk}  \tilde \mu_t(  \vert e_1 \rangle \langle e_1 \vert \otimes I) } .
\label{eqmui-jk-bayes}
\end{equation}

Note that in the extended space the Schr\"{o}dinger and Heisenberg pictures
are related by
\begin{equation}
\mathbb{E}_{\Sigma(t) } [ A \otimes X \otimes F ] = \mathbb{E}_{\Sigma} [ A
\otimes U^\ast(t) (X \otimes F) U(t) ] .
\end{equation}

In order to derive the equation for expectations in the extended system, we
need the following lemma, which follows from Lemma \ref
{lemma:expectation-basic}, and makes use of the matrices
\begin{equation}
\sigma_+ =  \vert e_1 \rangle \langle e_0 \vert = \left[
\begin{array}{cc}
0 & 1 \\
0 & 0
\end{array}
\right], \ \ \sigma_- =  \vert e_0 \rangle \langle e_1 \vert  = \left[
\begin{array}{cc}
0 & 0 \\
1 & 0
\end{array}
\right].
\end{equation}

\begin{lemma}
\label{lemma:expectation-basic-extended} 
Assume $\alpha_0\neq 0$, and let $M(t)$ be bounded and adapted. Then
\begin{eqnarray}
\mathbb{E}_\Sigma [ \int_0^t A \otimes M(s) dB(s) ] &=& \nu \mathbb{E}_\Sigma [
\int_0^t ( A \sigma_+)\otimes M(s) \xi(s) ds ], \\
\mathbb{E}_\Sigma [ \int_0^t A \otimes M (s) dB^\ast(s) ] &=& \nu^\ast \mathbb{%
E}_\Sigma [ \int_0^t (\sigma_-A )\otimes M(s) \xi^\ast(s) ds], \\
\mathbb{E}_\Sigma [ \int_0^t A \otimes M(s) d\Lambda(s) ] &=& \vert \nu
\vert^2 \mathbb{E}_\Sigma [ \int_0^t ( \sigma_- A \sigma_+ )\otimes M(s) \vert
\xi(s) \vert^2 ds ],
\end{eqnarray}
where
\begin{equation}
\nu = \frac{\alpha_1}{\alpha_0}.
\end{equation}
\end{lemma}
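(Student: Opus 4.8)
The plan is to reduce everything to Lemma~\ref{lemma:expectation-basic} by first unfolding the superposition expectation into its four scalar sectors. Since $A$ acts only on the ancilla $\mathbb{C}^2$ while the integrand $M(s)$, and hence the stochastic integral $K(t)$ it defines, acts on $\mathfrak{H}$, and since $\vert\Sigma\rangle = \alpha_1\vert e_1\eta 1_\xi\rangle + \alpha_0\vert e_0\eta 0\rangle$ is a two-term superposition, I would expand
\begin{equation*}
\mathbb{E}_\Sigma[A\otimes K(t)] = \sum_{j,k\in\{0,1\}} w_{jk}\, a_{jk}\, \mathbb{E}_{jk}[K(t)],
\end{equation*}
which is exactly identity (\ref{eq:master-mu-rep1}) read off for a general bounded adapted process $K(t)$ in place of $X(t)$. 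This holds term by term because $\langle e_j\vert A\vert e_k\rangle = a_{jk}$ and $\langle \eta\Phi_j\vert K(t)\vert\eta\Phi_k\rangle = \mathbb{E}_{jk}[K(t)]$, with the field factors $\Phi_1 = 1_\xi$, $\Phi_0 = 0$.

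Next, for each of the three claims I would take $K(t)$ to be the single-term integral appearing on the left ($\int_0^t M\,dB$, $\int_0^t M\,dB^\ast$, or $\int_0^t M\,d\Lambda$) and apply Lemma~\ref{lemma:expectation-basic} with only one of $M_-,M_+,M_1$ nonzero. For the $dB$ case this collapses the four sectors to $\mathbb{E}_{11}[K]=\mathbb{E}_{10}[\int_0^t M\xi\,ds]$ and $\mathbb{E}_{01}[K]=\mathbb{E}_{00}[\int_0^t M\xi\,ds]$, with $\mathbb{E}_{10}[K]=\mathbb{E}_{00}[K]=0$; the $dB^\ast$ and $d\Lambda$ cases collapse analogously. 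Substituting into the sector expansion leaves only the surviving weighted terms on the left-hand side.

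In parallel I would expand the right-hand side by the same sector formula, first computing the $2\times 2$ matrices $A\sigma_+$, $\sigma_- A$ and $\sigma_- A\sigma_+$ explicitly; each has a single nonzero column or row, so only one or two sectors contribute. The two sides are then matched using the elementary weight identities that follow from $w_{jk}=\alpha_j^\ast\alpha_k$ and $\nu=\alpha_1/\alpha_0$, namely $\nu\, w_{10}=w_{11}$, $\nu\, w_{00}=w_{01}$, together with their conjugate and modulus counterparts $\nu^\ast w_{01}=w_{11}$, $\nu^\ast w_{00}=w_{10}$, $\vert\nu\vert^2 w_{00}=w_{11}$. This is where the hypothesis $\alpha_0\neq 0$ enters, merely so that $\nu$ is defined.

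The work is essentially bookkeeping, and I do not expect a genuine obstacle; the one point deserving care is keeping the sector indices straight, since the role of the ancilla ladder operators $\sigma_\pm$ is precisely to shuffle which of the four $\mathbb{E}_{jk}$ sectors survives, and it is the weight identities above that convert this shuffling into the scalar prefactors $\nu,\nu^\ast,\vert\nu\vert^2$. Matching the single surviving term, or pair of terms, on each side then yields the three stated equalities.
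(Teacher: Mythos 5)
Your proof is correct and follows exactly the route the paper intends: the paper gives no explicit proof, stating only that the lemma "follows from Lemma \ref{lemma:expectation-basic}", and your argument supplies precisely that reduction via the sector expansion $\mathbb{E}_\Sigma[A\otimes K(t)]=\sum_{jk}w_{jk}a_{jk}\mathbb{E}_{jk}[K(t)]$ together with the weight identities $\nu w_{10}=w_{11}$, $\nu w_{00}=w_{01}$, $\nu^\ast w_{01}=w_{11}$, $\nu^\ast w_{00}=w_{10}$, $|\nu|^2 w_{00}=w_{11}$. The bookkeeping checks out in all three cases.
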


In Lemma \ref{lemma:expectation-basic-extended}, expectations of stochastic integrals with respect to the superposition state $\vert \Sigma \rangle$ are expressed  in terms of expectations of non-stochastic integrals again with respect to $\vert \Sigma \rangle$ with the aid of the matrices $\sigma_\pm$ acting on the ancilla system $\mathbb{C}^2$.  The action of the field annihilation, creation and gauge operators is therefore captured algebraically and all expectations in these relations are with respect to the same state.

We now have

\begin{theorem}
\label{thm:master-matrix} Assume $\alpha_0\neq 0$. Then the expectation $\tilde \mu_t(A \otimes X)$ (defined by (\ref{eq:master-mu-rep1}))
evolves according to
\begin{equation}
\dot {\tilde \mu}_t(A \otimes X) = \tilde \mu_t (\mathcal{G}_t(A \otimes X) ),
\label{eq:matrix-master}
\end{equation}
where
\begin{eqnarray}
\mathcal{G}_t(A \otimes X) &=& A \otimes \mathcal{L}(X) + ( A\sigma_+)
\otimes [L^\ast, X] S \nu \xi(t) + (\sigma_- A ) \otimes S^\ast [X,L] \nu^\ast
\xi^\ast(t)  \notag \\
&& + ( \sigma_- A \sigma_+ ) \otimes (S^\ast X S - X)) \vert \nu \xi(t)
\vert^2.
\end{eqnarray}
\end{theorem}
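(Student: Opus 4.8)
The plan is to reduce everything to the single Heisenberg quantum stochastic differential equation (\ref{eq:X-dyn}) for $X(t)=j_t(X)$ and then to apply Lemma \ref{lemma:expectation-basic-extended} termwise. Since the ancilla operator $A$ is static (the evolution is $I\otimes U(t)$ and $A$ commutes with it), I would first observe that $d(A\otimes X(t)) = A\otimes dX(t)$, so that integrating (\ref{eq:X-dyn}) gives
\begin{align*}
A\otimes X(t) &= A\otimes X + \int_0^t A\otimes j_s(\mathcal{L}(X))\,ds + \int_0^t A\otimes j_s([L^\ast,X]S)\,dB(s) \\
&\quad + \int_0^t A\otimes j_s(S^\ast[X,L])\,dB^\ast(s) + \int_0^t A\otimes j_s(S^\ast X S - X)\,d\Lambda(s).
\end{align*}
Each integrand here is of the form $A\otimes M(s)$ with $M(s)=j_s(\cdot)$; because $S,L,H,X$ are assumed bounded and $j_s$ is a unitary $*$-homomorphism, these $M(s)$ are bounded and adapted, so the hypotheses of Lemma \ref{lemma:expectation-basic-extended} are met.

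Next I would apply $\mathbb{E}_\Sigma[\,\cdot\,]$ to both sides. The $dt$ term is left unchanged. To the three genuine stochastic integrals I apply the three identities of Lemma \ref{lemma:expectation-basic-extended}, which trade each field differential for an ancilla insertion together with a factor of $\nu$ and $\xi$: the $dB(s)$ term becomes $\nu\,\mathbb{E}_\Sigma[\int_0^t (A\sigma_+)\otimes j_s([L^\ast,X]S)\,\xi(s)\,ds]$, the $dB^\ast(s)$ term becomes $\nu^\ast\,\mathbb{E}_\Sigma[\int_0^t (\sigma_-A)\otimes j_s(S^\ast[X,L])\,\xi^\ast(s)\,ds]$, and the $d\Lambda(s)$ term becomes $|\nu|^2\,\mathbb{E}_\Sigma[\int_0^t (\sigma_-A\sigma_+)\otimes j_s(S^\ast X S - X)\,|\xi(s)|^2\,ds]$. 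The key bookkeeping step is to attach the correct $\sigma_\pm$ on the correct side of $A$ and the correct power of $\nu$ to each of the Evans--Hudson maps; this is dictated directly by the lemma.

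Finally, since all four surviving terms are now ordinary time integrals of bounded adapted operators, I would interchange $\mathbb{E}_\Sigma$ with $\int_0^t ds$ and recognize $\mathbb{E}_\Sigma[B\otimes j_s(Y)]=\tilde\mu_s(B\otimes Y)$ for the relevant $B\in\{A,\,A\sigma_+,\,\sigma_-A,\,\sigma_-A\sigma_+\}$, cf.\ (\ref{eq:master-mu-rep1}). This expresses $\tilde\mu_t(A\otimes X)$ as $\tilde\mu_0(A\otimes X)$ plus the time integral of exactly the four terms appearing in $\mathcal{G}_t(A\otimes X)$, using $|\nu|^2|\xi(t)|^2=|\nu\xi(t)|^2$. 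Differentiating in $t$ then yields (\ref{eq:matrix-master}).

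I do not expect any serious obstacle: the real content sits in Lemma \ref{lemma:expectation-basic-extended} (which in turn rests on Lemma \ref{lemma:expectation-basic} and the vacuum decomposition (\ref{eq:xi-phi})), and once that lemma is granted the theorem is a direct termwise computation. The points needing care are purely routine, namely verifying that the $j_s(\cdot)$ integrands satisfy the boundedness and adaptedness hypotheses of the lemma, justifying the Fubini interchange of expectation and time integral, and keeping the noncommutative ordering of the ancilla factors $\sigma_\pm$ relative to $A$ correct. The hypothesis $\alpha_0\neq 0$ enters only so that $\nu=\alpha_1/\alpha_0$ in the lemma is well defined.
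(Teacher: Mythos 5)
Your proposal is correct and is exactly the argument the paper intends: the theorem is stated immediately after Lemma \ref{lemma:expectation-basic-extended} precisely because it follows by applying that lemma termwise to the integrated Heisenberg equation (\ref{eq:X-dyn}), with the ancilla factor passing through since the evolution is $I\otimes U(t)$. Your bookkeeping of the $\sigma_\pm$ placements, the powers of $\nu$, and the role of $\alpha_0\neq 0$ all match the paper's construction.
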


The reader may easily verify that the system of  master
equations (\ref{eq:rho-dyn-a-11})-(\ref{eq:rho-dyn-a-00}) for $\mu^{jk}_t(X)$%
, $j,k=1,0$, follows from equation (\ref{eq:matrix-master}%
) by setting $A=\vert e_j \rangle \langle e_k \vert$.

\subsection{Quantum Filter for the Extended System}
\label{sec:matrix-filter}

The extended system provides a convenient  framework for quantum
filtering, since all expectations can be expressed in terms of  the superposition state $\vert \Sigma \rangle$.
 Our immediate goal in this section is to determine 
the equation for the quantum conditional expectation
\begin{equation}
\tilde \pi_t( A \otimes X) = \mathbb{E}_{\Sigma}[ A \otimes X(t) \, \vert \, I
\otimes \mathscr{Y}_t] ,
\label{eq:matrix-c-exp-def}
\end{equation}
and in  Section \ref{sec:single-photon-filter} we will explain how the quantum filter for
the single photon field may be obtained from this equation.

The continuously monitored field observable that corresponds to the
conditional expectation (\ref{eq:matrix-c-exp-def}) is $I \otimes Y(t)$, and
from (\ref{eq:Y-out}) we have the corresponding output equation for the
extended system:
\begin{equation}
d(I \otimes Y(t)) = I \otimes (L(t)+L^\ast(t)) dt + I \otimes (S(t) dB(t) +
S^\ast(t) dB^\ast(t) ) .
\end{equation}
In what follows we will make use of the following lemma concerning
expectations of the process
\begin{equation}
V(t) = \int_0^t ( S(s) dB(s) + S^\ast(s) dB^\ast(s) )
\end{equation}
 with respect to the single photon state.

\begin{lemma}
\label{lemma:Z-compensated-mtg} For any $K \in \mathscr{Y}_s$, we have
\begin{eqnarray}
\mathbb{E}_{11}[ (V(t) - V(s)) K ] &=& \mathbb{E}_{10}[ \int_s^t S(r)
\xi(r)dr K] + \mathbb{E}_{01}[ \int_s^t S^\ast(r) \xi^\ast(r)dr K] .
\label{eq:V-compensated} 
\end{eqnarray}
\end{lemma}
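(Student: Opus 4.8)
The plan is to recognize the left-hand side of (\ref{eq:V-compensated}) as a direct instance of the single-photon expectation computed in Lemma~\ref{lemma:expectation-basic}, applied to the quantum stochastic process obtained by multiplying the increment $V(t)-V(s)$ on the right by the fixed observable $K$. Concretely, for $t\ge s$ I would set $W(t) = (V(t)-V(s))K$, so that $W(s)=0$ and, since $K$ does not depend on time, $dW(t) = (dV(t))\,K = S(t)\,dB(t)\,K + S^\ast(t)\,dB^\ast(t)\,K$ for $t>s$ (there is no Ito correction, as $K$ is constant in $t$). The goal is then to bring $W$ into the standard form (\ref{eq:Kt-def}) with $M_0=M_1=0$ and simply read off the annihilation and creation coefficients.

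The essential preliminary step, and the place where care is needed, is to move $K$ past the future field increments. Because $K\in\mathscr{Y}_s$ and $\mathscr{Y}_s = U(s)^\ast\mathscr{Z}_s U(s)$ with $\mathscr{Z}_s$ generated by the past quadratures $Z(u)$, $u\le s$, the operator $K$ is adapted at time $s$: it acts nontrivially only on $\mathfrak{H}_S\otimes\mathfrak{F}_{s]}$, using that both $U(s)$ and each $Z(u)$, $u\le s$, are themselves adapted at time $s$. Consequently $K$ commutes with the future increments $dB(r)$ and $dB^\ast(r)$ for $r>s$, so that $dB(r)\,K = K\,dB(r)$ and likewise for $dB^\ast$. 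This lets me rewrite the integrands as the adapted, bounded processes $M_-(r)=S(r)K$ and $M_+(r)=S^\ast(r)K$, which is exactly the regularity hypothesis required by Lemma~\ref{lemma:expectation-basic}. I expect this adaptedness-and-commutation justification to be the main obstacle, since it is what makes the entire reduction legitimate.

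With the coefficients identified, I would apply the differential identity established inside the proof of Lemma~\ref{lemma:expectation-basic}, namely $\mathbb{E}_{11}[dW(t)] = \mathbb{E}_{10}[M_-(t)]\,\xi(t)\,dt + \mathbb{E}_{01}[M_+(t)]\,\xi^\ast(t)\,dt$ (the $M_0$ and $M_1$ contributions vanish here), and integrate from $s$ to $t$ using $W(s)=0$; working with the differential form neatly sidesteps the fact that the lemma as stated is based at $0$ rather than at $s$. Finally, since $\xi(r)$ is a scalar and $K$ is constant in $r$, I can pull $K$ to the right of the time integrals to obtain $\mathbb{E}_{11}[(V(t)-V(s))K] = \mathbb{E}_{10}[\int_s^t S(r)\xi(r)\,dr\,K] + \mathbb{E}_{01}[\int_s^t S^\ast(r)\xi^\ast(r)\,dr\,K]$, which is precisely (\ref{eq:V-compensated}).
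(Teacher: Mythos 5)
Your proposal is correct and follows essentially the same route as the paper, which simply invokes Lemma \ref{lemma:expectation-basic} (together with the additive decomposition underlying it) applied to the process $(V(t)-V(s))K$. Your explicit justification that $K\in\mathscr{Y}_s$ is adapted at time $s$ and hence commutes with the future increments $dB(r),dB^\ast(r)$ for $r>s$, so that $M_-(r)=S(r)K$ and $M_+(r)=S^\ast(r)K$ are legitimate adapted integrands, is exactly the detail the paper leaves implicit.
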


\begin{proof}
Equation (\ref{eq:V-compensated}) is obtained using the additive
decomposition (\ref{eq:factor-additive-1}) and Lemma \ref
{lemma:expectation-basic}.
\qed
\end{proof}

Note that an operator $K$ in the unital commutative algebra $I \otimes %
\mathscr{Y}_t$ has the form $K=I \otimes \tilde K$, where $\tilde K \in %
\mathscr{Y}_t$. By the spectral theorem, \cite[Theorem 3.3]{BHJ07}, we may identify $K$ and $\tilde K$, both of
which are equivalent to a classical stochastic process $K_t(s)$, $0 \leq s
\leq t$. In the remainder of this paper, we use these identifications without further comment. 
The quantum conditional expectation $\tilde \pi_t( A \otimes X) \in I
\otimes \mathscr{Y}_t$ is well defined because $A \otimes X(t)$ is in the
commutant $I \otimes \mathscr{Y}_t^{\prime}$ of the algebra $I \otimes %
\mathscr{Y}_t$, and is characterized by the requirement that
\begin{equation}
\mathbb{E}_{\Sigma} [ \tilde \pi_t(A \otimes X) I \otimes K ] = \mathbb{E}_{\Sigma} [
(A \otimes X(t))( I \otimes K) ]  \label{eq:c-exp-def-matrix}
\end{equation}
for all $K \in \mathscr{Y}_t$, see, e.g. \cite[Definition 3.13]{BHJ07}.

\begin{theorem}
\label{thm:matrix-filter} 
Assume $\alpha_0 \neq 0$. The conditional expectation $\tilde \pi_t(A \otimes X)$ defined by
(\ref{eq:matrix-c-exp-def}) for the extended system satisfies
\begin{eqnarray}
d\tilde \pi_t(A\otimes X) &=&\tilde  \pi_t( \mathcal{G}_t(A \otimes X) ) dt + \mathcal{H}%
_t(A\otimes X) dW(t),  
\label{pi_filter}
\end{eqnarray}
where
\begin{eqnarray}
\mathcal{H}_t(A \otimes X) &=& \tilde  \pi_t( A \otimes (XL+L^\ast X) ) - \tilde \pi_t( A
\otimes X) \pi_t( I \otimes (L+L^\ast) )  \notag \\
&& + \tilde \pi_t( ( A \sigma_+) \otimes XS) \nu \xi(t) + \tilde \pi_t( ( \sigma_-A )
\otimes S^\ast X) \nu^\ast \xi^\ast(t)  \notag \\
&& - \tilde \pi_t( A \otimes X) \tilde \pi_t( ( \sigma_+ \otimes S) \nu \xi(t) +( \sigma_-
\otimes S^\ast ) \nu^\ast \xi^\ast(t) )  \label{eq:matrix-filter}
\end{eqnarray}
and
\begin{equation}
dW(t) = dY(t) - \tilde \pi_t( I \otimes (L+L^\ast) + ( \sigma_+ \otimes S) \nu
\xi(t) + (\sigma_- \otimes S^\ast) \nu^\ast \xi^\ast(t) ) dt .
\label{eq:innovation-matrix}
\end{equation}
The process $W(t)$ defined by (\ref{eq:innovation-matrix}) is a $I \otimes %
\mathscr{Y}_t$ Wiener process  with respect to $\vert \Sigma \rangle$ and is called the \emph{%
innovations process}.
\end{theorem}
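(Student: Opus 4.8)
The plan is to derive the filter by the martingale (innovations) method, exploiting that every expectation is taken in the single fixed state $\vert \Sigma \rangle$, so that the reduction machinery of Lemma \ref{lemma:expectation-basic-extended} applies verbatim. First I would record the two ingredients. Since the ancilla is static and decoupled, the estimated quantity obeys $d(A \otimes X(t)) = A \otimes dj_t(X)$ with $dj_t(X)$ given by (\ref{eq:X-dyn}); the observed increment is $d(I \otimes Y(t)) = I\otimes(L(t)+L^\ast(t))\,dt + I\otimes(S(t)\,dB(t)+S^\ast(t)\,dB^\ast(t))$. Because $\tilde\pi_t(A\otimes X)$ is affiliated to the commutative algebra $I\otimes\mathscr{Y}_t$ generated by the diffusion-type observation $Y$ (for which $dY(t)\,dY(t)=dt$), I would posit the semimartingale representation $d\tilde\pi_t(A\otimes X) = \hat F_t\,dt + \hat G_t\,dY(t)$ with $\hat F_t,\hat G_t \in I\otimes\mathscr{Y}_t$, and determine the coefficients.

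To pin down $\hat F_t$ and $\hat G_t$ I would impose the defining property (\ref{eq:c-exp-def-matrix}) against a total family of observation-adapted test elements. Concretely, introduce the exponential $K_c(t)$ solving $dK_c(t)=c(t)K_c(t)\,dY(t)$ for arbitrary adapted $c$, and require $\frac{d}{dt}\mathbb{E}_\Sigma[\tilde\pi_t(A\otimes X)\,K_c(t)] = \frac{d}{dt}\mathbb{E}_\Sigma[(A\otimes X(t))\,K_c(t)]$ for all $c$. Expanding both products by the quantum It\^o rule (\ref{eq:Ito-sp}) and taking $\mathbb{E}_\Sigma$, the field stochastic integrals $dB,dB^\ast,d\Lambda$ are converted into ordinary time integrals carrying the $\sigma_\pm$ insertions and the weights $\nu\xi(t)$, $\nu^\ast\xi^\ast(t)$, $\vert\nu\xi(t)\vert^2$ by Lemma \ref{lemma:expectation-basic-extended}. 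Matching the coefficient of $c(t)$ yields $\hat G_t$, and matching the coefficient of $1$ checks that $\hat F_t$ reproduces the master-equation drift $\tilde\pi_t(\mathcal{G}_t(A\otimes X))$ (consistent with Theorem \ref{thm:master-matrix} recovered on taking $\mathbb{E}_\Sigma$). Substituting $dY(t)=dW(t)+\tilde\pi_t(\cdots)\,dt$ from (\ref{eq:innovation-matrix}) then recasts the equation into the stated form (\ref{pi_filter})--(\ref{eq:matrix-filter}), the gain $\mathcal{H}_t$ arising as $\hat G_t$ minus the term absorbed in passing from $dY$ to $dW$.

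Finally I would verify that $W(t)$ of (\ref{eq:innovation-matrix}) is an $I\otimes\mathscr{Y}_t$ Wiener process under $\vert \Sigma \rangle$ via L\'evy's characterization. The quadratic variation is immediate: from (\ref{eq:Ito-sp}) and $Z=B+B^\ast$ one has $dY(t)\,dY(t)=dt$, and since $W$ and $Y$ differ only by a finite-variation term, $dW(t)\,dW(t)=dt$. For the martingale property I would check $\mathbb{E}_\Sigma[dW(t)\mid I\otimes\mathscr{Y}_t]=0$, i.e. that the subtracted drift in (\ref{eq:innovation-matrix}) is exactly the conditional drift of $Y$: the $I\otimes(L+L^\ast)$ contribution is conditioned directly, while the conditional drift of the a priori martingale part $\int(S\,dB+S^\ast\,dB^\ast)$ is computed from Lemma \ref{lemma:Z-compensated-mtg} (equivalently Lemma \ref{lemma:expectation-basic-extended}), producing the $(\sigma_+\otimes S)\nu\xi(t)+(\sigma_-\otimes S^\ast)\nu^\ast\xi^\ast(t)$ terms.

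The main obstacle is the coefficient-matching step in the non-vacuum state: because the single-photon component of $\vert \Sigma \rangle$ does not factorize across the temporal decomposition (cf. (\ref{eq:factor-additive-1})), the usual vacuum reduction fails and the entire computation rests on applying Lemma \ref{lemma:expectation-basic-extended} correctly. The delicate bookkeeping is tracking whether each $\sigma_\pm$ multiplies $A$ on the left or the right, attaching the right $\nu\xi$, $\nu^\ast\xi^\ast$, $\vert\nu\xi\vert^2$ weight, and carrying the gauge ($d\Lambda$) contribution that supplies the $S$-dependence and generates the $S^\ast X S - X$, $XS$ and $S^\ast X$ structures. Keeping these consistent throughout is precisely what makes the filter close on the stated form; the hypothesis $\alpha_0\neq 0$ is what keeps $\nu=\alpha_1/\alpha_0$ finite so that Lemma \ref{lemma:expectation-basic-extended} is available.
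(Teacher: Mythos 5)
Your proposal is correct and follows essentially the same route as the paper's own proof: the characteristic-function method with exponential test processes $dc_f = f c_f\,dY$ adapted to $I\otimes\mathscr{Y}_t$, coefficient matching to identify the drift $\tilde\pi_t(\mathcal{G}_t(A\otimes X))$ and the gain $\mathcal{H}_t$, with Lemma \ref{lemma:expectation-basic-extended} supplying the $\sigma_\pm$ insertions and Lemma \ref{lemma:Z-compensated-mtg} plus L\'evy's theorem establishing that $W$ is an $I\otimes\mathscr{Y}_t$ Wiener process. No substantive differences from the paper's argument.
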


\begin{proof}
We follow the characteristic function method \cite{HSM05}, \cite{VPB93}, \cite{VPB92},
 whereby we postulate that the filter has
the form
\begin{equation}
d \tilde \pi_t( A \otimes X) = \mathcal{F}_t(A \otimes X) dt + \mathcal{H}_t(A
\otimes X) I\otimes dY(t) ,
\end{equation}
where $\mathcal{F}_t$ and $\mathcal{H}_t$ are to be determined.

Let $f$ be square integrable, and define a process $c_f$ by
\begin{equation}
dc_f(t) = f(t) c_f(t) dY(t), \ \ c_f(0)=1.
\end{equation}
Then $I \otimes c_f(t)$ is adapted to $I \otimes \mathscr{Y}_t$, and the
defining relation (\ref{eq:c-exp-def-matrix}) implies that
\begin{equation}
\mathbb{E}_\Sigma [ A \otimes (X(t) c_f(t) ) ] = \mathbb{E}_\Sigma [ \tilde \pi_t( A
\otimes X) I \otimes c_f(t) ) ]
\end{equation}
holds for all $f$. By calculating the differentials of both sides, taking
expectations and conditioning we obtain
\begin{eqnarray}
\mathbb{E}_\Sigma [ A \otimes (dX(t) c_f(t)) ] &=& \mathbb{E}_\Sigma [ (I
\otimes c_f(t) ) \tilde \pi_t( \mathcal{G}(A \otimes X) ) \\
&& + (I \otimes f(t) c_f(t) ) \{\tilde  \pi_t( A \otimes (XL+L^\ast X) )  \notag \\
&& + \tilde \pi_t( A \sigma_+ \otimes XS ) \nu \xi(t) + \tilde \pi_t( \sigma_-A \otimes
S^\ast X) \nu^\ast \xi^\ast(t) \} ]dt  \notag
\end{eqnarray}
and
\begin{eqnarray}
&& \mathbb{E}_\Sigma [ A \otimes (d\tilde \pi_t(A\otimes X) c_f(t)) ] 
\\
 &=& \mathbb{E}_\Sigma [ (I \otimes c_f(t) \{ \mathcal{F}_t(A\otimes X) + \mathcal{H}_t(A\otimes X) \tilde \pi_t( I \otimes
(L+L^\ast))\nonumber  \\
&& +\mathcal{H}_t(A\otimes X) \pi_t( ( \sigma_+ \otimes S) \nu \xi(t) + (\sigma_-
\otimes S^\ast ) \nu^\ast \xi^\ast(t) ) \}  \notag \\
&& + (I \otimes f(t) c_f(t) ) \{ \tilde \pi_t( A \otimes X) \tilde \pi_t( I \otimes
(L+L^\ast) ) + \mathcal{H}_t  \notag \\
&& + \tilde \pi_t( A \otimes X)\tilde  \pi_t( \sigma_+ \otimes S \nu \xi(t) + \sigma_-
\otimes S^\ast \nu^\ast \xi^\ast(t) ) \} ] dt.  \notag
\end{eqnarray}
Now equating coefficients of $c_f(t)$ and $f(t) c_f(t)$ we solve for $%
\mathcal{F}_t(A\otimes X)$ and $\mathcal{H}_t(A\otimes X)$ to obtain the filter equation.

We now prove the martingale property $\mathbb{E}_{\Sigma} [ I \otimes ( W(t) -
W(s) ) \, \vert \, I \otimes \mathscr{Y}_s ] = 0$, that is,
$
\mathbb{E}_{\Sigma} [ I \otimes ( W(t) - W(s) ) (I \otimes K) ] =0 
$
for all $K \in \mathscr{Y}_t$. Now
\begin{eqnarray}
&& \mathbb{E}_{\Sigma} [ I \otimes ( W(t) - W(s) ) (I \otimes K) ]  \notag \\
&=& \mathbb{E}_{\Sigma} [ \{ I \otimes (Y(t) - Y(s) )  \notag \\
&& - \int_s^t \pi_r( I \otimes (L+L^\ast) + \sigma_+ \otimes S \xi(r) +
\sigma_- \otimes S^\ast \xi^\ast(r) ) dr \} I \otimes K ]  \notag \\
&=& \mathbb{E}_{\Sigma} [ \{ I \otimes ( Y(t) - Y(s) )  \notag \\
&& - \int_s^t ( I \otimes (L(r)+L^\ast(r) ) + \sigma_+ \otimes S(r) \xi(r) +
\sigma_- \otimes S^\ast(r) \xi^\ast(r) ) dr \} I \otimes K ]  \notag \\
&=& \mathbb{E}_{\Sigma} [ \{ I \otimes (V(t) - V (s)) - \int_s^t ( \sigma_+
\otimes S(r) \nu \xi(r) + \sigma_- \otimes S^\ast(r) \nu^\ast \xi^\ast(r) )
dr \} I \otimes K ] =0.  \notag
\end{eqnarray}
To see that this last expression is zero, we make use of Lemma \ref
{lemma:Z-compensated-mtg},  the multiplicative factorization of the vacuum
state, the fact that $V(t)$ has zero expectation in the vacuum state, to
find that
\begin{eqnarray}
\mathbb{E}_{\Sigma} [ I \otimes (V(t) - V(s)) I \otimes  K ] &=& w_{11} \mathbb{E}_{11}[
(V(t) - V(s)) K ] + w_{00} \mathbb{E}_{00}[ (V(t) - V(s)) K ]  \notag \\
&=& w_{11} ( \int_s^t \mathbb{E}_{10}[ S(r)K] \xi(r)dr + \int_s^t \mathbb{E}%
_{01}[ S^\ast(r) K] \xi^\ast(r)dr ) ,  \notag
\end{eqnarray}
and
\begin{eqnarray}
&& \mathbb{E}_{\Sigma} [ \int_s^t ( \sigma_+ \otimes S(r)K) \xi(t) + \sigma_-
\otimes S^\ast (r) K \xi^\ast(t) ) dr ]  \notag \\
&=&w_{11} ( \int_s^t \mathbb{E}_{10}[ S(r) K] \xi(r) dr + \int_s^t \mathbb{E}%
_{01}[ S^\ast (r) K] \xi^\ast(r) dr ) .  \notag
\end{eqnarray}
Finally, since $dW(t) dW(t)=dt$, Levy's Theorem implies that $W(t)$ is a $\mathscr{Y}_t$ Wiener process.
This completes the proof.
\qed
\end{proof}

Notice the terms involving $\sigma_\pm$ in the filter (equation (\ref
{eq:matrix-filter})) and in the innovations process (equation (\ref
{eq:innovation-matrix})). These terms arise from expectations involving the
single photon state. Note that due to the martingale property of the
innovations process $W(t)$ we see that if we take the expected value of
equation (\ref{eq:matrix-filter}) we recover equation (\ref
{eq:matrix-master}), consistent with $\mathbb{E}_\Sigma [ \tilde \pi_t(A\otimes X) ] = \tilde \mu_t(A\otimes X)
$ and the definition of conditional expectation.

\subsection{Single Photon Quantum Filter}
\label{sec:single-photon-filter}

We return now to the main goal of the paper, namely the determination of the
quantum filter for the conditional state when the field is in the single
photon state, as stated in equation (\ref{eq:cond-exp}). As discussed
earlier, our strategy is to make use of the filtering results obtained in
Section \ref{sec:matrix-filter}   for the extended
system.

\begin{lemma}
Assume $\alpha_0 \neq 0$. Define the conditional quantities $\pi _{t}^{jk}(X)$ by
\begin{equation}
\pi^{jk}_t(X) = \frac{ w_{11} \tilde  \pi_t(  \vert e_j \rangle \langle e_k \vert \otimes X) }{ w_{jk}  \tilde \pi_t(  \vert e_1 \rangle \langle e_1 \vert \otimes I) } .
\label{eq:pi-jk-def}
\end{equation}
where $\tilde \pi_t(A \otimes X)$ is the conditional state for the extended system defined by (\ref{eq:matrix-c-exp-def}).
Then for
all $K \in \mathscr{Y}_{t}$ we have
\begin{equation}
\mathbb{E}_{11}[\pi _{t}^{jk}(X)\,K]=\mathbb{E}%
_{jk}[j_{t}(X)K].  \label{eq:pi-jk-c-exp-0}
\end{equation}
\end{lemma}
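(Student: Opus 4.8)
The plan is to express both sides of (\ref{eq:pi-jk-c-exp-0}) as expectations in the single reference state $\vert\Sigma\rangle$, and then to separate the argument into a ``numerator'' identity, obtained by unfolding the superposition state, and a ``denominator'' cancellation, obtained from the module property of the conditional expectation $\tilde\pi_t$.

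First I would record the auxiliary identity
\[
\mathbb{E}_{\Sigma}\big[\tilde\pi_t(\vert e_j\rangle\langle e_k\vert\otimes X)\,(I\otimes K)\big]=w_{jk}\,\mathbb{E}_{jk}[j_t(X)K],\qquad K\in\mathscr{Y}_t .
\]
By the characterizing relation (\ref{eq:c-exp-def-matrix}) the left-hand side equals $\mathbb{E}_{\Sigma}[(\vert e_j\rangle\langle e_k\vert\otimes X(t))(I\otimes K)]$. Expanding $\vert\Sigma\rangle=\alpha_1\vert e_1\eta 1_\xi\rangle+\alpha_0\vert e_0\eta 0\rangle$ from (\ref{eq:super}), and writing $\vert\psi_1\rangle=\vert\eta 1_\xi\rangle$, $\vert\psi_0\rangle=\vert\eta 0\rangle$, the $\mathbb{C}^2$ matrix elements contract via $\langle e_{j'}\vert e_j\rangle\langle e_k\vert e_{k'}\rangle=\delta_{j'j}\delta_{k'k}$, collapsing the double sum to the single term $w_{jk}\langle\psi_j\vert X(t)K\vert\psi_k\rangle=w_{jk}\mathbb{E}_{jk}[j_t(X)K]$ by the definition of the cross-expectations. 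This step is routine.

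The substantive step is to turn the left-hand side of (\ref{eq:pi-jk-c-exp-0}) into a $\Sigma$-expectation and cancel the denominator in (\ref{eq:pi-jk-def}). Since $\pi_t^{jk}(X)K$ lies in the commutative algebra $\mathscr{Y}_t$, the elementary identity $\mathbb{E}_{11}[G]=w_{11}^{-1}\mathbb{E}_{\Sigma}[(\vert e_1\rangle\langle e_1\vert\otimes I)(I\otimes G)]$ (again by expanding $\vert\Sigma\rangle$) gives
\[
\mathbb{E}_{11}[\pi_t^{jk}(X)K]=w_{11}^{-1}\,\mathbb{E}_{\Sigma}\big[(\vert e_1\rangle\langle e_1\vert\otimes I)(I\otimes \pi_t^{jk}(X)K)\big].
\]
Now I would use that $\tilde\pi_t(\vert e_1\rangle\langle e_1\vert\otimes I)$ is, by (\ref{eq:c-exp-def-matrix}), the conditional expectation of the projection $\vert e_1\rangle\langle e_1\vert\otimes I$: for any $G'\in\mathscr{Y}_t$ one may replace $\vert e_1\rangle\langle e_1\vert\otimes I$ by $\tilde\pi_t(\vert e_1\rangle\langle e_1\vert\otimes I)$ inside $\mathbb{E}_{\Sigma}[\,\cdot\,(I\otimes G')]$. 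Taking $G'=\pi_t^{jk}(X)K$ and substituting the definition (\ref{eq:pi-jk-def}), the factor $\tilde\pi_t(\vert e_1\rangle\langle e_1\vert\otimes I)$ multiplies the identical denominator of $\pi_t^{jk}(X)$ and cancels it (all quantities commute, lying in $\mathscr{Y}_t$), leaving $w_{11}^{-1}\cdot\frac{w_{11}}{w_{jk}}\,\mathbb{E}_{\Sigma}[\tilde\pi_t(\vert e_j\rangle\langle e_k\vert\otimes X)(I\otimes K)]$. Combining with the auxiliary identity produces $\mathbb{E}_{jk}[j_t(X)K]$, as claimed.

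I expect the denominator cancellation to be the only delicate point. It relies on identifying elements of $I\otimes\mathscr{Y}_t$ with classical random variables in a commutative von Neumann algebra, so that the ratio in (\ref{eq:pi-jk-def}) and its re-multiplication by $\tilde\pi_t(\vert e_1\rangle\langle e_1\vert\otimes I)$ are meaningful and commute freely, together with the projection property of $\tilde\pi_t$. Implicitly the identity is read for those indices with $w_{jk}\neq 0$ and with $w_{11}\neq 0$, in keeping with the convention following (\ref{eq:master-mu-rep2}); on the null set where the conditional projection vanishes the numerator vanishes too, so the ratio is unambiguous. The hypothesis $\alpha_0\neq 0$ is what makes the extended-system filter of Theorem \ref{thm:matrix-filter}, hence $\tilde\pi_t$, available in the first place.
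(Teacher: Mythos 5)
Your proposal is correct and follows essentially the same route as the paper's own proof: rewrite $\mathbb{E}_{11}[\pi_t^{jk}(X)K]$ as $w_{11}^{-1}\mathbb{E}_\Sigma[(\vert e_1\rangle\langle e_1\vert\otimes I)(I\otimes\pi_t^{jk}(X)K)]$, insert the conditional expectation $\tilde\pi_t(\vert e_1\rangle\langle e_1\vert\otimes I)$ via the characterizing relation, cancel the denominator of (\ref{eq:pi-jk-def}) inside the commutative algebra, and finish with the defining property of $\tilde\pi_t$ together with the expansion of $\vert\Sigma\rangle$. No gaps.
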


\begin{proof}
We have
\begin{eqnarray*}
\mathbb{E}_{11}[ \pi^{jk}_t(X) K ] &=&  \frac{1}{w_{11}} \mathbb{E}_{\Sigma} [  \vert e_1 \rangle \langle e_1 \vert \otimes (\pi^{jk}_t(X)K) ]
\\
&=&
\frac{1}{w_{11}} \mathbb{E}_{\Sigma} [  \pi_t( \vert e_1 \rangle \langle e_1 \vert \otimes I)  (I \otimes  \pi^{jk}_t(X)K) ]
\\
&=&
\frac{1}{w_{jk}} \mathbb{E}_{\Sigma} [  \pi_t( \vert e_j \rangle \langle e_k \vert  \otimes X)  (I \otimes  K) ] 
\\
&=&
\frac{1}{w_{jk}} \mathbb{E}_{\Sigma} [ (  \vert e_j \rangle \langle e_k \vert \otimes j_t(X))  (I \otimes  K) ] 
\\
&=&
\mathbb{E}_{jk}[ j_t(X) K ]
\end{eqnarray*}
as required.
\qed
\end{proof}

 We  can now present our main theorem for the quantum filter
for the single photon field state.

\begin{theorem}
\label{thm:main} The quantum filter for the conditional expectation with
respect to the single photon field is given in the Heisenberg picture by
\begin{equation}
\hat X(t)= \mathbb{E}_{11}[ X(t) \, \vert \, \mathscr{Y}_t ] = \pi^{11}_t(X),
\end{equation}
where $\pi^{11}_t(X)$ is defined by (\ref{eq:pi-jk-def}) (for $j=k=1$),  and is given by the
system of equations
\begin{eqnarray}
d\pi^{11}_t (X) &=& (\pi^{11}_t(\mathcal{L}(X)) + \pi^{01}_t( S^\ast [X,L] )
\xi^\ast(t) + \pi^{10}_t( [L^\ast, X] S ) \xi(t)  \notag \\
&& + \pi^{00}_t( S^\ast X S - X) \vert \xi(t) \vert^2)dt  \notag \\
&& + ( \pi^{11}_t( XL + L^\ast X) + \pi^{01}_t(S^\ast X) \xi^\ast(t) +
\pi^{10}_t( XS) \xi(t)  \notag \\
&& - \pi^{11}_t(X) ( \pi^{11}_t(L+L^\ast) + \pi^{01}_t(S) \xi(t) +
\pi^{10}_t(S^\ast) \xi^\ast(t) ) )dW(t),  
\label{eq:pi-dyn-a-11} 
\\
d\pi^{10}_t (X) &=& ( \pi^{10}_t(\mathcal{L}(X)) + \pi^{00}_t( S^\ast [X, L]
) \xi^\ast(t) )dt  \notag \\
&& + ( \pi^{10}_t( XL + L^\ast X) + \pi^{00}_t(S^\ast X) \xi^\ast(t)  \notag
\\
&& - \pi^{10}_t(X) ( \pi^{11}_t(L+L^\ast) + \pi^{01}_t(S) \xi(t) +
\pi^{10}_t(S^\ast) \xi^\ast(t) ) )dW(t),  
\label{eq:pi-dyn-a-10} 
\\
d\pi^{01}_t (X) &=& ( \pi^{01}_t(\mathcal{L}(X)) + \pi^{00}_t( [L^\ast, X] S
) \xi(t) )dt  \notag \\
&& + ( \pi^{01}_t( XL + L^\ast X) + \pi^{00}_t( XS) \xi(t)  \notag \\
&& - \pi^{01}_t(X) ( \pi^{11}_t(L+L^\ast) + \pi^{01}_t(S) \xi(t) +
\pi^{10}_t(S^\ast) \xi^\ast(t) ) )dW(t),  
\label{eq:pi-dyn-a-01} 
\\
d\pi^{00}_t (X) &=& \pi^{00}_t(\mathcal{L}(X)) dt + ( \pi^{00}_t( XL +
L^\ast X)  \notag \\
&& - \pi^{00}_t(X) ( \pi^{11}_t(L+L^\ast) + \pi^{01}_t(S) \xi(t) +
\pi^{10}_t(S^\ast) \xi^\ast(t) ) )dW(t) . 
 \label{eq:pi-dyn-a-00}
\end{eqnarray}
Here, the innovations process $W(t)$ is a $\mathscr{Y}_t$ Wiener process with
respect to the single photon state and is defined by
\begin{equation}
dW(t) = dY(t) - ( \pi^{11}_t( L+L^\ast) + \pi^{10}_t(S) \xi(t) +
\pi^{01}_t(S^\ast) \xi^\ast(t) ) dt .  \label{eq:innovation-11}
\end{equation}
The initial conditions are
\begin{equation}
\pi^{11}_0(X)= \pi^{00}_0(X)= \langle \eta, X \eta \rangle, \ \
\pi^{10}_0(X)= \pi^{01}_0(X)=0.
\label{eq:pi-jk--initial}
\end{equation}
\end{theorem}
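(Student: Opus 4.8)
The plan is to establish the two assertions of the theorem separately: first that $\pi^{11}_t(X)$ is indeed the conditional expectation $\mathbb{E}_{11}[X(t)\,|\,\mathscr{Y}_t]$, and second that the quantities $\pi^{jk}_t(X)$ defined in (\ref{eq:pi-jk-def}) obey the coupled system (\ref{eq:pi-dyn-a-11})--(\ref{eq:pi-dyn-a-00}). The first assertion is immediate from the preceding Lemma: specializing (\ref{eq:pi-jk-c-exp-0}) to $j=k=1$ gives $\mathbb{E}_{11}[\pi^{11}_t(X)K]=\mathbb{E}_{11}[X(t)K]$ for every $K\in\mathscr{Y}_t$, and $\pi^{11}_t(X)$ is affiliated to $\mathscr{Y}_t$ since it is built out of the extended conditional expectations $\tilde\pi_t(\,\cdot\,)\in I\otimes\mathscr{Y}_t$ through the ratio (\ref{eq:pi-jk-def}). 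By the characterizing property (\ref{eq:c-exp-def}) of the quantum conditional expectation, these two facts identify $\pi^{11}_t(X)$ with $\hat X(t)$.

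For the second assertion I would translate the extended filter of Theorem \ref{thm:matrix-filter} through the definition (\ref{eq:pi-jk-def}). Introduce the unnormalized quantities $\varpi^{jk}_t(X)=\tilde\pi_t(|e_j\rangle\langle e_k|\otimes X)$ and the normalizer $q_t=\tilde\pi_t(|e_1\rangle\langle e_1|\otimes I)$, so that $\pi^{jk}_t(X)=(w_{11}/w_{jk})\,\varpi^{jk}_t(X)/q_t$. The SDEs for the $\varpi^{jk}_t(X)$ and for $q_t$ are read off from (\ref{pi_filter})--(\ref{eq:matrix-filter}) by setting $A=|e_j\rangle\langle e_k|$ and using the elementary products $|e_j\rangle\langle e_k|\sigma_+=\delta_{k1}|e_j\rangle\langle e_0|$, $\sigma_-|e_j\rangle\langle e_k|=\delta_{j1}|e_0\rangle\langle e_k|$ and $\sigma_-|e_j\rangle\langle e_k|\sigma_+=\delta_{j1}\delta_{k1}|e_0\rangle\langle e_0|$; this is exactly the mechanism by which the different $(j,k)$ sectors couple, as already seen for the master equation (\ref{eq:matrix-master})$\to$(\ref{eq:rho-dyn-a-11})--(\ref{eq:rho-dyn-a-00}). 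Itô's quotient rule applied to $\varpi^{jk}_t(X)/q_t$ then yields (\ref{eq:pi-dyn-a-11})--(\ref{eq:pi-dyn-a-00}). The weight ratios combine with the $\nu$-powers carried by $\mathcal{G}_t$ and $\mathcal{H}_t$ via $w_{11}/w_{10}=\nu$, $w_{11}/w_{01}=\nu^\ast$ and $w_{11}/w_{00}=|\nu|^2$ (from (\ref{w_jk}) and $\nu=\alpha_1/\alpha_0$), so that every factor of $\nu$ cancels and the final equations are manifestly independent of the embedding weights $\alpha_0,\alpha_1$.

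The main obstacle is precisely this quotient computation. Unlike the master equation, where the denominator $\tilde\mu_t(|e_1\rangle\langle e_1|\otimes I)=w_{11}$ is constant and the reduction is purely algebraic, here $q_t$ is a genuinely fluctuating $I\otimes\mathscr{Y}_t$ martingale: its drift vanishes because $\mathcal{L}(I)=0$ and $S^\ast S=I$, but its diffusion coefficient does not. Consequently $d(1/q_t)$ contributes an Itô correction of order $(dq_t)^2/q_t^3$ together with a cross-variation $d\varpi^{jk}_t\cdot d(1/q_t)$, and these must be collected carefully. The delicate point is that the extended filter is driven by the $\mathbb{E}_\Sigma$-innovation (\ref{eq:innovation-matrix}), whereas the target equations are driven by the single-photon innovation (\ref{eq:innovation-11}); upon substituting $dY$ and reorganizing the drift, the $\varpi^{00}$-contribution and the Itô corrections should conspire to convert the former innovation into the latter, leaving the mean-subtracted diffusion coefficients in the stated form.

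Finally I would verify independently that the process $W$ of (\ref{eq:innovation-11}) is an $\mathscr{Y}_t$-Wiener process with respect to the single photon state, exactly as in the proof of Theorem \ref{thm:matrix-filter}: the martingale property $\mathbb{E}_{11}[(W(t)-W(s))K]=0$ for $K\in\mathscr{Y}_s$ follows from Lemma \ref{lemma:Z-compensated-mtg} together with the characterization (\ref{eq:pi-jk-c-exp-0}) applied to the compensator $\int_s^t(\pi^{11}_r(L+L^\ast)+\pi^{10}_r(S)\xi(r)+\pi^{01}_r(S^\ast)\xi^\ast(r))\,dr$, and $dW\,dW=dt$ then gives the Wiener property by Lévy's theorem. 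The initial conditions (\ref{eq:pi-jk--initial}) follow by evaluating (\ref{eq:pi-jk-def}) at $t=0$, where $\mathscr{Y}_0$ is trivial so that $\pi^{jk}_0(X)=\mu^{jk}_0(X)$, with $\mu^{jk}_0$ as given in Theorem \ref{thm:master}.
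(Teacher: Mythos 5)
Your proposal follows essentially the same route as the paper's own proof: identify $\pi^{11}_t(X)$ with the conditional expectation via the preceding lemma, obtain the coupled system by applying the It\^{o} quotient rule to the ratio (\ref{eq:pi-jk-def}) using the extended filter (\ref{pi_filter}), and verify the Wiener property of $W$ via Lemma \ref{lemma:Z-compensated-mtg} and L\'{e}vy's theorem. If anything, you spell out more than the paper does --- in particular the fluctuating normalizer $q_t$, the It\^{o} corrections, and the conversion from the extended innovations to the single-photon innovations, which the paper compresses into ``follows from the definition, the filter, and the It\^{o} rule'' --- and your observation that the resulting equations are independent of $\alpha_0,\alpha_1$ matches the paper's remark used to cover the degenerate weights.
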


\begin{proof}
Suppose first that $\alpha_0 \neq 0$. Setting $j=k=1$ in equation (\ref{eq:pi-jk-c-exp-0}) above, and noting that $%
K\in \mathscr{Y}_{s}$ was otherwise arbitrary, we deduce that $\pi
_{t}^{11}(X)$ is the desired conditional expectation for the single photon
field state, as characterized by equation (\ref{eq:c-exp-def}). 
The differential equations (\ref{eq:pi-dyn-a-11})-(\ref{eq:pi-dyn-a-00}) follow  from the definition (\ref{eq:pi-jk-def}), the filter (\ref{pi_filter}) for the extended system, and the Ito rule. Next, we note that the coefficients of the QSDEs (\ref{eq:pi-dyn-a-11})-(\ref{eq:pi-dyn-a-00}), the initial conditions, and $Y_t$ do not depend on $\alpha_0$ and $\alpha_1$. Hence, the solutions $\pi_t^{jk}(X)$ of this system of  equations are independent of $\alpha_0$ and $\alpha_1$. Therefore, $\pi_t^{jk}(X)$ can be defined for $\alpha_j \in \{0,1\}$, $j=0,1$, and is in fact identical for all $0 \leq |\alpha_0|,|\alpha_1| \leq 1$.

We now prove that $W(t)$ is a $\mathscr{Y}_t$-martingale, that is, $\mathbb{E}_{11}[W(t)-W(s)\,|\mathscr{Y}_{s}]=0$.
To this end, let $K \in \mathscr{Y}_{s}$. Then
\begin{eqnarray*}
\mathbb{E}_{11}[\{W(t)-W(s)\}K] &=&\mathbb{E}_{11}[\{Y(t)-Y(s)  \notag \\
&&-\int_{s}^{t}(\pi _{r}^{11}((L+L^{\ast })+\pi _{r}^{10}(S)\xi (t)+\pi
_{r}^{01}(S^{\ast })\xi ^{\ast }(t))dr\}K]  \notag \\
&=&\mathbb{E}_{11}[\{\int_{s}^{t}(L(r)+L^{\ast }(r))dr+V(t)-V(s)  \notag \\
&&-\int_{s}^{t}(\pi _{r}^{11}((L+L^{\ast })+\pi _{r}^{10}(1)\xi (t)+\pi
_{r}^{01}(1)\xi ^{\ast }(t))dr\}K]  \notag \\
&=&\mathbb{E}_{11}[\{\int_{s}^{t}(L(r)+L^{\ast }(r)-\pi _{r}^{11}((L+L^{\ast
}))dr\}K]  \notag \\
&&+\mathbb{E}_{11}[\{V(t)-V(s)-\int_{s}^{t}(\pi _{r}^{10}(S)\xi (t)+\pi
_{r}^{01}(S^{\ast })\xi ^{\ast }(t))dr\}K]
\end{eqnarray*}
however, this vanishes from (\ref{eq:pi-jk-c-exp-0}), Lemma \ref{lemma:Z-compensated-mtg},
and
\begin{eqnarray*}
&& \mathbb{E}_{11}[\{\int_{s}^{t}(\pi _{r}^{10}(S)\xi (t)+\pi
_{r}^{01}(S^{\ast })\xi ^{\ast }(t))dr\}K]
\\
&& =\mathbb{E}_{10}[\int_{s}^{t}S(r)\xi (r)dr K]+\mathbb{E}_{01}[%
\int_{s}^{t}S^{\ast }(r)\xi ^{\ast }(r)dr K ].
\end{eqnarray*}

Finally, since $dW(t) dW(t)=dt$, Levy's Theorem implies that $W(t)$ is a $\mathscr{Y}_t$ Wiener process.
\qed
 \end{proof}

\subsection{Reference Method for Filtering in the Extended System}
\label{sec:extended-filter}

The reference method is one of the standard approaches to filtering theory,
with its origins in the work of Duncan, Mortensen, Zakai, Holevo and
Belavkin, see \cite{RE82}, \cite{AH91}, \cite{VPB92}, \cite{BH06}, \cite{BHJ07}. In this section we
apply this approach, as described in \cite[sec. 6]{BHJ07}, to the filtering
problem in the extended system, giving an independent derivation of the fundamental filtering equation.

Our first step is the following.

\begin{lemma}
\label{lemma:matrix-F-rep} Assume $\alpha_0 \neq 0$. Then we have
\begin{equation}
\mathbb{E}_{\Sigma}[ (A \otimes X(t))  ] = \mathbb{E}_{\Sigma}[
F^\ast(t) (A \otimes X) F(t) ]
\end{equation}
  where $F(t) \in (I\otimes %
\mathscr{Z}_t)^{\prime}$ is given by
\begin{eqnarray}
dF(t) &=& ( G_0(t) dt +G_1(t)dZ(t)) F(t) ,  \label{dF}
\end{eqnarray}
$F(0)=I$, and 
\begin{eqnarray}
G_0(t) &=& -I \otimes ( \frac{1}{2} L^\ast L + iH) - \sigma_+ \otimes
(L+L^\ast S) \nu \xi(t) , \\
G_1(t) &=& I \otimes L + \sigma_+ \otimes (S-I) \nu \xi(t) .
\end{eqnarray}
\end{lemma}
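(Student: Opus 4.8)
The plan is to identify the two bounded linear functionals $\tilde\mu_t(A\otimes X)=\mathbb{E}_\Sigma[A\otimes X(t)]$ and $\Phi_t(A\otimes X):=\mathbb{E}_\Sigma[F^\ast(t)(A\otimes X)F(t)]$ by showing that, as functions of $t$, both solve the \emph{same} linear master equation (\ref{eq:matrix-master}) of Theorem \ref{thm:master-matrix} with the \emph{same} initial datum. Since $F(0)=I$ and $X(0)=X$, we have $\Phi_0(A\otimes X)=\langle\Sigma\vert A\otimes X\vert\Sigma\rangle=\tilde\mu_0(A\otimes X)$, so once the generators are shown to agree, uniqueness for this linear system (note $\mathcal{G}_t$ sends operators of the form $A\otimes X$ to finite sums of operators of the same form) forces $\Phi_t=\tilde\mu_t$, which is the assertion. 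I would also record the adaptedness claim $F(t)\in(I\otimes\mathscr{Z}_t)'$ separately: the QSDE (\ref{dF}) is driven only by $dt$ and $dZ$, with coefficients $G_0(t),G_1(t)$ acting on the ancilla and system (and the scalar $\xi(t)$) but trivially on the field; since the $Z(s)$ mutually commute, $F(t)$ lies in the algebra generated by $I\otimes\mathscr{Z}_t$ together with ancilla/system operators, all of which commute with $I\otimes Z(s)$ for $s\le t$.

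First I would differentiate $F^\ast(t)(A\otimes X)F(t)$ by the quantum Ito rule. The only second-order term is $dF^\ast\,(A\otimes X)\,dF$, and since $dZ=dB+dB^\ast$ the Ito table (\ref{eq:Ito-sp}) gives $dZ\,dZ=dB\,dB^\ast=dt$. This produces a $dt$-part with coefficient $F^\ast\big(G_0^\ast(A\otimes X)+(A\otimes X)G_0+G_1^\ast(A\otimes X)G_1\big)F$ together with a $dZ$-part whose coefficient is $C_Z:=F^\ast\big(G_1^\ast(A\otimes X)+(A\otimes X)G_1\big)F$. Taking $\mathbb{E}_\Sigma$ and splitting $dZ=dB+dB^\ast$, I would invoke Lemma \ref{lemma:expectation-basic-extended} (this is where the hypothesis $\alpha_0\neq0$ enters, through $\nu=\alpha_1/\alpha_0$) to convert the $dB$ and $dB^\ast$ integrals into ordinary integrals, picking up the two extra terms $\nu\xi(t)\,\mathbb{E}_\Sigma[C_Z(\sigma_+\otimes I)]$ and $\nu^\ast\xi^\ast(t)\,\mathbb{E}_\Sigma[(\sigma_-\otimes I)C_Z]$.

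The structural observation that tames these last two terms is that $G_0(t)$ and $G_1(t)$ are block upper-triangular in the basis $\{e_1,e_0\}$ with \emph{equal} diagonal blocks (both diagonal blocks are generated by $-(\frac{1}{2}L^\ast L+iH)\,dt+L\,dZ$). Hence $F(t)$ is itself upper-triangular with common diagonal block $V(t)$, the vacuum reference process, which yields the commutation relations $[\sigma_+\otimes I,F(t)]=0$ and $[\sigma_-\otimes I,F^\ast(t)]=0$. Using $F(\sigma_+\otimes I)=(\sigma_+\otimes I)F$ and $(\sigma_-\otimes I)F^\ast=F^\ast(\sigma_-\otimes I)$, I can rewrite both extra terms in the canonical form $F^\ast(\cdots)F$, so that altogether $\frac{d}{dt}\Phi_t(A\otimes X)=\mathbb{E}_\Sigma[F^\ast\,\mathcal{K}_t(A\otimes X)\,F]=\Phi_t(\mathcal{K}_t(A\otimes X))$, where $\mathcal{K}_t$ is an explicit map built from $G_0,G_1$ and $\sigma_\pm$.

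The main obstacle, and the bulk of the labor, is the purely algebraic verification that $\mathcal{K}_t(A\otimes X)=\mathcal{G}_t(A\otimes X)$ as operators. I would substitute $G_0=-I\otimes(\frac{1}{2}L^\ast L+iH)-\sigma_+\otimes(L+L^\ast S)\nu\xi(t)$ and $G_1=I\otimes L+\sigma_+\otimes(S-I)\nu\xi(t)$, and collect terms by their ancilla grading (number of $\sigma$ factors) and by powers of $\nu\xi$. The $\sigma$-free part collapses to $A\otimes\mathcal{L}(X)$ (using $\mathcal{L}(X)=L^\ast XL-\frac{1}{2}L^\ast LX-\frac{1}{2}XL^\ast L-i[X,H]$); the single-$\sigma_+$ part, after the cross terms cancel, yields $(A\sigma_+)\otimes[L^\ast,X]S\,\nu\xi(t)$; its adjoint counterpart gives the single-$\sigma_-$ term; and the double-$\sigma$ part (where $\sigma_\pm\sigma_\pm=0$ kills several contributions) telescopes to $(\sigma_-A\sigma_+)\otimes(S^\ast XS-X)\vert\nu\xi(t)\vert^2$. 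This reproduces $\mathcal{G}_t$ exactly, so $\Phi_t$ satisfies (\ref{eq:matrix-master}); comparison with Theorem \ref{thm:master-matrix} and the common initial condition then closes the argument by uniqueness. I expect the sign bookkeeping in this last matching, especially in the double-$\sigma$ cancellations, to be the place where care is most needed.
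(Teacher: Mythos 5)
Your proposal is correct and follows essentially the same route as the paper: both differentiate $F^\ast(t)(A\otimes X)F(t)$ by the Ito rule, apply Lemma \ref{lemma:expectation-basic-extended} to the $dZ$-term (using the commutation of $F(t)$ with $\sigma_+\otimes I$, which in the paper is built in by assuming $G_i(t)=I\otimes g_{i0}(t)+\sigma_+\otimes g_{i1}(t)$), and match the resulting generator with $\mathcal{G}_t$ so that the claim follows from Theorem \ref{thm:master-matrix} and the common initial condition. The only difference is presentational --- the paper \emph{derives} $G_0,G_1$ by solving the four matching identities for the unknown blocks $g_{ij}$, whereas you take the stated coefficients and \emph{verify} the identity directly --- and your one slip is terminological: the common diagonal block of $F(t)$ is the vacuum reference (Kallianpur--Striebel) process, not the process the paper denotes $V(t)$.
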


\begin{proof}
Let us suppose that $F\left( t\right) $ satisfies (\ref{dF}) with
coefficients $G_{0}\left( t\right) $, $G_{1}\left( t\right) $ which both
commute with $\sigma _{+}\otimes I$, that is,
\begin{equation*}
G_{i}\left( t\right) =I\otimes g_{i0}\left( t\right) +\sigma _{+}\otimes
g_{i1}\left( t\right) .
\end{equation*}
Then
\begin{equation*}
\mathbb{E}_{\Sigma }\left[ d\left\{ F\left( t\right) ^{\ast }A \otimes XF\left( t\right)
\right\} \right] = \mathbb{E}_{\Sigma }\left[ F\left( t\right) ^{\ast }\left\{
T_t (A \otimes X) dt +Q_t (A \otimes X)dZ(t) \right\} F\left( t\right) \right] ,
\end{equation*}
where
\begin{eqnarray}
T_{t}\left( A \otimes X\right) &=&G_1 (t)^{\ast }A \otimes XG_1 (t)+A \otimes XG_0 (t)+G_0 (t)^{\ast }A \otimes X, \\
Q_{t}\left( A \otimes X\right) &=&A \otimes XG_1 (t)+G_1 (t)^{\ast }A \otimes X.
\end{eqnarray}
Using lemma \ref{lemma:expectation-basic-extended} we see that this equals
\begin{equation*}
\mathbb{E}_{\Sigma}[F\left( t\right) ^{\ast }\{ T_t (A \otimes X) + Q_t (A \otimes X) \nu \xi
\left( t\right) \left( \sigma _{+}\otimes I\right) +\nu ^{\ast }\xi \left(
t\right) ^{\ast }\left( \sigma _{-}\otimes I\right) Q_t (A \otimes X) \}F\left(
t\right) ]dt.
\end{equation*}
We now require that this equals $\mathbb{E}_{\Sigma }\left[ \mathcal{G}%
_t\left( A \otimes X\right) \right] dt$. This implies the four identities
\begin{eqnarray}
g_{11}{}^{\ast }Xg_{11}+\nu \xi g_{11}^{\ast }X+\nu ^{\ast }\xi ^{\ast
}Xg_{11} &=&|\nu \xi |^{2}(S^{\ast }XS-X)  \label{id1} \\
g_{10}^{\ast }Xg_{11}+Xg_{01}+\nu \xi g_{10}^{\ast }X+\nu \xi Xg_{10} &=&\nu
\xi \lbrack L^{\ast },X]S  \label{id2} \\
g_{11}^{\ast }Xg_{10}+g_{01}{}^{\ast }X+\nu ^{\ast }\xi ^{\ast
}g_{10}{}^{\ast }X+\nu ^{\ast }\xi ^{\ast }Xg_{10} &=&\nu ^{\ast }\xi ^{\ast
}S^{\ast }[X,L]  \label{id3} \\
g_{10}^{\ast }Xg_{10}+g_{00}^{\ast }X+Xg_{00} &=&\mathcal{L}\left( X\right) .
\label{id4}
\end{eqnarray}
The first identity (\ref{id1}) is satisfied if $g_{11}=\nu \xi (S-1)$.
Substituting $X=I$ into (\ref{id2}) we deduce that $g_{01}=-\nu \xi
g_{10}^{\ast }S-\nu \xi g_{10}$ and thus
\begin{equation*}
g_{10}^{\ast }Xg_{11}+Xg_{01}+\nu \xi g_{10}^{\ast }X+\nu \xi Xg_{10}=\nu
\xi \lbrack g_{10}^{\ast },X]S.
\end{equation*}
Therefore (\ref{id2}) is satisfied if $g_{10}=L$, and consequently $%
g_{01}=-\nu \xi \left( L+L^{\ast }S\right) $. It then follows that (\ref{id3}%
) will be automatically satisfied, while (\ref{id4}) then only requires that
$g_{00}=-(\frac{1}{2}L^{\ast }L+iH)$ in order to obtain the Lindblad
generator $\mathcal{L}\left( X\right) $.

This leads us precisely to the coefficients $G_i (t)$ stated in the lemma,
and the identity
\begin{equation}
T_t (A \otimes X) + Q_t (A \otimes X) \nu \xi \left( t\right) \left( \sigma _{+}\otimes I\right)
+\nu ^{\ast }\xi \left( t\right) ^{\ast }\left( \sigma _{-}\otimes I\right)
Q_t (A \otimes X) = \mathcal{G}_t\left( A \otimes X\right)
\end{equation}

\end{proof}

The following Bayes' relation is proven along similar lines to \cite[Theorem
6.2]{BHJ07}.

\begin{lemma}
\label{lemma:matrix-bayes} For $\alpha_0 \neq 0$, define
\begin{eqnarray}
\varsigma_t( A \otimes X) = (I \otimes U^\ast(t)) \mathbb{E}_{\Sigma} [
F^\ast(t) (A \otimes X) F(t) \, \vert \, I \otimes \mathscr{Z}_t ] (I
\otimes U(t)).
\label{eq:varsigma-def}
\end{eqnarray}
Then
\begin{equation}
\pi_t(A \otimes X) = \frac{\varsigma_t(A \otimes X) }{\varsigma_t( I \otimes
I)} .  
\label{eq:matrix-bayes}
\end{equation}
\end{lemma}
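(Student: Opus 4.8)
\emph{Plan.} The identity (\ref{eq:matrix-bayes}) is the quantum Kallianpur--Striebel (Bayes) formula: it asserts that the normalized conditional expectation $\pi_t(A\otimes X)=\mathbb{E}_\Sigma[A\otimes X(t)\,\vert\,I\otimes\mathscr{Z}_t]$... more precisely the quantity $\tilde\pi_t(A\otimes X)=\mathbb{E}_\Sigma[A\otimes X(t)\,\vert\,I\otimes\mathscr{Y}_t]$ of (\ref{eq:matrix-c-exp-def}) is recovered from the \emph{unnormalized} object $\varsigma_t$ by dividing out its value on the identity. I would prove it in the reference (input) picture and then rotate back by $I\otimes U(t)$, following the scheme of \cite[Theorem 6.2]{BHJ07}. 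The starting point is the covariance of the quantum conditional expectation under unitary conjugation of both the state and the conditioning algebra: since $A\otimes X(t)=(I\otimes U^\ast(t))(A\otimes X)(I\otimes U(t))$ and $I\otimes\mathscr{Y}_t=(I\otimes U^\ast(t))(I\otimes\mathscr{Z}_t)(I\otimes U(t))$, a short check of the two defining requirements of conditional expectation gives
\[ \tilde\pi_t(A\otimes X)=(I\otimes U^\ast(t))\,\mathbb{E}_{\Sigma(t)}[A\otimes X\,\vert\,I\otimes\mathscr{Z}_t]\,(I\otimes U(t)), \]
where $\vert\Sigma(t)\rangle=(I\otimes U(t))\vert\Sigma\rangle$. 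This reduces the claim to conditioning the \emph{static} operator $A\otimes X$ onto the fixed commutative algebra $I\otimes\mathscr{Z}_t$ in the evolved state $\vert\Sigma(t)\rangle$.

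\emph{Reference-state reformulation.} The next step re-expresses the evolved state through the process $F(t)$. The identity I would establish is the test-operator enhancement of Lemma \ref{lemma:matrix-F-rep}: for every $\tilde K\in\mathscr{Z}_t$,
\[ \mathbb{E}_{\Sigma(t)}[(A\otimes X)(I\otimes\tilde K)]=\mathbb{E}_\Sigma[F^\ast(t)(A\otimes X)F(t)(I\otimes\tilde K)]. \]
Here the commutation property $F(t)\in(I\otimes\mathscr{Z}_t)'$ recorded in Lemma \ref{lemma:matrix-F-rep} is essential: it lets one slide $I\otimes\tilde K$ freely past $F(t)$, so that $F(t)$ continues to implement the change of state in the presence of the measurement test operator. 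Writing $\sigma_t(A\otimes X)=\mathbb{E}_\Sigma[F^\ast(t)(A\otimes X)F(t)\,\vert\,I\otimes\mathscr{Z}_t]$ for the unnormalized conditional state in the reference picture and using the module property of conditional expectation to pull $I\otimes\tilde K\in I\otimes\mathscr{Z}_t$ outside, this becomes
\[ \mathbb{E}_{\Sigma(t)}[(A\otimes X)(I\otimes\tilde K)]=\mathbb{E}_\Sigma[\sigma_t(A\otimes X)(I\otimes\tilde K)],\qquad \tilde K\in\mathscr{Z}_t. \]

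\emph{Abstract Bayes step.} Specializing $A\otimes X=I\otimes I$ identifies the positive, invertible element $\sigma_t(I\otimes I)$ as the Radon--Nikodym density relating the restrictions of $\vert\Sigma(t)\rangle$ and $\vert\Sigma\rangle$ to $I\otimes\mathscr{Z}_t$. I would then verify directly that $\rho_t(A\otimes X):=\sigma_t(A\otimes X)/\sigma_t(I\otimes I)$ satisfies the two characterizing requirements of $\mathbb{E}_{\Sigma(t)}[A\otimes X\,\vert\,I\otimes\mathscr{Z}_t]$: it lies in $I\otimes\mathscr{Z}_t$ by construction, and for any $\tilde K\in\mathscr{Z}_t$, using $\rho_t(A\otimes X)(I\otimes\tilde K)\in I\otimes\mathscr{Z}_t$ together with the density relation and the invertibility of $\sigma_t(I\otimes I)$ in the commutative algebra, one computes $\mathbb{E}_{\Sigma(t)}[\rho_t(A\otimes X)(I\otimes\tilde K)]=\mathbb{E}_\Sigma[\sigma_t(I\otimes I)\rho_t(A\otimes X)(I\otimes\tilde K)]=\mathbb{E}_\Sigma[\sigma_t(A\otimes X)(I\otimes\tilde K)]=\mathbb{E}_{\Sigma(t)}[(A\otimes X)(I\otimes\tilde K)]$. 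Hence $\mathbb{E}_{\Sigma(t)}[A\otimes X\,\vert\,I\otimes\mathscr{Z}_t]=\sigma_t(A\otimes X)/\sigma_t(I\otimes I)$. Conjugating by $I\otimes U(t)$, which acts as a $\ast$-homomorphism from $I\otimes\mathscr{Z}_t$ onto $I\otimes\mathscr{Y}_t$ and therefore preserves ratios, turns the right-hand side into $\varsigma_t(A\otimes X)/\varsigma_t(I\otimes I)$ by the definition (\ref{eq:varsigma-def}), which is exactly (\ref{eq:matrix-bayes}).

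\emph{Main obstacle.} The crux is the test-operator identity of the second step, since Lemma \ref{lemma:matrix-F-rep} as stated only covers $\tilde K=I$. A fixed $\tilde K\in\mathscr{Z}_t$ encodes the entire measurement record on $[0,t]$ and is not amenable to a naive differential argument at intermediate times. I would circumvent this exactly as in the characteristic-function derivations already used in this paper (and in \cite{BHJ07}): it suffices to prove the identity for the exponential test processes $c_f(t)$ defined by $dc_f(t)=f(t)c_f(t)\,dZ(t)$, $c_f(0)=1$, which are adapted to $I\otimes\mathscr{Z}_t$ and generate it. For such $c_f$ one forms both sides as functions of $t$, differentiates with the quantum It\^o rule (now carrying the $dc_f$ and cross terms $dF\,dc_f$ driven by $dZ$), and invokes Lemma \ref{lemma:expectation-basic-extended} to reduce every field expectation under $\vert\Sigma\rangle$ to a non-stochastic one; matching the resulting ordinary differential equations with the common initial value $\mathbb{E}_\Sigma[A\otimes X]$ (at $t=0$, where $F(0)=I$ and $c_f(0)=1$) yields the identity. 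This It\^o bookkeeping, rather than any conceptual difficulty, is the part that requires care.
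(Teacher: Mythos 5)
The paper offers no proof of this lemma beyond the remark that it "is proven along similar lines to [BHJ07, Theorem 6.2]", and your proposal is precisely a fleshed-out version of that reference-probability argument: unitary covariance of the conditional expectation, the test-operator extension of Lemma \ref{lemma:matrix-F-rep} via the commutant property $F(t)\in(I\otimes\mathscr{Z}_t)'$ and the exponential processes $c_f$, and the abstract Bayes division by $\varsigma_t(I\otimes I)$. This matches the paper's intended route, and you correctly identify the only step needing real work (extending the change-of-state identity from $\tilde K=I$ to general $\tilde K\in\mathscr{Z}_t$), which the characteristic-function/It\^o argument you outline handles.
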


In order to determine the differential equation for $\varsigma_t(A\otimes X)$,  we first define, for $ A\otimes X\in
\mathcal{B}(\mathbb{C}^{2}\otimes \mathfrak{h}_{S})$  the process
\begin{equation}
\gamma _{t}\left( A\otimes X \right) =\mathbb{E}_{\Sigma }[F(t)^{\ast }A\otimes X \,
F(t)|I\otimes \mathscr{Z}_t],
\end{equation}
so that $\varsigma _{t}(A\otimes X)\equiv (I\otimes U(t))^{\ast }\gamma
_{t}(A\otimes X)(I\otimes U(t))$. We  then have

\begin{lemma}
\label{lemma:matrix-ref-c-exp} Let $\alpha_0 \neq 0$. The process $\gamma _{t}\left( A\otimes X \right) $
satisfies the QSDE
\begin{equation}
d\gamma _{t}\left( A\otimes X \right) =\tau _{t}\left( A\otimes X \right) dt+\beta _{t}\left(
A\otimes X \right) dZ\left( t\right)   \label{eq:qsde-gamma}
\end{equation}
where $\tau _{t}\left( A \otimes X\right) ,\beta _{t}\left( A \otimes X\right) \in \mathscr{Z}_t $, are given by
\begin{eqnarray*}
\beta _{t}\left( A \otimes X\right)  &=&\gamma _{t}\left( Q_{t}\left( A \otimes X\right) \right)
\nonumber \\ &&
+\nu \xi (t)\gamma _{t}\left( A \otimes X\left( \sigma _{+}\otimes I\right) \right)
+\nu ^{\ast }\xi (t)^{\ast }\gamma _{t}\left( \left( \sigma _{-}\otimes
I\right) A \otimes X\right) -\gamma _{t}\left( A \otimes X\right) \theta _{t}, \\
\tau _{t}\left( A \otimes X\right)  &=&\gamma _{t}\left( T_{t}\left( A \otimes X\right) \right)
\nonumber \\ &&
+\nu \xi (t)\gamma _{t}\left( Q_{t}\left( A \otimes X\right) \sigma _{+}\right) +\nu
^{\ast }\xi (t)^{\ast }\gamma _{t}\left( \sigma _{-}Q_{t}\left( A \otimes X\right)
\right) -\beta _{t}\left( A \otimes X\right) \theta _{t},
\end{eqnarray*}
with
\begin{equation*}
\theta _{t}=\mathbb{E}_{\psi }[\left( \nu \xi (t)\sigma _{+}+\nu ^{\ast }\xi
^{\ast }(t)\sigma _{-}\right) \otimes I|I\otimes \mathscr{Z}_{t}].
\end{equation*}
\end{lemma}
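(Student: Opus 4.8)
The plan is to adapt the characteristic-function (test-process) method used in the proof of Theorem \ref{thm:matrix-filter}, now applied to the unnormalized reference quantity $\gamma_t$ rather than to the normalized filter, and along the lines of the reference-method derivation in \cite[sec.\ 6]{BHJ07}. Two ingredients are already in hand: the quantum Ito identity established inside the proof of Lemma \ref{lemma:matrix-F-rep}, and the expectation-reduction rules of Lemma \ref{lemma:expectation-basic-extended}. First I would record the quantum Ito differential of $F(t)^{\ast}(A\otimes X)F(t)$: differentiating the QSDE \eqref{dF} together with its adjoint (recall $Z$ is self-adjoint, so $dF^{\ast}(t)=F^{\ast}(t)(G_0^{\ast}(t)\,dt+G_1^{\ast}(t)\,dZ(t))$), and using $dZ(t)\,dZ(t)=dt$ (which follows from the Ito table \eqref{eq:Ito-sp} since $dZ\,dZ=dB\,dB^{\ast}=dt$), one obtains the operator identity
\[
d\{F^{\ast}(A\otimes X)F\}=F^{\ast}\{T_t(A\otimes X)\,dt+Q_t(A\otimes X)\,dZ(t)\}F,
\]
with $T_t$ and $Q_t$ exactly the maps appearing in Lemma \ref{lemma:matrix-F-rep}.

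Next I would postulate that $\gamma_t$ obeys \eqref{eq:qsde-gamma} with $\mathscr{Z}_t$-adapted coefficients $\tau_t,\beta_t$ to be identified, and introduce, for arbitrary bounded $f$, a test process $c_f(t)\in I\otimes\mathscr{Z}_t$ defined by $dc_f(t)=f(t)c_f(t)\,dZ(t)$, $c_f(0)=1$. Since $\gamma_t(A\otimes X)$ is by definition the $I\otimes\mathscr{Z}_t$-conditional expectation of $F^{\ast}(A\otimes X)F$ and $c_f(t)$ is affiliated to $I\otimes\mathscr{Z}_t$, the defining property of the conditional expectation gives $\mathbb{E}_\Sigma[F^{\ast}(A\otimes X)F\,c_f(t)]=\mathbb{E}_\Sigma[\gamma_t(A\otimes X)c_f(t)]$ for all $f$. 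I would then differentiate both expressions, apply the Ito rule (retaining the cross term $dZ\,dZ=dt$), take $\mathbb{E}_\Sigma$, and equate the coefficients of $c_f(t)$ and of $f(t)c_f(t)$; since $f$ is arbitrary this yields explicit expressions for $\tau_t$ and $\beta_t$. The normalization Bayes' relation of Lemma \ref{lemma:matrix-bayes} then confirms the role of $\gamma_t$ as the unnormalized state.

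The crux, and the main obstacle, is the evaluation of the expectations of the $dZ=dB+dB^{\ast}$ stochastic integrals under the superposition state $\vert\Sigma\rangle$. Because $\vert\Sigma\rangle$ is not the vacuum, these increments do not have vanishing expectation, so the usual ``martingale increment drops out'' argument fails; instead Lemma \ref{lemma:expectation-basic-extended} converts them into ordinary $dt$-integrals, multiplying the $\mathbb{C}^2$-factor of the integrand on the right by $\sigma_+$ with weight $\nu\xi(t)$ for the $dB$ part, and on the left by $\sigma_-$ with weight $\nu^{\ast}\xi^{\ast}(t)$ for the $dB^{\ast}$ part. Applied to the left-hand side this produces, besides $\gamma_t(T_t(A\otimes X))$ and $\gamma_t(Q_t(A\otimes X))$, precisely the single-photon terms $\nu\xi(t)\gamma_t\big((A\otimes X)(\sigma_+\otimes I)\big)$, $\nu^{\ast}\xi^{\ast}(t)\gamma_t\big((\sigma_-\otimes I)(A\otimes X)\big)$ and their $T_t,Q_t$ analogues; applied to the right-hand side it produces the conditional drift $\theta_t=\mathbb{E}_\Sigma[(\nu\xi(t)\sigma_++\nu^{\ast}\xi^{\ast}(t)\sigma_-)\otimes I\,\vert\,I\otimes\mathscr{Z}_t]$, entering subtracted as $-\gamma_t(\cdot)\theta_t$ and $-\beta_t(\cdot)\theta_t$. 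The delicate bookkeeping is the non-commutativity of $A$ with $\sigma_\pm$ on $\mathbb{C}^2$, so the side on which each $\sigma_\pm$ acts must be tracked carefully; once this is done, matching coefficients delivers the stated $\tau_t$ and $\beta_t$, and the inclusion $\tau_t,\beta_t\in\mathscr{Z}_t$ follows because $\gamma_t$, $\theta_t$ and the maps $T_t,Q_t$ all preserve $I\otimes\mathscr{Z}_t$.
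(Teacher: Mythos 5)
Your proposal follows essentially the same route as the paper: the test-process (characteristic function) method applied to $\gamma_t$ with $dD_t=f(t)D_t\,dZ(t)$, the Ito differential $d\{F^{\ast}(A\otimes X)F\}=F^{\ast}\{T_t\,dt+Q_t\,dZ\}F$ from Lemma \ref{lemma:matrix-F-rep}, and Lemma \ref{lemma:expectation-basic-extended} to convert the non-vanishing $dB$, $dB^{\ast}$ expectations into $dt$-terms carrying $\sigma_{\pm}$, followed by matching coefficients of $D_t$ and $f(t)D_t$. The only detail worth making explicit is that recognizing the resulting expressions as $\gamma_t$ of modified arguments requires the commutation relations $F(t)\sigma_{+}=\sigma_{+}F(t)$ and $\sigma_{-}F^{\ast}(t)=F^{\ast}(t)\sigma_{-}$ (valid because $G_i(t)$ involves only $I$ and $\sigma_{+}$ in the ancilla slot), which is precisely the bookkeeping you flag.
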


\begin{proof}
Setting $R_{t}=F(t)^{\ast }(A\otimes X)F(t)$, we have that
\begin{equation*}
dR_{t}=F(t)^{\ast }T_{t}\left( A\otimes X \right) F(t)dt+F(t)^{\ast
}Q_{t}(A\otimes X)F(t)dZ\left( t\right)
\end{equation*}
and our aim is compute $\gamma _{t}\left( A\otimes X \right) =\mathbb{E}_{\Sigma
}[R_{t}|I\otimes \mathscr{Z}_{t}]$. In particular,
\begin{equation}
\mathbb{E}_{\Sigma }\left[ \left( R_{t}-\gamma _{t}\left( A\otimes X \right) \right)
D_{t}\right] =0
\end{equation}
for every $D_{t}\in I\otimes \mathscr{Z}_{t}$ and we now apply a technique
similar to the characteristic function method, this time using the input
process $Z$ and taking the process $D_{t}$ to satisfy the QSDE $%
dD_{t}=f\left( t\right) D_{t}dZ\left( t\right) $ with $D_{0}=I$, for given
integrable $f$. From the Ito product rule we then have
\begin{equation}
0=\mathbb{E}_{\Sigma}\left[ \left( dR_{t}-d\gamma _{t}\left( A\otimes X \right) \right)
D_{t}+\left( R_{t}-\gamma _{t}\left( A\otimes X \right) \right) dD_{t}+\left(
dR_{t}-d\gamma _{t}\left( A\otimes X \right) \right) dD_{t}\right]
\end{equation}
and making the ansatz that $d\gamma _{t}\left( A\otimes X \right) =\tau _{t}\left(
A\otimes X \right) dt+\beta _{t}\left( A\otimes X \right) dZ\left( t\right) $ for unknown
coefficients $\tau _{t}\left( A\otimes X \right) $ and $\beta _{t}\left( A\otimes X \right) $ we
see that
\begin{eqnarray*}
0 &=&\mathbb{E}_{\Sigma } [ ( F(t)^{\ast }T_{t}\left( A\otimes X \right)
F(t)-\tau _{t}\left( A\otimes X \right) ) D_{t}dt
\nonumber \\ && 
\ \ \ \ \ + ( F(t)^{\ast }Q_{t}\left(
A\otimes X \right) F(t)-\beta _{t}\left( A\otimes X \right) ) D_{t}dZ\left( t\right) %
]  \\
&&+\mathbb{E}_{\psi }\left[ \left( R_{t}-\gamma _{t}\left( A\otimes X \right)
\right) D_{t}f\left( t\right) dZ(t)\right]  \\
&&+\mathbb{E}_{\psi }\left[ \left[ F(t)^{\ast }Q_{t}\left( A\otimes X \right)
F(t)-\beta _{t}\left( A\otimes X \right) \right] D_{t}f\left( t\right) dt\left(
t\right) \right] .
\end{eqnarray*}
We now make use of Lemma \ref{lemma:expectation-basic-extended} again and
apply the commutation relations $F(t)\sigma _{+}=\sigma _{+}F(t)$, $\sigma
_{-}F(t)^{\ast }=F^{\ast }\left( t\right) \sigma _{-}$. (Note that $\sigma
_{+}$ will not commute with $F^{\ast }\left( t\right) $.) Inserting $\mathbb{%
E}_{\Sigma}\left[ \cdot |I\otimes \mathscr{Z}_t \right] $ under the
expectation sign, then separating coefficients of $D_t$ and $D_tf(t)$, we obtain
the equations
\begin{eqnarray*}
0 &=&\gamma _{t}\left( T_{t}\left( A\otimes X \right) \right) -\tau _{t}\left(
A\otimes X \right) 
\\ &&
+\nu _{t}\gamma _{t}\left( Q_t\left( A\otimes X \right) \left( \sigma
_{+}\otimes I\right) \right) 
+\nu _{t}\gamma _{t}\left( \left( \sigma
_{-}\otimes I\right) Q_t\left( A\otimes X \right) \right)  
\\
&&
-\beta _{t}\left( A\otimes X \right) \mathbb{E}_{\Sigma }\left[ \nu \xi (t)\sigma
_{+}\otimes I+\nu ^{\ast }\xi (t)^{\ast }\sigma _{-}\otimes I|I\otimes
\mathscr{Z}_t  \right] , \\
0 &=&\gamma _{t}\left( Q_t\left( A\otimes X \right) \right) -\beta _{t}\left( A\otimes X \right)
\\ &&
+\nu \xi (t)\gamma _{t}\left( A\otimes X \left( \sigma _{+}\otimes I\right) \right)
+\nu ^{\ast }\xi (t)^{\ast }\gamma _{t}\left( \left( \sigma _{-}\otimes
I\right) A\otimes X \right)  \\
&& -\gamma _{t}\left( A\otimes X \right) \mathbb{E}_{\Sigma }\left[ \nu \xi
(t)\sigma_{+}\otimes I+\nu ^{\ast }\xi (t)^{\ast } \sigma _{-}\otimes
I|I\otimes \mathscr{Z}_t \right] .
\end{eqnarray*}
Rearranging these expressions yields the relations in the statement of the
lemma.
\end{proof}

\begin{theorem}
Let $\alpha_0 \neq 0$. The unnormalized conditional expectation $\varsigma_t(A \otimes X)$ defined by (\ref{eq:varsigma-def}) 
satisfies the
 equation
\begin{equation}
d\varsigma _{t}\left(  A \otimes X \right) =\varsigma _{t}\left( \mathcal{G}%
_{t}(A \otimes X)\right) dt+\lambda _{t}\left( A \otimes X \right) d\tilde{Y}(t), \label{eq:matrix-sigma-dyn}
\end{equation}
where
\begin{eqnarray}
\lambda _{t}\left( A \otimes X \right)  &=&\varsigma _{t}(A \otimes X  \tilde{L}_{t}+\tilde{L}%
_{t}^{\ast }A \otimes X)-\varsigma _{t}\left( A \otimes X \right) \kappa _{t}, \label{eq:def_lambda_t}\\
\tilde{L}_{t} &=&I\otimes L+\nu _{t}\xi (t)\sigma _{+}\otimes S,
\label{Ltilde} \\
d\tilde{Y}(t) &=&dY\left( t\right) -\kappa _{t}dt,\quad \tilde{Y}(0)=0,
\label{Ytilde} \\
\kappa _{t} &=&\varsigma _{t}\left( \left( \nu \xi (t)\sigma _{+}+\nu ^{\ast
}\xi (t)^{\ast }\sigma _{-}\right) \otimes I\right) .
\label{eq:kappa-def}
\end{eqnarray}
\end{theorem}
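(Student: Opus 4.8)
The plan is to obtain (\ref{eq:matrix-sigma-dyn}) by transporting the reference-picture equation of Lemma~\ref{lemma:matrix-ref-c-exp} into the output picture through the unitary conjugation $\varsigma_t(A\otimes X)=(I\otimes U(t))^\ast\gamma_t(A\otimes X)(I\otimes U(t))$ recorded just above. Writing $\tilde U_t=I\otimes U(t)$, I would apply the quantum Ito product rule to the threefold product $\varsigma_t=\tilde U_t^\ast\gamma_t\tilde U_t$, substituting the Schr\"odinger QSDE (\ref{eq:unitary}) (tensored with the identity on $\mathbb{C}^2$) for the increments $d\tilde U_t$ and $d\tilde U_t^\ast$, and the QSDE $d\gamma_t=\tau_t\,dt+\beta_t\,dZ(t)$ of Lemma~\ref{lemma:matrix-ref-c-exp} for $\gamma_t$. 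The six resulting contributions of the product rule are then reduced using only the Ito table (\ref{eq:Ito-sp}) together with $dZ=dB+dB^\ast$; throughout I would use that $\gamma_t,\tau_t,\beta_t\in I\otimes\mathscr{Z}_t$ are adapted, so they commute with the forward field increments carried by $d\tilde U_t$, and that conjugation merely relabels arguments, $\tilde U_t^\ast\gamma_t(B)\tilde U_t=\varsigma_t(B)$, for every bounded operator $B$.

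Second, I would collect the $dt$-terms. These arise from the conjugated reference drift $\varsigma_t(T_t(\cdot))$ and its $\nu\xi$, $\nu^\ast\xi^\ast$ partners appearing in $\tau_t$, from the Hamiltonian and dissipative pieces of $d\tilde U_t$, $d\tilde U_t^\ast$ acting on $\gamma_t$ exactly as in the Heisenberg equation (\ref{eq:X-dyn}), and from the Ito cross-corrections between the field parts of $d\tilde U_t^\ast$, the $\beta_t\,dZ$ part of $d\gamma_t$, and the field part of $d\tilde U_t$. The claim to verify is that, once the compensator $-\beta_t\theta_t$ inside $\tau_t$ is combined with the drift freed up by the noise conversion below, all of these reassemble into $\varsigma_t(\mathcal{G}_t(A\otimes X))$ with no residue. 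This is exactly what the generator identity established inside the proof of Lemma~\ref{lemma:matrix-F-rep}, namely $T_t(A\otimes X)+Q_t(A\otimes X)\nu\xi(\sigma_+\otimes I)+\nu^\ast\xi^\ast(\sigma_-\otimes I)Q_t(A\otimes X)=\mathcal{G}_t(A\otimes X)$ (equivalently the identities (\ref{id1})--(\ref{id4}) fixing $G_0,G_1$), is designed to deliver. A convenient global cross-check is that taking $\mathbb{E}_\Sigma$ of both sides of the resulting equation, using that the driving increment has zero conditional mean, must return the master equation (\ref{eq:matrix-master}) of Theorem~\ref{thm:master-matrix}.

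Third, I would collect the stochastic terms and identify the observation process. Conjugation sends $\mathscr{Z}_t$-adapted functions to $\mathscr{Y}_t$-adapted functions, since $U(t)^\ast Z(s)U(t)=Y(s)$ for $s\le t$ by adaptedness; the forward increment $dZ(t)$, however, is converted to $dY(t)$ only up to an Ito-generated drift coming from $d\tilde U_t$, and this is the source of the compensator. The algebraic grouping
\[
Q_t(A\otimes X)+\nu\xi\,(A\otimes X)(\sigma_+\otimes I)+\nu^\ast\xi^\ast(\sigma_-\otimes I)(A\otimes X)=(A\otimes X)\tilde L_t+\tilde L_t^\ast(A\otimes X),
\]
with $\tilde L_t=I\otimes L+\nu\xi\,\sigma_+\otimes S$ as in (\ref{Ltilde}), shows that the conjugate of the non-compensator part of $\beta_t$ is $\varsigma_t((A\otimes X)\tilde L_t+\tilde L_t^\ast(A\otimes X))$, giving the first half of $\lambda_t$ in (\ref{eq:def_lambda_t}); the remaining $-\gamma_t(A\otimes X)\theta_t$ piece of $\beta_t$, together with the drift absorbed in the $dZ\mapsto dY$ conversion, is what yields the term $-\varsigma_t(A\otimes X)\kappa_t$ and the compensated driver $d\tilde Y=dY-\kappa_t\,dt$ with $\kappa_t=\varsigma_t((\nu\xi\sigma_++\nu^\ast\xi^\ast\sigma_-)\otimes I)$ as in (\ref{eq:kappa-def}), (\ref{Ytilde}).

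The hard part will be precisely this conversion of the driving noise. The reference compensator $\theta_t$ and the output compensator $\kappa_t$ are genuinely different objects: $\theta_t=\mathbb{E}_\Sigma[(\nu\xi\sigma_++\nu^\ast\xi^\ast\sigma_-)\otimes I\mid I\otimes\mathscr{Z}_t]$ carries no $F$-weighting, whereas $\kappa_t$ is the $\varsigma_t$-expectation of the same ancilla operator, so the passage from one to the other is not a mere relabeling $\gamma_t\mapsto\varsigma_t$. Controlling it requires careful tracking of the operator ordering in the $\mathbb{C}^2$ factor, using the commutations $F(t)\sigma_+=\sigma_+F(t)$ and $\sigma_-F^\ast(t)=F^\ast(t)\sigma_-$ noted after Lemma~\ref{lemma:matrix-ref-c-exp} (and the fact that $\sigma_+$ does \emph{not} commute with $F^\ast(t)$), and then matching the surviving $dt$- and $dY$-coefficients term by term against (\ref{eq:matrix-sigma-dyn}). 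Once the noise conversion is pinned down, everything else is routine Ito bookkeeping organized by the generator identity of Lemma~\ref{lemma:matrix-F-rep}.
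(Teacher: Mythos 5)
Your proposal follows essentially the same route as the paper: the paper likewise starts from the QSDE of Lemma \ref{lemma:matrix-ref-c-exp}, simplifies $\beta_t$ and $\tau_t$ by exactly the regrouping you write out (so that $d\gamma_t(A\otimes X)=\gamma_t(\mathcal{G}(A\otimes X))\,dt+\beta_t(A\otimes X)[dZ(t)-\theta_t\,dt]$), and then passes to $\varsigma_t$ by the unitary rotation, identifying $\lambda_t(A\otimes X)=(I\otimes U(t))^\ast\beta_t(A\otimes X)(I\otimes U(t))$ and $\kappa_t=(I\otimes U(t))^\ast\theta_t(I\otimes U(t))$. The one point you single out as delicate --- that the passage from the reference compensator $\theta_t$ to $\kappa_t=\varsigma_t((\nu\xi(t)\sigma_++\nu^\ast\xi(t)^\ast\sigma_-)\otimes I)$ is not a mere relabeling --- is precisely the step the paper asserts without further comment, and your plan to control it via the commutation relations $F(t)\sigma_+=\sigma_+F(t)$ and $\sigma_-F^\ast(t)=F^\ast(t)\sigma_-$ is the appropriate one.
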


\begin{proof}
We remark that by inspection the coefficients in the QSDE for $\gamma
_{t}\left( A \otimes X \right) $ simplify to
\begin{eqnarray*}
\beta _{t}\left(  A \otimes X \right)  &=&\gamma _{t}\left( Q_{t}\left(A \otimes X \right) \right)
\\
&& +\nu \xi (t)\gamma _{t}\left( A \otimes X \left( \sigma _{+}\otimes I\right) \right)
+\nu ^{\ast }\xi (t)^{\ast }\gamma _{t}\left( \left( \sigma _{-}\otimes
I\right) A \otimes X \right) -\gamma _{t}\left( A \otimes X \right) \theta _{t} \\
&=&\gamma _{t} ( A \otimes X \left( I\otimes L+\nu _{t}\sigma _{+}\otimes S\right)
+\left( I\otimes L^{\ast }+\nu ^{\ast }\xi (t)^{\ast }\sigma _{-}\otimes
S^*\right) A \otimes X )
\\ &&
 -\gamma _{t}\left( A \otimes X \right) \theta _{t}, 
\\
\tau _{t}\left( A \otimes X \right)  &=&\gamma _{t}\left( T_{t}\left( A \otimes X \right) \right)
+\nu \xi (t)\gamma _{t}\left( Q_{t}\left( A \otimes X \right) \sigma _{+}\right) 
\\
&&
+\nu
^{\ast }\xi (t)^{\ast }\gamma _{t}\left( \sigma _{-}Q_{t}\left( A \otimes X \right)
\right) -\beta _{t}\left( A \otimes X \right) \theta _{t} \\
&\equiv &\gamma _{t}\left( \mathcal{G}\left( A \otimes X \right) \right) -\beta
_{t}\left( A \otimes X \right) \theta _{t},
\end{eqnarray*}
and therefore
\begin{equation*}
d\gamma _{t}\left( A \otimes X \right) =\gamma _{t}\left( \mathcal{G}(A \otimes X )\right)
dt+\beta _{t}\left( A \otimes X \right) \left[ dZ\left( t\right) -\theta _{t}dt\right] .
\end{equation*}
The QSDE for $\varsigma _{t}(A \otimes X)\equiv (I\otimes U(t))^{\ast }\gamma
_{t}(A \otimes X)(I\otimes U(t))$ is then readily deduced from the unitary rotation
noting that $\kappa _{t}\equiv (I\otimes U(t))^{\ast }\theta _{t}(I\otimes
U(t))$ and $\lambda _{t}\left( A \otimes X \right) \equiv (I\otimes U(t))^{\ast }\beta
_{t}\left( A \otimes X \right) (I\otimes U(t))$.
\end{proof}

\begin{corollary}
For $\alpha_0 \neq 0$, the conditional expectation $\pi_t(A \otimes X)$  defined by (\ref{eq:matrix-c-exp-def}) and given by (\ref{eq:matrix-bayes}) 
satisfies   equation (\ref{pi_filter}) derived in Theorem \ref{thm:matrix-filter}.
\end{corollary}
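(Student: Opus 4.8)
The plan is to obtain the normalized filter (\ref{pi_filter}) from the unnormalized (Zakai-type) equation (\ref{eq:matrix-sigma-dyn}) by applying the Ito quotient rule to the Bayes relation (\ref{eq:matrix-bayes}), exactly as one passes from a Duncan--Mortensen--Zakai equation to its Kushner--Stratonovich normalization. Writing $\pi_t(A\otimes X) = \varsigma_t(A\otimes X)/\varsigma_t(I\otimes I)$ (and recalling that $\pi_t$ here is the same conditional expectation $\tilde\pi_t$ of (\ref{eq:matrix-c-exp-def}) appearing in Theorem \ref{thm:matrix-filter}), the first step is to record the dynamics of the normalizing denominator. Setting $A\otimes X = I\otimes I$ in (\ref{eq:matrix-sigma-dyn}) and observing that $\mathcal{G}_t(I\otimes I)=0$ (since $\mathcal{L}(I)=0$, $[L^\ast,I]=0$, $[I,L]=0$, and $S^\ast I S - I = 0$ in the formula of Theorem \ref{thm:master-matrix}), we get $d\varsigma_t(I\otimes I) = \lambda_t(I\otimes I)\,d\tilde Y(t)$, so the denominator is driftless in $\tilde Y$.

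Next I would apply Ito's rule to the ratio. Writing $V_t = \varsigma_t(I\otimes I)$ and using that the monitored observation has quadratic variation $d\tilde Y\,d\tilde Y = dt$ (which holds even in the single-photon and superposition states because the Ito table (\ref{eq:Ito-sp}) for $Z=B+B^\ast$ is state-independent), the quotient expansion $d(\varsigma_t(A\otimes X)/V_t)$ generates the drift $\pi_t(\mathcal{G}_t(A\otimes X))$ together with the Ito-correction terms from $d\varsigma_t(A\otimes X)\,dV_t$ and $(dV_t)^2$. Collecting these, I expect the equation to organize into the manifestly nonlinear form
\[
d\pi_t(A\otimes X) = \pi_t(\mathcal{G}_t(A\otimes X))\,dt + \bigl(\tilde\lambda_t(A\otimes X) - \pi_t(A\otimes X)\,\tilde\lambda_t(I\otimes I)\bigr)\bigl(d\tilde Y(t) - \tilde\lambda_t(I\otimes I)\,dt\bigr),
\]
where $\tilde\lambda_t(\cdot) = \lambda_t(\cdot)/V_t$. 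The factorization of the correction into the bracketed diffusion coefficient multiplying the compensated increment $d\tilde Y - \tilde\lambda_t(I\otimes I)\,dt$ is the characteristic signature of the normalization and is where the nonlinear (subtractive) structure of the filter comes from.

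It then remains to match this against Theorem \ref{thm:matrix-filter}. The drift already agrees, being the first term of (\ref{pi_filter}). I would expand $\tilde\lambda_t(A\otimes X)$ using the definition (\ref{eq:def_lambda_t}) of $\lambda_t$ and the expression (\ref{Ltilde}) for $\tilde L_t$, producing $\tilde\pi_t(A\otimes(XL+L^\ast X)) + \tilde\pi_t((A\sigma_+)\otimes XS)\nu\xi(t) + \tilde\pi_t((\sigma_- A)\otimes S^\ast X)\nu^\ast\xi^\ast(t)$ less the $\kappa_t$ contribution, then compare the residual bracket with $\mathcal{H}_t(A\otimes X)$ in (\ref{eq:matrix-filter}) and the compensated increment with the innovation (\ref{eq:innovation-matrix}). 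The main obstacle is precisely this bookkeeping: one must verify that the compensator $\kappa_t$ of (\ref{eq:kappa-def}) together with the normalization term $\tilde\lambda_t(I\otimes I)$ reassemble into the full innovation drift $\tilde\pi_t\bigl(I\otimes(L+L^\ast) + \sigma_+\otimes S\,\nu\xi(t) + \sigma_-\otimes S^\ast\,\nu^\ast\xi^\ast(t)\bigr)$, and that the factor $-\pi_t(A\otimes X)\,\tilde\lambda_t(I\otimes I)$ reproduces exactly the two subtractive terms of $\mathcal{H}_t$. Since (\ref{pi_filter}) has already been derived in Theorem \ref{thm:matrix-filter} and the conditional expectation is unique, matching the coefficients of $dt$ and of the innovation increment suffices and no separate uniqueness argument is required.
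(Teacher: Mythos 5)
Your proposal is correct and follows essentially the same route as the paper: both normalize the Zakai-type equation (\ref{eq:matrix-sigma-dyn}) through the Bayes relation (\ref{eq:matrix-bayes}) by the Ito quotient rule, using that $\mathcal{G}_t(I\otimes I)=0$ makes the denominator driftless, and then identify the compensated increment with the innovation via the identity $\lambda_t(I\otimes I)/\varsigma_t(I\otimes I)=\pi_t(\tilde L_t+\tilde L_t^\ast)-\kappa_t$. The coefficient bookkeeping you defer is exactly the computation the paper carries out, so nothing essential is missing.
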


\begin{proof}
We see that $d\varsigma _{t}\left( I\otimes I\right) =\lambda _{t}\left(
I\otimes I\right) \,d\tilde{Y}\left( t\right) $ and so
\begin{equation*}
d\frac{1}{\varsigma _{t}\left( I\otimes I\right) }=-\frac{\lambda _{t}\left(
I\otimes I\right) }{\varsigma _{t}\left( I\otimes I\right) ^{2}}\,d\tilde{Y}%
\left( t\right) +\frac{1}{\varsigma _{t}\left( I\otimes I\right) ^{3}}%
\lambda _{t}\left( I\otimes I\right) ^{2}dt.
\end{equation*}
However, we note from (\ref{eq:def_lambda_t}) that
\begin{equation*}
\frac{\lambda _{t}\left( I\otimes I\right) }{\varsigma _{t}\left( I\otimes
I\right) }\equiv \pi _{t}\left( \tilde{L}_{t}+\tilde{L}_{t}^{\ast }\right)
-\kappa _{t}.
\end{equation*}
By an application of the Ito product rule, the normalized filter therefore
satisfies
\begin{eqnarray*}
&& d\pi _{t}(A \otimes X) 
\\
&=&\pi _{t}(\mathcal{G}_{t}(A \otimes X))dt \\
&&+\left\{ \frac{\lambda _{t}\left( A \otimes X \right) -\pi _{t}\left( A \otimes X \right)
\lambda _{t}\left( I\otimes I\right) }{\varsigma _{t}\left( I\otimes
I\right) }\right\} \left[ dY(t) -\kappa _{t}dt-\frac{\lambda _{t}\left( I\otimes
I\right) }{\varsigma _{t}\left( I\otimes I\right) }dt\right]  \\
&=&\pi _{t}(\mathcal{G}_{t}(A \otimes X))dt+ \{ \pi _{t}\left( A \otimes X \tilde{L}_{t}+%
\tilde{L}_{t}^{\ast }A \otimes X \right) 
\\
&&
-\pi _{t}\left( A \otimes X \right) \pi _{t}\left( \tilde{%
L}_{t}+\tilde{L}_{t}^{\ast }\right) \}  \left[ dY(t)-\pi _{t}\left( \tilde{L}_{t}+\tilde{L}%
_{t}^{\ast }\right) dt\right]  \\
&\equiv &\pi _{t}(\mathcal{G}_{t}(A \otimes X))dt+\mathcal{H}_{t}(A \otimes X)dW\left( t\right),
\end{eqnarray*}
since we have from (\ref{Ltilde}), (\ref{eq:matrix-filter}) and(\ref
{eq:innovation-matrix})
\begin{eqnarray*}
\mathcal{H}_{t}(A \otimes X) &\equiv &\pi _{t}\left( A \otimes X \tilde{L}_{t}+\tilde{L}%
_{t}^{\ast }A \otimes X \right) -\pi _{t}\left( A \otimes X \right) \pi _{t}\left( \tilde{L}_{t}+%
\tilde{L}_{t}^{\ast }\right) , \\
dW(t) &\equiv &dY-\pi _{t}\left( \tilde{L}_{t}+\tilde{L}_{t}^{\ast }\right)
dt.
\end{eqnarray*}
Therefore we recover equation (\ref{pi_filter}).
\end{proof}

\section{Fields in a Superposition of Coherent States}
\label{sec:cat}

\subsection{Superposition of Coherent States}
\label{sec:cat-super}

In this section we take the field to be in a superposition state
\begin{equation}
 \vert \Psi \rangle   = \sum_j \alpha_j \vert f_j \rangle ,
\label{eq:super-state}
\end{equation}
where $\vert f_j \rangle$ are coherent states and the complex numbers $\alpha_j$  ($j=1,\ldots,n$) are non-zero normalized weights (described further below).

Coherent vectors $\vert f \rangle$ may be expressed in terms of the vacuum vector using the Weyl (or displacement) operator  \cite{KRP92}  $W(f)$ which serves as a \lq\lq{density}\rq\rq:
\begin{equation}
\vert f \rangle = W(f) \vert 0 \rangle.
\end{equation}
While the collection of all coherent vectors is dense in the Fock space, they are not orthogonal, and indeed the inner product (in the Fock space) is given by
\begin{equation}
\langle f \vert  g \rangle = \exp(  -\frac{1}{2} \parallel f \parallel_2^2  -\frac{1}{2} \parallel g \parallel_2^2 + \langle f, g \rangle_2 ).
\end{equation}
Here, $\parallel f \parallel_2^2 = \langle f,f \rangle_2$ and $\langle f, g \rangle_2$ are the $L^2([0,\infty), \mathbf{C})$ norm and inner product respectively.
The superposition state $\vert \psi \rangle$ given by (\ref{eq:super-state}) is specified by a choice of coherent vectors $\vert f_j \rangle$, with weights $\alpha_j$ ensuring normalization: $\langle \psi \vert \psi \rangle = \sum_{jk} \alpha_j^\ast \alpha_k g_{jk}=1$, where $g_{jk}= \langle f_j \vert f_k \rangle$.

For a  system operator  $X$  acting on $\mathfrak{H}_S$,
and $F$ is a field operator acting on the Fock space $\mathfrak{F}$, the  expectation
with respect to the state $ \vert \eta \rangle  \otimes  \vert \Psi \rangle$
 is defined by
\begin{eqnarray}
\mathbb{E}_{\eta\Psi}[ X \otimes F] &=&  \langle \eta \Psi  \vert  (X \otimes F)\vert \eta\Psi \rangle
=
\langle \eta \vert  X \vert \eta \rangle \langle \Psi \vert  F \vert  \Psi \rangle
\nonumber \\
&=&
\langle \eta \vert X \vert \eta \rangle \sum_{jk} \alpha_j^\ast \alpha_k \langle f_j \vert  F  \vert f_k \rangle
\nonumber
\\
&=& \sum_{jk} \alpha_j^\ast \alpha_k \mathbb{E}_{jk}[ X \otimes F ],
\label{eq:expect-def}
\end{eqnarray}
where
\begin{equation}
\mathbb{E}_{jk}[ X \otimes F ] = \langle \eta \vert  X \vert \eta \rangle \langle  f_j \vert  F  \vert  f_k \rangle 
\end{equation}
for $j,k=1,\ldots,n$. We write $\mathbb{E}_{00}[ X \otimes F ] = \langle \eta \vert  X \vert \eta \rangle \langle  0 \vert  F \vert  0 \rangle $ for the vacuum case.

Consider now the expectation of an adapted operator $K(t)$ on the composite system $\mathfrak{H} = (\mathfrak{H}_S \otimes \mathfrak{F}_{t]}) \otimes \mathfrak{F}_{(t}$; this means that $K(t)$ acts trivially on the future component $\mathfrak{F}_{(t}$.
Let  $\chi_{[0,t]}$ is the indicator function for the time interval $[0,t]$. Now coherent vectors and Weyl operators factorize as $\vert f \rangle = \vert f \chi_{[0,t]} \rangle \otimes \vert f \chi_{(t,\infty)} \rangle$ and $W(f)= W(f \chi_{[0,t]} )  \otimes W(f \chi_{(t,\infty)} ) $, respectively. 
Write
\begin{equation}
W^-_t(f)= W(f \chi_{[0,t]} ), \ \ W^+_t(f)=W(f \chi_{(t,\infty)} ) .
\end{equation}
Then we can express the coherent expectations of adapted processes $K(t)$  in terms of the vacuum:
\begin{eqnarray}
\mathbb{E}_{jk}[ K(t)] &=& \mathbb{E}_{00} [ W^{-\ast}_t(f_j) K(t) W^-_t(f_k) ] r^{jk}(t)
\label{eq:Kt-00-0}
\end{eqnarray}
where
$
r^{jk}(t) = \langle 0 \vert W^{+\ast}_t(f_j) W^+_t(f_k) 
\vert 0 \rangle 
$
satisfies
\begin{equation}
\dot r^{jk}(t) = -( f_j^\ast (t)f_k(t) - \frac{1}{2} \vert f_j(t) \vert^2 - \frac{1}{2} \vert f_k (t) \vert^2 )r^{jk}(t), \ \ r^{jk}(0)=1.
\end{equation}
Note that $j=k$ is the standard coherent expectation, in which case $r^{jj}(t)=1$.

The following lemma shows how expectations of stochastic integrals with respect to the superposition state can be evaluated.

\begin{lemma}
\label{lemma:expectation-basic-cat} 
Let $K(t)$ be a bounded quantum stochastic process 
defined by (\ref{eq:Kt-def}), 
where $M_0$, $M_\pm$ and $M_1$ are bounded and adapted. Then we have
\begin{eqnarray}
\mathbb{E}_{jk}[ K(t) ] &=& \mathbb{E}_{jk}[ \int_0^t M_0(s) ds ] +  \int_0^t M_- (s) f_k(s) ds 
\nonumber \\
&& 
+  \int_0^t M_+ (s) f^\ast_j(s) ds + \int_0^t M_1 (s) f_j^\ast(s) f_k(s) ds
]  .
\label{eq:Kt-00-cat}
\end{eqnarray}
\end{lemma}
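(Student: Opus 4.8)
The plan is to mirror the proof of Lemma~\ref{lemma:expectation-basic}, replacing the single-photon representation by the coherent-state representation (\ref{eq:Kt-00-0}). Writing $\tilde K(t) = W^{-\ast}_t(f_j)\,K(t)\,W^-_t(f_k)$, we have $\mathbb{E}_{jk}[K(t)] = \mathbb{E}_{00}[\tilde K(t)]\, r^{jk}(t)$, so it suffices to establish the differential form $d\mathbb{E}_{jk}[K(t)] = \mathbb{E}_{jk}[M_0 + f_k M_- + f_j^\ast M_+ + f_j^\ast f_k M_1]\,dt$ and integrate from $0$ (where $K(0)=0$).

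First I would record the quantum stochastic differentials of the partial Weyl operators, $dW^-_t(f) = \{f(t)\,dB^\ast(t) - f^\ast(t)\,dB(t) - \tfrac12|f(t)|^2\,dt\}\,W^-_t(f)$, together with the adjoint $dW^{-\ast}_t(f) = W^{-\ast}_t(f)\,\{f^\ast(t)\,dB(t) - f(t)\,dB^\ast(t) - \tfrac12|f(t)|^2\,dt\}$. Then, using the It\^o form (\ref{eq:Kt-def}) of $dK$ and the quantum It\^o table (\ref{eq:Ito-sp}), I would expand $d\tilde K = (dW^{-\ast}_t(f_j))K W^-_t(f_k) + W^{-\ast}_t(f_j)(dK)W^-_t(f_k) + W^{-\ast}_t(f_j)K(dW^-_t(f_k))$ together with all It\^o correction terms, and take the vacuum expectation $\mathbb{E}_{00}[\cdot]$, under which every surviving $dB$, $dB^\ast$ and $d\Lambda$ term annihilates, the coefficients being adapted.

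The surviving $dt$ contributions organize into two groups. The ``linear'' terms $M_0$, then $f_k M_-$ (from the cross product $(dK)(dW^-_t(f_k))$ via $dB\,dB^\ast = dt$), $f_j^\ast M_+$ (from $(dW^{-\ast}_t(f_j))(dK)$), and the gauge term $f_j^\ast f_k M_1$ reproduce exactly the claimed right-hand side. The remaining ``quadratic-in-$f$'' terms, namely $-\tfrac12|f_j|^2$ and $-\tfrac12|f_k|^2$ from the two Weyl drifts and $+f_j^\ast f_k$ from the cross term $(dW^{-\ast}_t(f_j))K(dW^-_t(f_k))$, multiply $\mathbb{E}_{00}[\tilde K]$; after multiplication by $r^{jk}(t)$ and addition of the contribution $\mathbb{E}_{00}[\tilde K]\,\dot r^{jk}(t)$, these cancel identically, precisely because $\dot r^{jk} = -(f_j^\ast f_k - \tfrac12|f_j|^2 - \tfrac12|f_k|^2)\,r^{jk}$. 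This cancellation is the crux of the argument.

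I expect the main obstacle to be the gauge term. Unlike in classical It\^o calculus, the required $f_j^\ast f_k M_1$ contribution arises from the triple product $(dW^{-\ast}_t(f_j))(dK)(dW^-_t(f_k))$: the chain $f_j^\ast\,dB \cdot M_1\,d\Lambda \cdot f_k\,dB^\ast$ collapses to $f_j^\ast f_k M_1\,dt$ via $dB\,d\Lambda = dB$ followed by $dB\,dB^\ast = dt$, so this genuinely higher-order quantum term must be retained and correctly placed between the Weyl factors. Care is also needed to keep the adapted coefficients in their correct operator order relative to the forward increments before invoking the It\^o rules. A quicker but less self-contained alternative is to argue directly from the eigenvector property of coherent states, $dB(s)\,\vert f_k\rangle = f_k(s)\,ds\,\vert f_k\rangle$ and $\langle f_j\vert\,dB^\ast(s) = f_j^\ast(s)\,ds\,\langle f_j\vert$, applied term by term to (\ref{eq:Kt-def}); this bypasses the Weyl bookkeeping but still demands the same justification for the gauge increment $d\Lambda(s)=b^\ast(s)b(s)\,ds$.
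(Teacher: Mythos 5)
Your argument is correct, but your primary route is genuinely different from the one the paper takes. The paper proves Lemma \ref{lemma:expectation-basic-cat} in one line by invoking the eigenstate property of coherent vectors, $dB(t)\vert f\rangle = f(t)\vert f\rangle\,dt$ and $d\Lambda(t)\vert f\rangle = dB^\ast(t)f(t)\vert f\rangle$, applied term by term to (\ref{eq:Kt-def}) --- precisely the ``quicker alternative'' you relegate to your closing remark. Your main proof instead mirrors Lemma \ref{lemma:expectation-basic}: you pass to the vacuum via the Weyl representation (\ref{eq:Kt-00-0}), differentiate $W^{-\ast}_t(f_j)K(t)W^{-}_t(f_k)$ with the full quantum It\^o expansion, and observe that the quadratic-in-$f$ drift terms cancel against $\dot r^{jk}$. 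Your bookkeeping checks out: the only surviving gauge contribution is $f_j^\ast f_k M_1\,dt$ from the triple product via $dB\,d\Lambda = dB$ followed by $dB\,dB^\ast = dt$ (the second-order products $dB\,d\Lambda$ and $d\Lambda\,dB^\ast$ leave martingale increments that vanish in vacuum expectation), and the terms $-\tfrac12\vert f_j\vert^2-\tfrac12\vert f_k\vert^2+f_j^\ast f_k$ cancel exactly against $\dot r^{jk}/r^{jk}$. What your route buys is self-containedness and structural parallelism with the single-photon case --- it makes explicit why the normalization factor $r^{jk}(t)$ drops out of the final formula --- at the cost of considerably more It\^o algebra. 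What the paper's route buys is brevity, though it leans on the formal white-noise identity for the gauge increment, which is the same point you flag as the crux; either way, that identity (equivalently, your triple product) is where the real content lies.
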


\begin{proof}
Equation (\ref{eq:Kt-00-cat}) follows from the following eigenstate property of coherent vectors:
\begin{eqnarray}
dB(t) \vert f \rangle &=&  f(t) \vert f \rangle dt, 
\notag \\
d\Lambda(t) \vert f \rangle &=&  dB^\ast(t) f(t) \vert f \rangle .
\end{eqnarray}
\qed
\end{proof}

\subsection{Embedding}
\label{sec:cat-embed}

For the superposition of $n$ coherent states, we use an $n$-level ancilla system, leading to
 the  extended space
\begin{equation}
\tilde {\mathfrak{H}}= \mathbb{C}^n \otimes \mathfrak{H} = \mathfrak{H}
\oplus \mathfrak{H} \oplus \cdots \oplus \mathfrak{H} \ \ (n \ \mathrm{times}).
\end{equation}

As in the single photon case, we allow the extended system to evolve unitarily according to $I \otimes U(t)
$, where $U(t)$ is the unitary operator for the system and field, given by
the Schr\"{o}dinger equation (\ref{eq:unitary}). 
Let $\vert e_j \rangle$, $j=1,\ldots,n$, be an orthonormal basis for $\mathbf{C}^n$.
We initialize the extended system in
the   state
\begin{equation}
 \vert \Sigma \rangle  =  \frac{1}{\vert \alpha \vert} \sum_{j}  \alpha_j   \vert e_j \rangle \otimes \vert \eta \rangle  \otimes \vert f_j \rangle, 
  \label{eq:super-cat}
\end{equation}
where $\alpha_j \neq 0$ for all $j$ and $\vert \alpha \vert^2= \sum_j \alpha_j^\ast \alpha_j$ (so that $\langle \Sigma \vert \Sigma \rangle =1$).
This state evolves according to $\vert \Sigma(t) \rangle = (I \otimes U(t))  \vert \Sigma \rangle$.

Let $A$ be an operator acting on $\mathbb{C}^n$, i.e. a complex $n
\times n$ matrix, $A=( a_{jk} )$, $j,k=1,\ldots,n$. Then expectation in the extended system is defined by
\begin{equation}
\mathbb{E}_{\Sigma}[ A \otimes X \otimes F] = \langle \Sigma \vert (A \otimes X \otimes F) \vert \Sigma \rangle
= \frac{1}{\vert \alpha \vert^2} \sum_{jk} a_{jk} \alpha_j^\ast \alpha_k \mathbb{E}_{jk}[ X \otimes F].
\label{eq:expect-extended-def}
\end{equation}

Expectations of quantum stochastic integrals can be compactly expressed in the extended system, as the following lemma shows.

\begin{lemma}
\label{lemma:expectation-basic-extended-cat} 
Let $M(t)$ be adapted. Then
\begin{eqnarray}
\mathbb{E}_{\Sigma} [ \int_0^t A \otimes M(s) dB(s) ] &=&  
\mathbb{E}_{\Sigma}  [\int_0^t ( A C(s) )\otimes M(s)  ds ], 
\label{eq:extended-expect-1}
\\
\mathbb{E}_{\Sigma}  [ \int_0^t A \otimes M (s) dB^\ast(s) ] &=& 
 \mathbb{E}_{\Sigma}  [ \int_0^t ( C^\dagger (s) A )\otimes M(s) ds], 
 \label{eq:extended-expect-2}
 \\
\mathbb{E}_{\Sigma}  [ \int_0^t A \otimes M(s) d\Lambda(s) ] &=& 
 \mathbb{E}_{\Sigma}  [ \int_0^t ( C^\dagger(s) A C(s)  )\otimes M(s) ds ],
 \label{eq:extended-expect-3}
\end{eqnarray}
where
\begin{equation}
C(t) = \mathrm{diag}[ f_1(t), \ldots, f_n(t) ].
\end{equation}
\end{lemma}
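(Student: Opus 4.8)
The plan is to reduce each of the three identities to the defining eigenstate property of coherent vectors and then repackage the resulting scalar factors $f_j(t)$ and $f_k(t)$ as the action of the diagonal matrix $C(t)$ on the ancilla. The underlying computational content is already available: Lemma \ref{lemma:expectation-basic-cat} tells us how expectations of the stochastic integrals $\int_0^t M_-(s)\,dB(s)$, $\int_0^t M_+(s)\,dB^\ast(s)$, and $\int_0^t M_1(s)\,d\Lambda(s)$ evaluate against a single pair $\vert f_j\rangle,\vert f_k\rangle$, producing the factors $f_k(s)$, $f_j^\ast(s)$, and $f_j^\ast(s)f_k(s)$ respectively. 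So the main task is purely bookkeeping: show that after weighting by $w_{jk}=\alpha_j^\ast\alpha_k$ (up to the normalization $1/\vert\alpha\vert^2$) and summing over $j,k$, these scalar factors assemble into $\mathbb{E}_\Sigma$ of the matrix-dressed operators $AC(s)$, $C^\dagger(s)A$, and $C^\dagger(s)AC(s)$.

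First I would apply the definition (\ref{eq:expect-extended-def}) of the extended expectation to the left-hand side of (\ref{eq:extended-expect-1}), so that
\[
\mathbb{E}_{\Sigma}\Bigl[\int_0^t A\otimes M(s)\,dB(s)\Bigr]
=\frac{1}{\vert\alpha\vert^2}\sum_{jk}a_{jk}\alpha_j^\ast\alpha_k\,
\mathbb{E}_{jk}\Bigl[\int_0^t M(s)\,dB(s)\Bigr].
\]
By Lemma \ref{lemma:expectation-basic-cat} (taking $M_-=M$ and the other coefficients zero) the inner expectation equals $\int_0^t \mathbb{E}_{jk}[M(s)]\,f_k(s)\,ds$ — or more carefully its differential form — so the factor $f_k(s)$ appears. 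Since $C(s)$ is diagonal, $(AC(s))_{jk}=a_{jk}f_k(s)$, and hence $a_{jk}f_k(s)=(AC(s))_{jk}$ is exactly the matrix entry that reconstitutes $\mathbb{E}_\Sigma[(AC(s))\otimes M(s)]$ via (\ref{eq:expect-extended-def}) again. The creation case (\ref{eq:extended-expect-2}) is identical with $f_j^\ast(s)$ in place of $f_k(s)$; here $(C^\dagger(s)A)_{jk}=f_j^\ast(s)a_{jk}$ since $C^\dagger=\mathrm{diag}[f_1^\ast,\ldots,f_n^\ast]$, matching the factor $f_j^\ast(s)$ that Lemma \ref{lemma:expectation-basic-cat} attaches. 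The gauge case (\ref{eq:extended-expect-3}) combines both: the factor is $f_j^\ast(s)f_k(s)$ and $(C^\dagger(s)AC(s))_{jk}=f_j^\ast(s)a_{jk}f_k(s)$, again in exact agreement.

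The one point requiring mild care — and the closest thing to an obstacle — is handling the $r^{jk}(t)$ weights and the non-orthogonality of distinct coherent vectors. Unlike the single-photon case, the cross terms $\mathbb{E}_{jk}$ for $j\neq k$ do not vanish, and one must confirm that Lemma \ref{lemma:expectation-basic-cat} has already absorbed these factors correctly so that no extra $r^{jk}$ contributions survive the differentiation. The cleanest route is to work at the level of differentials: both sides are stochastic integrals in $t$, so it suffices to verify equality of the integrands by matching $j,k$ entrywise before summation, which sidesteps any global normalization subtlety since the common prefactor $1/\vert\alpha\vert^2$ and the weights $\alpha_j^\ast\alpha_k$ appear identically on both sides. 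With that reduction the identities are immediate consequences of the diagonal structure of $C(t)$, and the proof amounts to the three entrywise verifications $a_{jk}f_k=(AC)_{jk}$, $f_j^\ast a_{jk}=(C^\dagger A)_{jk}$, and $f_j^\ast a_{jk}f_k=(C^\dagger AC)_{jk}$.
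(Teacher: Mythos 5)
Your proof is correct and follows exactly the route the paper intends (the paper states Lemma \ref{lemma:expectation-basic-extended-cat} without proof, but it is meant to follow from the definition (\ref{eq:expect-extended-def}) of $\mathbb{E}_\Sigma$ together with Lemma \ref{lemma:expectation-basic-cat}): expand the extended expectation into the weighted sum of $\mathbb{E}_{jk}$, pull out the factors $f_k(s)$, $f_j^\ast(s)$, $f_j^\ast(s)f_k(s)$, and recognize them as the entries of $AC(s)$, $C^\dagger(s)A$, $C^\dagger(s)AC(s)$. Your remark that the $r^{jk}$ weights are already absorbed into $\mathbb{E}_{jk}$ in Lemma \ref{lemma:expectation-basic-cat}, so that only the entrywise matrix identities remain to check, is exactly the right observation.
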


Notice that the expectations  of the stochastic integrals  are expressed in terms of the action of the matrix $C(t)$ on the ancilla factor $A$.

\subsection{Master Equation}
\label{sec:cat-master}

In this section we show how the
the unconditional expectation
\begin{equation}
\mu_t(X) = \mathbb{E}_{\eta\Psi}[ X(t) ]
\label{eq:mu-def}
\end{equation}
may be computed from a collection of differential equations. We  do this through a differential equation for the unconditional expectation
\begin{equation}
\tilde\mu_t(A\otimes X) = \mathbb{E}_{\Sigma} [ A \otimes X(t) ]
\label{eq:mu-tilde-def}
\end{equation}
for the extended system. 

Let $R$ be an $n \times n$ matrix defined by $R_{jk}=1$ for all $j,k = 1, \ldots, n$, and
define
\begin{eqnarray}
\mathcal{G}_t(A \otimes X) &=& A \otimes \mathcal{L}(X) + ( AC(t))
\otimes [L^\ast, X] S   + (C^\dagger(t) A ) \otimes S^\ast [X,L] 
 \notag \\
&& + ( C^\dagger(t) A C(t) ) \otimes (S^\ast X S - X)) .
\end{eqnarray}

\begin{lemma}  \label{lemma:master-1-cat}
The unconditional expectation (\ref{eq:mu-def}) with respect to the superposition state $\vert \Psi \rangle$ (defined by (\ref{eq:super-state})) is given by
\begin{equation}
\mu_t(X)= \frac{ \tilde\mu_t(R \otimes X) }{ \tilde\mu_t( R \otimes I) },
\label{eq:mu-relation}
\end{equation}
and the master equation for the expectation (\ref{eq:mu-tilde-def}) in the 
extended system is
\begin{equation}
\frac{d}{dt} \tilde \mu_t(A \otimes X) = \tilde \mu_t( \mathcal{G}_t(A \otimes X)) ,
\label{eq:mu-master}
\end{equation}
with initial condition $\tilde \mu_0(A\otimes X)=  \frac{1}{\vert \alpha \vert^2}    \langle \eta \vert  X \vert \eta \rangle  \sum_{jk} a_{jk} \alpha_j^\ast \alpha_k$.
\end{lemma}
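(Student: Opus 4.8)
The plan is to prove the two assertions of Lemma~\ref{lemma:master-1-cat} in turn: first the algebraic relation (\ref{eq:mu-relation}) expressing the physical expectation $\mu_t(X)$ as a ratio of extended-system expectations, and then the master equation (\ref{eq:mu-master}). For the first part, I would start from the definitions (\ref{eq:expect-def}) and (\ref{eq:expect-extended-def}). Taking $A=R$ (the all-ones matrix) in (\ref{eq:expect-extended-def}) gives $\tilde\mu_t(R\otimes X) = \frac{1}{|\alpha|^2}\sum_{jk}\alpha_j^\ast\alpha_k\,\mathbb{E}_{jk}[X(t)]$, since $R_{jk}=1$. The denominator $\tilde\mu_t(R\otimes I)$ is the same sum with $X$ replaced by $I$, which is exactly the normalization $\langle\Psi\vert\Psi\rangle$-type factor $\frac{1}{|\alpha|^2}\sum_{jk}\alpha_j^\ast\alpha_k\,\mathbb{E}_{jk}[I]$ appearing in (\ref{eq:expect-def}). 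Dividing, the $1/|\alpha|^2$ factors cancel and one recovers $\mu_t(X)=\mathbb{E}_{\eta\Psi}[X(t)]$ as defined in (\ref{eq:mu-def}). The key point to verify is simply that the numerator matches the definition (\ref{eq:expect-def}) of $\mathbb{E}_{\eta\Psi}$ term by term, which is immediate from $R_{jk}=1$.

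For the master equation (\ref{eq:mu-master}), I would proceed exactly as in the single-photon case (Theorem~\ref{thm:master-matrix}). Starting from the Heisenberg dynamics (\ref{eq:X-dyn}) for $X(t)=j_t(X)$, I would integrate and apply $\mathbb{E}_\Sigma[A\otimes\,\cdot\,]$ to obtain an integral equation for $\tilde\mu_t(A\otimes X)$. The four terms of $dj_t(X)$ contribute a $dt$ term (the Lindblad part $\mathcal{L}(X)$), a $dB$ term, a $dB^\ast$ term, and a $d\Lambda$ term. The crucial step is to apply Lemma~\ref{lemma:expectation-basic-extended-cat}, which converts each stochastic integral into an ordinary time integral while modifying the ancilla factor $A$: the $dB(t)$ term (carrying $j_t(S^\ast[X,L])$) produces $AC(t)$, the $dB^\ast(t)$ term (carrying $j_t([L^\ast,X]S)$) produces $C^\dagger(t)A$, and the $d\Lambda(t)$ term (carrying $j_t(S^\ast XS-X)$) produces $C^\dagger(t)AC(t)$. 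Matching each term against the definition of $\mathcal{G}_t(A\otimes X)$ given just above the lemma statement, one reads off precisely the four summands of $\mathcal{G}_t$, and differentiating the resulting integral equation in $t$ yields (\ref{eq:mu-master}). The initial condition follows by setting $t=0$ in (\ref{eq:expect-extended-def}) with $X(0)=X$.

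The main obstacle, such as it is, lies in correctly pairing the coefficient operators in (\ref{eq:X-dyn}) with the ancilla-modifying factors from Lemma~\ref{lemma:expectation-basic-extended-cat}. One must be careful that the $dB$ integrand in (\ref{eq:X-dyn}) is $j_t(S^\ast[X,L])$, so that the substitution rule (\ref{eq:extended-expect-1}) attaches $C(t)$ on the \emph{right} of $A$, giving $(AC(t))\otimes S^\ast[X,L]$; wait---I should double-check the convention, since the displayed $\mathcal{G}_t$ pairs $AC(t)$ with $[L^\ast,X]S$ and $C^\dagger(t)A$ with $S^\ast[X,L]$, which means the labeling of the $dB$ versus $dB^\ast$ coefficients must be tracked against (\ref{eq:X-dyn}) with care. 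This is purely a bookkeeping matter and involves no genuine analytic difficulty, since the interchange of expectation with the quantum stochastic integral and the vanishing of the vacuum expectations of the martingale parts are already encapsulated in Lemma~\ref{lemma:expectation-basic-extended-cat}. I would therefore present the derivation compactly: write out the expectation of (\ref{eq:X-dyn}), invoke Lemma~\ref{lemma:expectation-basic-extended-cat} on each of the three stochastic terms, collect the four resulting contributions, and identify them with $\tilde\mu_t(\mathcal{G}_t(A\otimes X))$.
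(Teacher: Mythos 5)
Your proposal is correct and follows essentially the same route as the paper: the ratio formula comes directly from evaluating $\mathbb{E}_{\Sigma}[R\otimes X(t)]$ via the definitions and noting that $\tilde\mu_t(R\otimes I)=1/\vert\alpha\vert^2$, while the master equation is obtained by applying Lemma \ref{lemma:expectation-basic-extended-cat} to the expectation of the Heisenberg QSDE (\ref{eq:X-dyn}). Your bookkeeping worry resolves exactly as you suspected: the $dB$ coefficient in (\ref{eq:X-dyn}) is $j_t([L^\ast,X]S)$, not $j_t(S^\ast[X,L])$, so it acquires the factor $AC(t)$ and the pairing matches the displayed $\mathcal{G}_t$.
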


\begin{proof} By definitions (\ref{eq:expect-extended-def})  and (\ref{eq:expect-def}) we have
\begin{eqnarray}
\mathbb{E}_{\Sigma} [  R\otimes X(t) ] &=&    \frac{1}{\vert \alpha \vert^2} \sum_{jk}  \alpha_j^\ast \alpha_k \mathbb{E}_{jk}[ X(t)  ]
\\
&=&
 \frac{1}{\vert \alpha \vert^2}  \mathbb{E}_{\eta\Psi} [   X(t) ] ,
\end{eqnarray}
and in particular
\begin{equation}
\mathbb{E}_{\Sigma} [  R \otimes I ] =  \frac{1}{\vert \alpha \vert^2}   .
\end{equation}
From these expressions, we see that
\begin{equation}
\mathbb{E}_{\eta\Psi} [   X(t) ]  = \vert \alpha \vert^2 \mathbb{E}_{\Sigma} [  R\otimes X(t) ]  = \frac{\mathbb{E}_{\Sigma} [  R\otimes X(t) ]  }{\mathbb{E}_{\Sigma} [  R\otimes I] },
\end{equation}
which proves (\ref{eq:mu-relation}).

The differential equation (\ref{eq:mu-master}) follows from the QSDE (\ref{eq:X-dyn}) for $X(t)=j_t(X)$ and relations (\ref{eq:extended-expect-1})-(\ref{eq:extended-expect-3}) upon evaluating the differential  $d \mathbb{E}_{\Sigma}[ A \otimes X(t) ]$.
\qed
\end{proof}

\begin{theorem}
\label{thm:master-cat} 
The unconditional expectation $\mu_t(X)$
when the field is in the superposition state  $\vert \Psi \rangle$ (defined by (\ref{eq:super-state}))  is given by
\begin{equation}
\mu_t(X) = \frac{ \sum_{jk} \alpha_j^\ast \alpha_k   \mu^{jk}_t(X) }{\sum_{jk} \alpha_j^\ast \alpha_k   \mu^{jk}_t(I) },
\label{eq:mu-cat-rep}
\end{equation}
where $\mu^{jk}_t(X)$ is given by
the system of equations
\begin{equation}
\frac{d}{dt} \mu^{jk}_t(X) = \mu^{jk}_t( \mathcal{G}^{jk}_t(X) ),
\label{eq:dot-mu-jk-c}
\end{equation}
and where
\begin{equation}
 \mathcal{G}^{jk}_t(X)  = \mathcal{L}(X) + S^\ast [X, L ] f_j^\ast(t) + [ L^\ast, X] S f_k(t) + (S^\ast X S - X) f_j^\ast(t) f_k(t) .
\end{equation}
The initial conditions are
\begin{equation}
\mu^{jk}_0(X)= \langle \eta \vert X \vert \eta \rangle g_{jk}.
\end{equation}

\end{theorem}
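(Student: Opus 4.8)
The plan is to identify the quantities $\mu^{jk}_t(X)$ appearing in the statement with the cross-expectations $\mathbb{E}_{jk}[X(t)]$, in exact analogy with the single-photon development, and then to extract both the representation formula (\ref{eq:mu-cat-rep}) and the system of ordinary differential equations (\ref{eq:dot-mu-jk-c}) by specializing the extended-system master equation of Lemma \ref{lemma:master-1-cat} to the rank-one ancilla operators $A = \vert e_j \rangle \langle e_k \vert$. Everything then reduces to the diagonal structure of the matrix $C(t)$.

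For the representation formula I would start from (\ref{eq:mu-relation}) and expand the numerator and denominator using the definition (\ref{eq:expect-extended-def}) with $A = R$ the all-ones matrix. Since every entry of $R$ equals $1$, this gives $\tilde\mu_t(R\otimes X) = \vert\alpha\vert^{-2}\sum_{jk}\alpha_j^\ast\alpha_k\,\mu^{jk}_t(X)$ and similarly for $\tilde\mu_t(R\otimes I)$; the common prefactor $\vert\alpha\vert^{-2}$ cancels between the two, yielding (\ref{eq:mu-cat-rep}) at once.

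For the differential equation I would substitute $A = \vert e_j \rangle \langle e_k \vert$ into the extended generator $\mathcal{G}_t$ of Lemma \ref{lemma:master-1-cat}. Using $C(t) = \mathrm{diag}[f_1(t),\ldots,f_n(t)]$, right multiplication scales the single occupied column and left multiplication by $C^\dagger(t)$ scales the single occupied row, so that
\[
\vert e_j \rangle \langle e_k \vert\, C(t) = f_k(t)\,\vert e_j \rangle \langle e_k \vert, \qquad C^\dagger(t)\,\vert e_j \rangle \langle e_k \vert = f_j^\ast(t)\,\vert e_j \rangle \langle e_k \vert,
\]
and hence $C^\dagger(t)\,\vert e_j \rangle \langle e_k \vert\,C(t) = f_j^\ast(t)f_k(t)\,\vert e_j \rangle \langle e_k \vert$. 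Plugging these in factors the ancilla out completely, giving $\mathcal{G}_t(\vert e_j \rangle \langle e_k \vert \otimes X) = \vert e_j \rangle \langle e_k \vert \otimes \mathcal{G}^{jk}_t(X)$ with $\mathcal{G}^{jk}_t$ exactly as stated. Since (\ref{eq:expect-extended-def}) gives $\tilde\mu_t(\vert e_j \rangle \langle e_k \vert \otimes X) = (\alpha_j^\ast\alpha_k/\vert\alpha\vert^2)\,\mu^{jk}_t(X)$, the constant scalar $\alpha_j^\ast\alpha_k/\vert\alpha\vert^2$ cancels from both sides of (\ref{eq:mu-master}), leaving (\ref{eq:dot-mu-jk-c}). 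The initial conditions follow directly from $\mu^{jk}_0(X) = \mathbb{E}_{jk}[j_0(X)] = \langle\eta\vert X\vert\eta\rangle\,g_{jk}$. (Equivalently, and without the embedding, one can apply $\mathbb{E}_{jk}[\,\cdot\,]$ to the Heisenberg QSDE (\ref{eq:X-dyn}) and invoke Lemma \ref{lemma:expectation-basic-cat} with $M_0 = j_t(\mathcal{L}(X))$, $M_- = j_t([L^\ast,X]S)$, $M_+ = j_t(S^\ast[X,L])$, and $M_1 = j_t(S^\ast X S - X)$.)

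The computation is essentially mechanical, so I do not expect a genuine obstacle; the one place requiring care is the bookkeeping of which coherent amplitude multiplies each It\^o term. The annihilation increment $dB$ pairs with the ket index $k$ and contributes $f_k(t)$, via $dB(t)\vert f_k\rangle = f_k(t)\vert f_k\rangle\,dt$, while the creation increment $dB^\ast$ pairs with the bra index $j$ and contributes $f_j^\ast(t)$; the gauge increment $d\Lambda$ contributes the product $f_j^\ast(t)f_k(t)$. Keeping these index assignments and the complex conjugations straight, which is enforced automatically by the $C(t)$ versus $C^\dagger(t)$ placement in the embedding route, is the only real pitfall.
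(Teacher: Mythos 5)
Your proposal is correct and follows essentially the same route as the paper: the paper likewise defines $\mu^{jk}_t(X)=\mathbb{E}_{jk}[X(t)]$, relates it to $\tilde\mu_t(\vert e_j\rangle\langle e_k\vert\otimes X)$ via the scaling $\vert\alpha\vert^2/(\alpha_j^\ast\alpha_k)$, obtains (\ref{eq:mu-cat-rep}) from (\ref{eq:mu-relation}), and derives (\ref{eq:dot-mu-jk-c}) by setting $A=\vert e_j\rangle\langle e_k\vert$ in (\ref{eq:mu-master}). You merely make explicit the diagonal action of $C(t)$ and $C^\dagger(t)$ on the rank-one ancilla operators, which the paper leaves to the reader.
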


\begin{proof}
Define
\begin{equation}
\mu^{jk}_t(X) = \mathbb{E}_{jk}[ X(t) ].
\end{equation}
Then as in the proof of Lemma \ref{lemma:master-1-cat} we may show that
\begin{equation}
\mu^{jk}_t(X) = \frac{\vert \alpha \vert^2}{\alpha_j^\ast \alpha_k}  \tilde\mu_t( \vert e_j \rangle \langle e_k \vert   \otimes X) .
\end{equation}
The the relation (\ref{eq:mu-cat-rep}) follows from (\ref{eq:mu-relation}). The differential equation (\ref{eq:dot-mu-jk-c}) follows from equation (\ref{eq:mu-master}) with $A=\vert e_j \rangle \langle e_k \vert$.
\qed
\end{proof}

\subsection{Superposition State Filter}
\label{sec:cat-super-filter}

In this section we show how the conditional expectation $\hat X(t)=\pi_t(X)$ defined by (\ref{eq:cond-exp})
can be evaluated using a system of conditional equations. This will make use of the conditional expectation
\begin{equation}
\tilde \pi_t( A \otimes X) = \mathbb{E}_{\Sigma} [ A \otimes X(t) \, \vert \, I \otimes \mathscr{Y}_t].
\end{equation}
for the extended system.

\begin{lemma}     \label{lemma:filter-extended-cat}
The  conditional expectation  $\hat X(t)=\pi_t(X)$ defined by (\ref{eq:cond-exp}) with respect to the superposition state $\vert \Psi\rangle $ is given by
\begin{equation}
\pi_t(X)= \frac{ \tilde\pi_t(R \otimes X) }{ \tilde\pi_t( R \otimes I) }.
\label{eq:pi-relation}
\end{equation}
The quantum filter for the conditional expectation $\tilde\pi_t(A\otimes X)$ is
\begin{eqnarray}
d \tilde \pi_t( A \otimes X) & = & \tilde \pi_t( \mathcal{G}_t(A \otimes X)) dt + \mathcal{H}_t( A \otimes X) dW(t)
\label{eq:super-filter}
\end{eqnarray}
with initial condition $\tilde\pi_0(A\otimes X)=  \frac{1}{\vert \alpha \vert^2}    \langle \eta, X \eta \rangle  \sum_{jk} a_{jk} \alpha_j^\ast \alpha_k$,
where
\begin{eqnarray}
\mathcal{H}_t( A \otimes X)  &=& \tilde \pi_t(  A \otimes X (I \otimes L + C(t) \otimes S) 
+ (I \otimes L^\ast + C^\dagger(t) \otimes S^\ast) A \otimes X
)
\nonumber \\
&&
- \tilde \pi_t( A \otimes X) \tilde  \pi_t( I \otimes L + C(t) \otimes S
+
 I \otimes L^\ast + C^\dagger(t) \otimes S^\ast
)
\end{eqnarray}
and $W(t)$ is a $\mathscr{Y}_t$-Wiener process given by
\begin{eqnarray}
dW(t) &=&
dY(t) - \tilde \pi_t(   I \otimes L + C(t) \otimes S
+
 I \otimes L^\ast + C^\dagger(t) \otimes S^\ast ) dt, \ W(0)=0.
\end{eqnarray}
The initial condition is
$\tilde \pi_0(A\otimes X)=  \frac{1}{\vert \alpha \vert^2}    \langle \eta \vert  X \vert \eta \rangle  \sum_{jk} a_{jk} \alpha_j^\ast \alpha_k$.
\end{lemma}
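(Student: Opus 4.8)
The plan is to build the entire statement on the template already established for the single photon case, with the diagonal matrices $C(t),C^\dagger(t)$ playing the role of $\sigma_\pm$ and $\nu\xi(t)$. I would first prove the Bayes relation (\ref{eq:pi-relation}). The starting point is the identity $\mathbb{E}_{\eta\Psi}[Z]=|\alpha|^2\,\mathbb{E}_\Sigma[R\otimes Z]$, valid for \emph{any} operator $Z$ on $\mathfrak{H}$: taking $A=R$ in (\ref{eq:expect-extended-def}) gives $\mathbb{E}_\Sigma[R\otimes Z]=\frac{1}{|\alpha|^2}\sum_{jk}\alpha_j^\ast\alpha_k\,\mathbb{E}_{jk}[Z]$, and the double sum equals $\langle\eta\Psi\vert Z\vert\eta\Psi\rangle$ by (\ref{eq:expect-def}). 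I would then check that the ratio $\tilde\pi_t(R\otimes X)/\tilde\pi_t(R\otimes I)$, which is affiliated to $\mathscr{Y}_t$, satisfies the characterization (\ref{eq:c-exp-def}) of $\mathbb{E}_{\eta\Psi}[X(t)\mid\mathscr{Y}_t]$. For $K\in\mathscr{Y}_t$, writing the left side as $|\alpha|^2\,\mathbb{E}_\Sigma[(R\otimes I)(I\otimes \pi_t(X)K)]$ and using the module and tower properties together with the defining relation $\mathbb{E}_\Sigma[(R\otimes I)G]=\mathbb{E}_\Sigma[\tilde\pi_t(R\otimes I)\,G]$ for $G\in I\otimes\mathscr{Y}_t$, the factor $\tilde\pi_t(R\otimes I)$ cancels its reciprocal and reduces the expression to $|\alpha|^2\,\mathbb{E}_\Sigma[(R\otimes X(t))(I\otimes K)]=\mathbb{E}_{\eta\Psi}[X(t)K]$, exactly the cancellation used in Lemma~\ref{lemma:matrix-bayes}.

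Next I would derive the filter (\ref{eq:super-filter}) by the characteristic function method, mirroring Theorem~\ref{thm:matrix-filter}. Postulating $d\tilde\pi_t(A\otimes X)=\mathcal{F}_t(A\otimes X)\,dt+\mathcal{H}_t(A\otimes X)\,I\otimes dY(t)$ and introducing the test process $dc_f(t)=f(t)c_f(t)\,dY(t)$, $c_f(0)=1$, I differentiate the defining identity $\mathbb{E}_\Sigma[A\otimes(X(t)c_f(t))]=\mathbb{E}_\Sigma[\tilde\pi_t(A\otimes X)(I\otimes c_f(t))]$. On the left I substitute (\ref{eq:X-dyn}) for $dj_t(X)$ and the output increment $d(I\otimes Y)=I\otimes(L+L^\ast)dt+I\otimes(S\,dB+S^\ast\,dB^\ast)$, then use the quantum Ito rule (\ref{eq:Ito-sp}) and Lemma~\ref{lemma:expectation-basic-extended-cat} to replace every $dB$, $dB^\ast$, $d\Lambda$ contribution by the corresponding $C(t)$, $C^\dagger(t)$ action. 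The $c_f$-coefficient reproduces $\tilde\pi_t(\mathcal{G}_t(A\otimes X))$, which is precisely the master-equation computation of Lemma~\ref{lemma:master-1-cat}, while the Ito cross term $dj_t(X)\,dc_f$ supplies the $fc_f$-coefficient $\tilde\pi_t\big((A\otimes X)\tilde L_t+\tilde L_t^\ast(A\otimes X)\big)$ with $\tilde L_t=I\otimes L+C(t)\otimes S$. Equating coefficients of $c_f$ and $fc_f$ and solving for $\mathcal{F}_t$ and $\mathcal{H}_t$ yields (\ref{eq:super-filter}) in the stated form; the initial condition follows from $\tilde\pi_0(A\otimes X)=\mathbb{E}_\Sigma[A\otimes X]$ via (\ref{eq:expect-extended-def}).

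Finally I would show that $W(t)$ is an $\mathscr{Y}_t$ Wiener process. For the martingale property I first record the coherent analog of Lemma~\ref{lemma:Z-compensated-mtg}: applying Lemma~\ref{lemma:expectation-basic-cat} to the increment of $V(t)=\int_0^t(S(r)dB(r)+S^\ast(r)dB^\ast(r))$ against $K\in\mathscr{Y}_s$ gives $\mathbb{E}_{jk}[(V(t)-V(s))K]=\mathbb{E}_{jk}[\int_s^t(S(r)f_k(r)+S^\ast(r)f_j^\ast(r))K\,dr]$. Then, for $K\in\mathscr{Y}_s$, I expand $\mathbb{E}_\Sigma[I\otimes(W(t)-W(s))(I\otimes K)]$: the $I\otimes(L+L^\ast)$ part of the compensator cancels the drift of $I\otimes(Y(t)-Y(s))$, leaving $\mathbb{E}_\Sigma[I\otimes(V(t)-V(s))(I\otimes K)]$ against $\int_s^t\mathbb{E}_\Sigma[(C(r)\otimes S(r)+C^\dagger(r)\otimes S^\ast(r))(I\otimes K)]\,dr$. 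The decisive point is that the ancilla factor $I$ together with the diagonal structure $C(t)=\mathrm{diag}[f_1,\dots,f_n]$ forces both expressions to collapse onto the diagonal terms $\frac{1}{|\alpha|^2}\sum_j|\alpha_j|^2\mathbb{E}_{jj}[\cdots]$, which then match term by term using the $k=j$ case of the coherent lemma. Hence $W$ is an $\mathscr{Y}_t$ martingale, and since $dW\,dW=dt$, Lévy's theorem identifies it as a Wiener process.

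I expect the main obstacle to be the Ito bookkeeping in the characteristic function step, namely routing each of the $d\Lambda$, $dB$, $dB^\ast$ terms in $dj_t(X)$, and their products with $dY$, correctly through Lemma~\ref{lemma:expectation-basic-extended-cat} into the matrices $C(t),C^\dagger(t)$, together with verifying the diagonal collapse in the martingale step. The latter is what makes an innovations compensator built from the diagonal $C(t)$ consistent with the genuinely off-diagonal expectations $\mathbb{E}_{jk}$ appearing in the superposition state.
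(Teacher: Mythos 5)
Your proposal is correct and follows essentially the same route as the paper's proof: the same chain of expectation identities for the Bayes relation (\ref{eq:pi-relation}), the same characteristic-function derivation of (\ref{eq:super-filter}) using Lemma \ref{lemma:expectation-basic-extended-cat} to convert the $dB$, $dB^\ast$, $d\Lambda$ contributions into $C(t)$, $C^\dagger(t)$, and the same martingale-plus-L\'evy argument for $W(t)$. Your martingale step is just a more explicit version of the paper's terse application of Lemma \ref{lemma:expectation-basic-extended-cat} with $A=I$, spelling out the collapse onto the diagonal expectations $\mathbb{E}_{jj}$.
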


\begin{proof}
Let $K \in \mathscr{Y}_t$. Then we have
\begin{eqnarray}
\mathbb{E}_{\Sigma} [ \tilde\pi_t( R\otimes X) (I\otimes K) ] &=&  \mathbb{E}_{\Sigma} [ (R \otimes X(t)) (I \otimes K) ]
\notag
\\
&=&  \frac{1}{\vert \alpha \vert^2} \sum_{jk}  \alpha_j^\ast \alpha_k \mathbb{E}_{jk}[ X(t) K]
\notag
\\
&=&
 \frac{1}{\vert \alpha \vert^2}  \mathbb{E}_{\eta\Psi}[ X(t)K ]
 \notag
 \\
 &=&
 \frac{1}{\vert \alpha \vert^2}  \mathbb{E}_{\eta\Psi }[ \pi_t(X) K ]
 \notag
 \\
 &=&
 \mathbb{E}_{\Sigma}[ R \otimes \pi_t(X) K ]
  \notag
 \\
 &=&
 \mathbb{E}_{\Sigma}[  \mathbb{E}_{\Sigma} [R \otimes \pi_t(X) K \vert I \otimes \mathscr{Y}_t ]]
 \notag \\
 &=&
 \mathbb{E}_{\Sigma}[ \tilde\pi_t(R\otimes I) (I \otimes \pi_t(X)) (I \otimes K) ] .
\end{eqnarray}
This proves (\ref{eq:pi-relation}).

The filtering equation (\ref{eq:super-filter}) is derived 
using
the characteristic function method \cite{HSM05}, \cite{VPB93}, \cite{VPB92}.
We postulate that the filter has
the form
\begin{equation}
d \tilde\pi_t( A \otimes X) = \mathcal{F}_t(A \otimes X) dt + \mathcal{H}_t(A
\otimes X) I\otimes dY(t) ,
\end{equation}
where $\mathcal{F}_t$ and $\mathcal{H}_t$ are to be determined.

Let $f$ be square integrable, and define a process $c_f(t)$ by
\begin{equation}
dc_f(t) = f(t) c_f(t) dY(t), \ \ c_f(0)=1.
\end{equation}
Then $I \otimes c_f(t)$ is adapted to $I \otimes \mathscr{Y}_t$, and the
definition of quantum conditional expectation \cite[sec. 3.3]{BHJ07}
 implies that
\begin{equation}
\mathbb{E}_\Sigma [ A \otimes (X(t) c_f(t) ) ] = \mathbb{E}_\Sigma [ \tilde \pi_t( A
\otimes X(t)) (I \otimes c_f(t) ) ]
\end{equation}
holds for all $f$. By calculating the differentials of both sides, taking
expectations and conditioning we obtain
\begin{eqnarray}
\mathbb{E}_\Sigma [ A \otimes (dX(t) c_f(t)) ] &=& \mathbb{E}_\Sigma [ (I
\otimes c_f(t) ) \tilde\pi_t( \mathcal{G}(A \otimes X) ) 
\\
&& + (I \otimes f(t) c_f(t) )   \tilde \pi_t( (A \otimes X)(I \otimes L + C(t) \otimes S) 
\notag
\\ && + 
(I \otimes L + C^\dagger(t) \otimes S^\ast) (A \otimes X)
 )]dt    \ \ \ 
\notag
\end{eqnarray}
and
\begin{eqnarray}
&& \mathbb{E}_\Sigma [ A \otimes (d\tilde \pi_t(A\otimes X) c_f(t)) ] 
\\
 &=& \mathbb{E}%
_\Sigma [ (I \otimes c_f(t) \{ \mathcal{F}_t(A\otimes X) + \mathcal{H}_t(A\otimes X)  
\tilde \pi_t(
I\otimes L+ C(t)\otimes S  + I\otimes L^\ast + C^\dagger(t) \otimes S^\ast
) \}
 \notag \\
&& + (I \otimes f(t) c_f(t) ) \{ \ 
\tilde \pi_t(A\otimes X) \tilde \pi_t( I\otimes L+ C(t)\otimes S  + I\otimes L^\ast + C^\dagger(t) \otimes S^\ast)+ \mathcal{H}_t(A\otimes X)  
 \} ] dt.  \notag
\end{eqnarray}
Now equating coefficients of $c_f(t)$ and $f(t) c_f(t)$ we solve for $%
\mathcal{F}_t(A\otimes X)$ and $\mathcal{H}_t(A\otimes X)$ to obtain the filtering  equation (\ref{eq:super-filter}).

We now show that $W(t)$ is a $\mathscr{Y}_t$-martingale, and since $dW(t)dW(t)=dt$, then by Levy's theorem \cite{RE82} we have that $W(t)$ is a $\mathscr{Y}_t$-Wiener process. Indeed,  for any $K \in \mathscr{Y}_t$ we have
\begin{eqnarray}
&& \mathbb{E}_{\Sigma}[  (I\otimes dW(t) )(I\otimes K) ] 
\notag \\
&=& \mathbb{E}_{\Sigma}[  (I\otimes dY(t)    - \tilde\pi_t(I\otimes L+ C(t)\otimes S  + I\otimes L^\ast + C^\dagger(t) \otimes S^\ast) dt)(I\otimes K) ]
\notag
\\
&=&
\mathbb{E}_{\Sigma}[  I\otimes (L(t)+L^\ast(t))   + C(t) \otimes S + C^\ast(t) \otimes S^\ast 
\notag
\\
&&
- \tilde\pi_t( I\otimes L+ C(t)\otimes S  + I\otimes L^\ast + C^\dagger(t) \otimes S^\ast) dt)(I\otimes K) ] dt =0.
\end{eqnarray}
This completes the proof.
\qed
\end{proof}

\begin{theorem}
\label{thm:filter-cat} 
The unconditional expectation $\mu_t(X)$
when the field is in the superposition state  $\vert \Psi \rangle$ (defined by (\ref{eq:super-state}))  is given by
\begin{equation}
\pi_t(X) = \frac{ \sum_{jk} \alpha_j^\ast \alpha_k   \pi^{jk}_t(X) }{\sum_{jk} \alpha_j^\ast \alpha_k   \pi^{jk}_t(I) },
\label{eq:pi-cat-rep}
\end{equation}
where the conditional quantities $\pi^{jk}_t(X)$ are given by
\begin{eqnarray}
d \pi^{jk}_t(X) &=& \pi^{jk}_t( \mathcal{G}^{jk}(X) )dt
+ (\pi^{jk}_t( X(L +Sf_k(t) ) + (L^\ast + S^\ast f_j^\ast(t) ) X) 
\notag
\\
&&
- \pi^{jk}_t(X) \sum_j    \frac{\vert \alpha_j \vert^2}{\vert \alpha \vert^2} \pi^{jj}_t( L+ S f_j(t) + L^\ast + S^\ast f_j^\ast(t) ) ) dW(t)
\notag
\end{eqnarray}
The innovations process $W(t)$ is a $\mathscr{Y}_t$ Wiener process with
respect to the superposition  state $\vert \Psi \rangle$ and is given by
\begin{equation}
dW(t) = dY(t) - \sum_j    \frac{\vert \alpha_j \vert^2}{\vert \alpha \vert^2} \pi^{jj}_t( L+ S f_j(t) + L^\ast + S^\ast f_j^\ast(t) ) dt .
 \label{eq:innovation-11-cat}
\end{equation}
The initial conditions are
\begin{equation}
\pi^{jk}_0(X)= \langle \eta \vert X \vert \eta \rangle g_{jk}.
\end{equation}
\end{theorem}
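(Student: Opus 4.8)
The plan is to follow the template of the single-photon filter in Theorem~\ref{thm:main}, pushing the extended-system filter of Lemma~\ref{lemma:filter-extended-cat} down to the scalar components $\pi^{jk}_t(X)$. I would first \emph{define} these components exactly as in the master-equation case of Theorem~\ref{thm:master-cat}, namely
\[
\pi^{jk}_t(X) = \frac{|\alpha|^2}{\alpha_j^\ast \alpha_k}\, \tilde\pi_t(\vert e_j \rangle \langle e_k \vert \otimes X).
\]
The representation (\ref{eq:pi-cat-rep}) is then immediate from the Bayes-type relation (\ref{eq:pi-relation}): writing $R = \sum_{jk}\vert e_j \rangle \langle e_k \vert$ and using linearity of $\tilde\pi_t$ gives $\tilde\pi_t(R\otimes X) = \sum_{jk}\frac{\alpha_j^\ast\alpha_k}{|\alpha|^2}\,\pi^{jk}_t(X)$, so forming the quotient $\tilde\pi_t(R\otimes X)/\tilde\pi_t(R\otimes I)$ yields the stated ratio.

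Next I would obtain the dynamics by substituting $A=\vert e_j \rangle \langle e_k \vert$ into the extended filter (\ref{eq:super-filter}) and multiplying through by $|\alpha|^2/(\alpha_j^\ast\alpha_k)$. The essential algebraic fact is that $C(t)$ and $C^\dagger(t)$ are diagonal, so that $\vert e_j \rangle \langle e_k \vert\, C(t) = f_k(t)\vert e_j \rangle \langle e_k \vert$ and $C^\dagger(t)\vert e_j \rangle \langle e_k \vert = f_j^\ast(t)\vert e_j \rangle \langle e_k \vert$. This collapses the generator to $\mathcal{G}_t(\vert e_j \rangle \langle e_k \vert \otimes X) = \vert e_j \rangle \langle e_k \vert \otimes \mathcal{G}^{jk}_t(X)$, reproducing the drift $\pi^{jk}_t(\mathcal{G}^{jk}_t(X))$, and reduces the linear part of $\mathcal{H}_t$ to $\pi^{jk}_t\big(X(L+Sf_k(t))+(L^\ast+S^\ast f_j^\ast(t))X\big)$ after rescaling.

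The step that needs the most care is the \emph{nonlinear} subtracted term of $\mathcal{H}_t$ together with the innovations drift. Here I would expand $I=\sum_l\vert e_l \rangle \langle e_l \vert$ and $C(t)=\sum_l f_l(t)\vert e_l \rangle \langle e_l \vert$ and use the definition of $\pi^{jk}_t$ to show
\[
\tilde\pi_t\big(I\otimes L + C(t)\otimes S + I\otimes L^\ast + C^\dagger(t)\otimes S^\ast\big) = \sum_l\frac{|\alpha_l|^2}{|\alpha|^2}\,\pi^{ll}_t\big(L+Sf_l(t)+L^\ast+S^\ast f_l^\ast(t)\big),
\]
so that only the \emph{diagonal} components survive, precisely because the ancilla operators are diagonal. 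This is exactly the drift in the innovations (\ref{eq:innovation-11-cat}), and, being independent of the pair $(j,k)$, it shows that a single process $W(t)$ drives every component equation. Rescaling the subtracted term $-\tilde\pi_t(\vert e_j \rangle \langle e_k \vert\otimes X)\,\tilde\pi_t(\cdots)$ by $|\alpha|^2/(\alpha_j^\ast\alpha_k)$ then produces precisely $-\pi^{jk}_t(X)\sum_l\frac{|\alpha_l|^2}{|\alpha|^2}\pi^{ll}_t(\cdots)$, completing the $dW(t)$ coefficient.

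Finally, because this $W(t)$ coincides with the extended-system innovations already shown to be a $\mathscr{Y}_t$-Wiener process in Lemma~\ref{lemma:filter-extended-cat}, the Wiener property transfers with no further martingale computation. The initial conditions follow by evaluating $\tilde\pi_0(\vert e_j \rangle \langle e_k \vert\otimes X)=\frac{1}{|\alpha|^2}\alpha_j^\ast\alpha_k\langle\eta\vert X\vert\eta\rangle g_{jk}$ directly from the superposition state and rescaling, giving $\pi^{jk}_0(X)=\langle\eta\vert X\vert\eta\rangle g_{jk}=\mathbb{E}_{jk}[X]$. The one genuine bookkeeping hazard is keeping the conjugation conventions straight---row index $j$ carrying $f_j^\ast$ and column index $k$ carrying $f_k$---but the diagonality of $C(t)$ makes this essentially automatic.
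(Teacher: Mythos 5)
Your proposal is correct and follows exactly the route of the paper, whose proof consists of the single instruction to substitute $\pi^{jk}_t(X) = \frac{|\alpha|^2}{\alpha_j^\ast\alpha_k}\tilde\pi_t(\vert e_j\rangle\langle e_k\vert\otimes X)$ into the expressions of Lemma~\ref{lemma:filter-extended-cat}; you have simply carried out that substitution explicitly (diagonality of $C(t)$, resolution of $I$ and $C(t)$ into projectors for the nonlinear term, transfer of the Wiener property, and the initial conditions).
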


\begin{proof}
These assertions follow upon substitution of
\begin{equation}
\pi^{jk}_t(X) = \frac{\vert \alpha \vert^2}{\alpha_j^\ast \alpha_k}  \tilde\pi_t( e_je_k^\ast  \otimes X) .
\end{equation}
into the relevant expressions from Lemma \ref{lemma:filter-extended-cat}.
\qed
\end{proof}

\section{Discussion and Conclusion}
\label{sec:conclusion}

In this paper we have derived the master equation and quantum filter  for a class of open quantum systems
that are coupled to continuous-mode fields in non-classical states: (i) single photon states, and (ii) superpositions of coherent states.
The quantum  filter  in both of the cases we consider  consists of   coupled equations  that determine the evolution of the conditional state of the system under continuous (weak) measurement performed on the output field, in contrast to 
the familiar single filtering equation for open Markov quantum systems that are coupled to coherent boson fields. 
 This coupled equations structure of the master and filter equations is a reflection of the non-Markov nature of systems coupled to the non-classical fields. 
  Indeed, a key feature of our approach is the  embedding of the system into a larger extended system, a technique often employed in the analysis of non-Markov  systems, providing an elegant framework within which to study the the dynamics, both unconditional and conditional, of the system.  In contrast to Markovian embeddings  \cite{HPB04}, \cite{GJN11a}, \cite{GJNC11}, the extended system (including the field) is initialized in a superposition state. This embedding provides a framework within which the tools of the quantum stochastic calculus may be efficiently applied to determine quantum filtering equations. We expect that the use of suitable embeddings, both Markovian and non-Markovian,  could be  adapted to study quantum systems that are coupled to other types of highly non-classical  fields.

\section*{Acknowledgement}
The authors wish to thank J.~Hope for helpful discussions and for pointing out reference \cite{HPB04} to us. We also wish to thank A.~Doherty, H.~Wiseman, E.~Huntington and J.~Combes for help discussions and suggestions.

%\appendix

 %%%%%%%%%%%%%%%%%%%%%%%%%%%%%%%%%%%%%%%%%%%%%%%%%%%%%%%%%%%%%%%%%%%%%%%%%%%%%%%%

\bibliographystyle{siam}
%\bibliography{mjbib2004}

\end{document}